\newtheorem{definition}{Definition}[section]
\newtheorem{theorem}{Theorem}[section]
\newtheorem{conjecture}{Conjecture}[section]
\newtheorem{lemma}[theorem]{Lemma}
\newtheorem{claim}{Claim}[theorem]
\newtheorem{subclaim}{Claim}[claim]
\newtheorem{claimout}[theorem]{Claim}
\newtheorem{fact}[definition]{Fact}
\newcommand{\E}{\mathbb E}
\newcommand{\I}{\mathbb I}
\newcommand{\N}{\mathbb N}
\newcommand{\deff}{\coloneqq}
\DeclareMathOperator{\negl}{negl}
\DeclareMathOperator{\poly}{poly}
\newcommand{\unique}{\mathrm{unique}}
\newcommand{\Dist}{\mathrm{Dist}}
\newcommand{\Good}{\mathrm{Good}}
\newcommand{\ba}{\mathbf{a}}
\newcommand{\bx}{\mathbf{x}}
\newcommand{\by}{\mathbf{y}}
\newcommand{\bxp}{\mathbf{x'}}
\newcommand{\bxpp}{\mathbf{x''}}
\newcommand{\byp}{\mathbf{y'}}
\newcommand{\byl}{\mathbf{y}_{<}}
\newcommand{\byr}{\mathbf{y}_{>}}
\newcommand{\ket}[1]{|#1\rangle}
\newcommand{\bra}[1]{\langle#1|}
\newcommand{\ketbra}[2]{\ket{#1}\!\bra{#2}}
\newcommand{\braket}[2]{\bra{#1}\!\ket{#2}}
\newcommand{\norm}[1]{\|#1\|}
\newcommand{\abs}[1]{|#1|}
\newcommand{\Tr}{\mbox{\rm Tr}}
\newcommand{\surj}{\mathrm{Surj}}
\newcommand{\romi}[1]{\textcolor{red}{[Romi: #1]}}
\newcommand{\thomas}[1]{\textcolor{blue}{[Thomas: #1]}}
\renewcommand{\romi}[1]{}
\renewcommand{\thomas}[1]{}
\title{PRS Length Expansion}
\date{}
\author[1,*]{Romi Levy}
\author[2,1,**]{Thomas Vidick}
\affil[1]{\small Computer Science Department, Weizmann Institute of Science}
\affil[2]{\small School of Computer and Communication Sciences, Ecole Polytechnique Fédérale de Lausanne}
\affil[*]{Email: \texttt{romi.levy@weizmann.ac.il}}
\affil[**]{Email: \texttt{thomas.vidick@epfl.ch}}
\begin{document}
\maketitle

\begin{abstract}
One of the most fundamental results in classical cryptography is that the existence of Pseudo-Random Generators (PRG) that expands $k$ bits of randomness to $(k+1)$ bits that are pseudo-random implies the existence of PRG that expand $k$ bits of randomness to $k+f(k)$ bits for any $f(k)=poly(k)$.

It appears that cryptography in the quantum realm sometimes works differently than in the classical case. Pseudo-random quantum states (PRS) are a key primitive in quantum cryptography, that demonstrates this point. There are several open questions in quantum cryptography about PRS, one of them is --- can we expand quantum pseudo-randomness in a black-box way with the same key length? Although this is known to be possible in the classical case, the answer in the quantum realm is more complex. This work conjectures that some PRS generators can be expanded, and provides a proof for such expansion for some specific examples.  In addition, this work demonstrates the relationship between the key length required to expand the PRS, the efficiency of the circuit to create it and the length of the resulting expansion. 
\end{abstract}

\textbf{Keywords}: Quantum Cryptography, Pseudorandom States, Quantum Information.  


\newpage
\begin{center}

\tableofcontents

\end{center}

\newpage
\section{Introduction}

Pseudo-random quantum states (PRS) were first introduced by Ji et al.~\cite{ji2018pseudorandom} in 2018. They caught much attention thanks to Kretschmer's result~\cite{kretschmer2023quantum} suggesting they might be weaker than one way functions (OWF).
This was a complete surprise since  PRS are a natural quantum notion of pseudo-randomness. In contrast the classical notion of pseudo-randomness, Pseudo-Random Generators (PRG), are known to be an equivalent assumption to OWF.

This raises the question of which properties of pseudo-randomness are maintained when moving to the quantum realm.
There are several open questions in quantum cryptography about PRS, one of them is --- can we expand quantum pseudo-randomness in a black-box way with the same key length? Although this is known to be possible in the classical case, the answer in the quantum realm is more complex. 

In~\cite{muguruza2024quantum} Muguruza et al.\ demonstrate that PRS cannot be shrunk in a black-box way (unlike PRG) in some regime of parameters. Here shrinking means the reduction of the number of output qubits. More specifically, they show that relative to some oracle, neither pseudo-deterministic OWF (PD-OWF) nor PD-PRG exist, while ``long'' PRS persist.
By ``long'' PRS we mean a PRS that has $\poly(\lambda)$ or more qubits, where $\lambda$ is the security parameter.
These notions (PD-OWF and PD-PRG), introduced by Ananth et al.~\cite{ananth2023pseudorandom}, are shown to be implied by ``short'' PRS (a PRS with $\Theta(\log\lambda)$ many qubits for sufficiently large implied constant). Thus, \cite{muguruza2024quantum} reveals a scenario where ``long'' PRS exist in the presence of an oracle, yet ``short'' PRS do not, implying the absence of a black-box method for shrinking PRS in this regime.
This result, combined with the finding by Brakerski and Shmueli~\cite{brakerski2020scalable} that ``very short'' PRS (e.g., at most $O(\log\lambda)$ for a sufficiently small implied constant) exist unconditionally, implies that expanding \textit{any} PRS in a fully flexible, black-box manner is not possible.
For example, it does not seem possible to go from a constant, or even sub-logarithmic, length PRS to one that is super-logarithmic. Yet one may still hope to expand once the number of qubits is sufficiently large, e.g. $n=\poly(\lambda)$ or even $n=\omega(\log\lambda)$. Note that if it is indeed possible, such an expansion would have to use a key as an additional source of randomness. This is because if not, then the expansion process can be efficiently reversed, leading to an efficient algorithm to distinguish the resulting state from a Haar random state.
Additionally, Brakerski and Shalit~\cite{BS24} observed that any unitary algorithm expanding a PRS must rely on some cryptographic assumption\thomas{Can we spell this out a tiny bit? I forgot what they do...Is it that one needs to make an assumption, beyond security of the original PRS? Would it be accurate to reformulate as ``observed that any unitary algorithm, or keyed family of algorithms, that expands any PRS must rely on an additional cryptographic assumption''? I'm confused how that relates to our proof of expansion, which uses no assumption other than the PRS (but is for a specific PRS--is that it?)}.

In this work we conjecture that some\footnote{Any PRS that satisfy the condition stated in \ref{condition}.} PRS can be expanded in a black-box manner, and provide a proof for such expansion for a specific, widely referred to, family of PRS introduced in \cite{ji2018pseudorandom, brakerski2019pseudo}.  This result is formally stated as Theorem~\ref{thm1} in Section~\ref{sec2}. 
Our proof uses the purification technique introduced by  Ma in a~\href{https://www.youtube.com/playlist?list=PLgKuh-lKre10dVLxTTK-wcEOD-FTe_SVq}{workshop on pseudorandom unitaries} given by him and Huang at the Simons Institute, and later published in~\cite{ma2024construct}.
Our proof is delicate and requires several approximations in order to get to the desired point; the main technical difficulty is posed by the re-use of the key, which precludes more standard hybrid arguments such as are used in e.g.~\cite{schuster2024random}. 
In addition, our work explores the relationship between the key length required to expand the PRS, the efficiency of the algorithm that expands the PRS, and the length of the resulting expansion.

Recent work has focused on understanding PRS and its role in quantum cryptography, including various construction methods \cite{ji2018pseudorandom, brakerski2019pseudo, brakerski2020scalable}, implications \cite{ananth2022cryptography}, and potential applications such as quantum money \cite{ji2018pseudorandom, ananth2022cryptography}. PRS may represent a weaker assumption than OWF, allowing some cryptographic applications based on PRS to remain valid even if OWF does not exist.
Therefore understanding what manipulations can be applied to the length of the output while retaining the PRS properties is of high importance, and can be useful for further applications.

\paragraph{Acknowledgments.} \thomas{moved this up and added funding info}
R.L.\ extends deep gratitude to Nir Magrafta, Itay Shalit, and Assaf Harel for their valuable help in the preparation of this work. T.V.\ is supported by a research grant from the Center for New Scientists
at the Weizmann Institute of Science and AFOSR Grant No. FA9550-22-1-0391.

\section{Preliminaries}

\subsection{Notation}

We use bold font $\ba,\bx$ to denote tuples $\ba=(a_1,\ldots,a_t)$, $\bx=(x_1,\ldots,x_t)$. We write $\ba\circ\bx$ for concatenation of tuples, i.e. $\ba\circ\bx=(a_1,\ldots,a_t,x_1,\ldots,x_t)$ as a $2t$-tuple. We let  $\mathfrak{S}_{t}$ be the set of permutations on $t$ elements (usually denoted $\{1,\ldots,t\}$). 

We use $\lambda$ to refer to the security parameter in cryptographic constructions. For $n$ an integer, we often use the notation $N=2^n$ and $\omega_N = e^{\frac{2\pi i}{N}}$.

For a Hilbert space $\mathcal{H}$, we let $S(\mathcal{H})$ denote the set of pure quantum states on $\mathcal{H}$, $D(\mathcal{H})$  the set of density operators
and $U(\mathcal{H})$ the set of unitary operators.

We use the abbreviation i.i.d.\ to mean independent and identically distributed.

\subsection{Quantum information}

\begin{definition}[Trace Distance]
    The trace distance of two quantum states $\rho_0, \rho_1 \in D(\mathcal{H})$ is
    \begin{equation*}
        TD\left( \rho_0, \rho_1\right) \deff \frac{1}{2}\norm{\rho_0 - \rho_1}_1\;.
    \end{equation*}
\end{definition}
\begin{fact}
\label{fact1}
    When $\rho_0 = \ketbra{\psi_0}{\psi_0}, \rho_1 = \ketbra{\psi_1}{\psi_1}$ are pure states we get 
    $TD\left( \rho_0, \rho_1\right) = \sqrt{1 - \abs{\braket{\psi_0}{\psi_1}}^2}$.
\end{fact}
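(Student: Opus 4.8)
The statement to prove is Fact~\ref{fact1}, a standard identity for the trace distance between two pure states. The plan is to exploit the low rank of the difference operator $\Delta \deff \rho_0 - \rho_1 = \ketbra{\psi_0}{\psi_0} - \ketbra{\psi_1}{\psi_1}$ and reduce the computation of its trace norm to a single scalar. First I would record two structural observations about $\Delta$. It is Hermitian, since it is a difference of projectors, and it is traceless, because $\Tr\,\rho_0 = \Tr\,\rho_1 = 1$. Moreover, its range is contained in the (at most two-dimensional) subspace $\mathrm{span}\{\ket{\psi_0},\ket{\psi_1}\}$, so $\Delta$ has at most two nonzero eigenvalues.

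Combining these facts pins down the spectrum up to a single parameter. A traceless Hermitian operator of rank at most $2$ must have eigenvalues of the form $\{+\mu,-\mu,0,0,\ldots\}$ for some $\mu\ge 0$, since the two nonzero eigenvalues must sum to zero. For a Hermitian operator the trace norm is the sum of the absolute values of the eigenvalues, hence $\norm{\Delta}_1 = 2\mu$, and by the definition of trace distance we get $TD(\rho_0,\rho_1) = \tfrac12\norm{\Delta}_1 = \mu$. It therefore remains only to determine $\mu$.

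To find $\mu$ I would avoid an explicit eigenvector calculation and instead compute $\Tr(\Delta^2)$, which equals $\mu^2 + (-\mu)^2 = 2\mu^2$ given the spectrum above. Expanding,
\begin{equation*}
\Tr(\Delta^2) = \Tr(\rho_0^2) - 2\Tr(\rho_0\rho_1) + \Tr(\rho_1^2).
\end{equation*}
Since $\rho_0,\rho_1$ are pure projectors, $\Tr(\rho_0^2)=\Tr(\rho_1^2)=1$, while $\Tr(\rho_0\rho_1) = \Tr\big(\ketbra{\psi_0}{\psi_0}\,\ketbra{\psi_1}{\psi_1}\big) = \abs{\braket{\psi_0}{\psi_1}}^2$. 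Thus $2\mu^2 = 2\big(1 - \abs{\braket{\psi_0}{\psi_1}}^2\big)$, giving $\mu = \sqrt{1 - \abs{\braket{\psi_0}{\psi_1}}^2}$ and hence the claimed value of $TD(\rho_0,\rho_1)$.

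The only genuinely load-bearing step is the spectral reduction in the second paragraph: the claim that the eigenvalues come as a single $\pm\mu$ pair rests squarely on the joint use of tracelessness and rank at most two, and it is what lets the $\Tr(\Delta^2)$ shortcut replace a full diagonalization. Everything else is routine, so I expect no real obstacle; an alternative route, should one prefer to be fully explicit, is to fix an orthonormal basis $\{\ket{\psi_0},\ket{\psi_0^\perp}\}$ of the two-dimensional subspace, write $\Delta$ as an explicit $2\times 2$ Hermitian matrix, and read off its eigenvalues from the characteristic polynomial $\lambda^2 + \det\Delta = 0$, arriving at the same $\mu$.
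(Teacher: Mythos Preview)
Your argument is correct and cleanly written; the rank-$\leq 2$ plus tracelessness observation is exactly the right structural fact, and the $\Tr(\Delta^2)$ shortcut is a tidy way to extract $\mu$ without diagonalizing. The paper itself does not prove Fact~\ref{fact1} at all---it simply records it as a standard identity---so there is nothing to compare against and your proof would serve as a self-contained justification.
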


\begin{fact}
\label{fact2}
    For every trace-preserving-completely-positive map $\Phi$,
    \begin{equation*}
        TD\left( \Phi(\rho_0), \Phi(\rho_1) \right) \le TD \left( \rho_0, \rho_1 \right)\;.
    \end{equation*}
\end{fact}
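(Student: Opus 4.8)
The plan is to prove this monotonicity (the data-processing inequality for trace distance) via the variational, or operational, characterization of trace distance, which reduces the claim to a one-line argument using the adjoint of $\Phi$.

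First I would establish that for any two density operators,
\[
TD(\rho_0,\rho_1) = \max_{0 \le M \le I} \Tr\!\big[M(\rho_0-\rho_1)\big],
\]
where the maximum ranges over all Hermitian $M$ with $0 \le M \le I$ (all ``effects'', i.e.\ two-outcome POVM elements). To see this, write the Hermitian operator $\Delta = \rho_0 - \rho_1$ in its spectral decomposition and split it into its positive and negative parts, $\Delta = \Delta_+ - \Delta_-$, where $\Delta_\pm \ge 0$ have mutually orthogonal supports. Then $\norm{\Delta}_1 = \Tr[\Delta_+] + \Tr[\Delta_-]$, and because $\Tr[\Delta] = \Tr[\rho_0] - \Tr[\rho_1] = 0$ we get $\Tr[\Delta_+] = \Tr[\Delta_-]$, so $TD(\rho_0,\rho_1) = \tfrac12\norm{\Delta}_1 = \Tr[\Delta_+]$. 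The projector onto the support of $\Delta_+$ then achieves $\Tr[M\Delta] = \Tr[\Delta_+]$, while for any effect $M$ one has $\Tr[M\Delta] = \Tr[M\Delta_+] - \Tr[M\Delta_-] \le \Tr[\Delta_+]$; this proves the displayed identity.

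Next I would record two elementary properties of the adjoint map $\Phi^\dagger$ (defined by $\Tr[M\,\Phi(X)] = \Tr[\Phi^\dagger(M)\,X]$ for all $X$): since $\Phi$ is completely positive, so is $\Phi^\dagger$ (take Hermitian adjoints of a Kraus representation), hence $M \ge 0 \Rightarrow \Phi^\dagger(M) \ge 0$; and since $\Phi$ is trace preserving, $\Phi^\dagger$ is unital, $\Phi^\dagger(I) = I$. Together these give that $\Phi^\dagger$ maps effects to effects: if $0 \le M \le I$ then $\Phi^\dagger(M) \ge 0$ and $I - \Phi^\dagger(M) = \Phi^\dagger(I - M) \ge 0$, i.e.\ $0 \le \Phi^\dagger(M) \le I$.

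Finally I would combine the two ingredients. Applying the variational formula to the output states and passing $M$ through the adjoint,
\[
TD\big(\Phi(\rho_0),\Phi(\rho_1)\big) = \max_{0 \le M \le I} \Tr\!\big[\Phi^\dagger(M)\,(\rho_0-\rho_1)\big] \le \max_{0 \le M' \le I} \Tr\!\big[M'(\rho_0-\rho_1)\big] = TD(\rho_0,\rho_1),
\]
where the inequality holds because each $\Phi^\dagger(M)$ is itself a valid effect, so the left-hand maximum is taken over a subset of the operators appearing on the right. I expect the only genuinely delicate step to be the variational characterization --- specifically the positive/negative-part decomposition and the use of the trace-zero condition $\Tr[\Delta]=0$; everything after that is formal. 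An alternative route, which avoids the variational lemma, is to invoke the Stinespring dilation $\Phi(\rho) = \Tr_E\big[U(\rho\otimes\ketbra{0}{0})U^\dagger\big]$ and combine invariance of $\norm{\cdot}_1$ under adjoining a fixed ancilla and under conjugation by a unitary with monotonicity of trace distance under partial trace; there the main work shifts to proving that last monotonicity, so the variational approach seems cleanest.
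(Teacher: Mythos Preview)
Your proof is correct and complete; the variational characterization of trace distance together with the observation that $\Phi^\dagger$ maps effects to effects is the standard textbook argument, and you have executed it cleanly. Note, however, that the paper does not actually prove this statement: it is recorded as a \emph{Fact} in the preliminaries and simply cited where needed, so there is no ``paper's own proof'' to compare against. Your write-up would serve perfectly well as a self-contained justification if one were desired.
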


\begin{definition}[Partial Trace of a Pure State]
\label{def1}
    Let $\ket{\psi}_{A,E}$ be some pure state on systems $A,E$\\
    For simplicity we use the following notation:
\begin{equation*}
    \Tr_E\left[\ket{\psi}_{A,E} \right] \deff 
    \Tr_E\left[\ketbra{\psi}{\psi}_{A,E}\right]
\end{equation*}
\end{definition}

\begin{claimout}
\label{clm1}
    Let $\rho$ be some density matrix and $\ket{\chi} = \sum_i \ket{\phi_i}\otimes \ket{\psi_i}_E$ be a purifying state, i.e.\ $\Tr_E[\ketbra{\chi}{\chi}] = \rho$. Let $A$ be any isometry.
    Then $\ket{\chi'} = \sum_i \ket{\phi_i}\otimes  A\ket{\psi_i}_E$ is also a purification state: $\Tr_E[\ketbra{\chi'}{\chi'}] = \rho$.
\end{claimout}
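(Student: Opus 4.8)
The plan is to reduce the claim to the elementary fact that an isometry applied to the system being traced out leaves the reduced state on the complementary system unchanged, and to carry this out by a direct term-by-term computation on the double sum defining each projector. First I would expand both rank-one density operators. Writing
\[
\ketbra{\chi}{\chi} = \sum_{i,j} \ketbra{\phi_i}{\phi_j}\otimes\ketbra{\psi_i}{\psi_j}_E,
\qquad
\ketbra{\chi'}{\chi'} = \sum_{i,j} \ketbra{\phi_i}{\phi_j}\otimes A\ketbra{\psi_i}{\psi_j}_E A^\dagger,
\]
linearity of the partial trace lets me handle each pair $(i,j)$ separately, so it suffices to compare the scalar coefficients $\Tr\!\left[\ketbra{\psi_i}{\psi_j}\right]$ and $\Tr\!\left[A\ketbra{\psi_i}{\psi_j}A^\dagger\right]$ that multiply the fixed operator $\ketbra{\phi_i}{\phi_j}$ on the surviving system.

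The second step is the single computation that drives the argument: by cyclicity of the trace together with the defining property $A^\dagger A = I$ of an isometry,
\[
\Tr\!\left[A\ketbra{\psi_i}{\psi_j}A^\dagger\right]
= \Tr\!\left[A^\dagger A\,\ketbra{\psi_i}{\psi_j}\right]
= \Tr\!\left[\ketbra{\psi_i}{\psi_j}\right]
= \braket{\psi_j}{\psi_i}.
\]
Substituting back, both reduced states collapse to the same expression $\sum_{i,j}\braket{\psi_j}{\psi_i}\,\ketbra{\phi_i}{\phi_j}$, and by the hypothesis $\Tr_E[\ketbra{\chi}{\chi}]=\rho$ this common value is exactly $\rho$. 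Hence $\Tr_E[\ketbra{\chi'}{\chi'}]=\rho$, as claimed.

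Honestly, this statement is essentially a one-line fact, so there is no serious obstacle; the only points requiring a little care are bookkeeping ones. First, because $A$ is merely an isometry (it may map $E$ into a strictly larger space $E'$) I must invoke $A^\dagger A = I$ rather than $AA^\dagger = I$, and the final partial trace is taken over the enlarged system $E'$. Second, since the $\ket{\phi_i}$ and $\ket{\psi_i}$ are not assumed orthonormal, I should resist any temptation to shortcut through a Schmidt-decomposition argument and instead retain the full double sum, as above, where the computation goes through verbatim regardless of orthogonality.
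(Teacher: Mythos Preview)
Your proof is correct and follows essentially the same approach as the paper: both expand the double sum, apply linearity of the partial trace term by term, and reduce the equality to $\bra{\psi_j}A^\dagger A\ket{\psi_i}=\braket{\psi_j}{\psi_i}$ via $A^\dagger A=I$. Your added remarks about the isometry possibly enlarging $E$ and about not assuming orthonormality of the $\ket{\phi_i},\ket{\psi_i}$ are accurate and make the write-up slightly more careful than the paper's.
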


\begin{proof}
Expanding using the definition of $\ket{\chi}$ and using linearity of the partial trace,
\begin{align*}
    \rho &= \Tr_E[\ketbra{\chi}{\chi}]\\
    & = \Tr_E\Big[\sum_i \ket{\phi_i}\otimes \ket{\psi_i}_E \cdot \sum_j \bra{\phi_j}\otimes \bra{\psi_j}_E\Big]\\
    & = \Tr_E\Big[\sum_{i,j} \ketbra{\phi_i}{\phi_j}\otimes \ketbra{\psi_i}{\psi_j}_E\Big]\\
    & = \sum_{i,j} \ketbra{\phi_i}{\phi_j}\otimes \Tr[\ketbra{\psi_i}{\psi_j}]\\
    & = \sum_{i,j} \ketbra{\phi_i}{\phi_j} \cdot \langle{\psi_j}\ket{\psi_i}\;.
\end{align*}
On the other hand, 
\begin{align*}
    \Tr_E[\ketbra{\chi'}{\chi'}] &= \Tr_E\Big[\sum_i \ket{\phi_i}\otimes  A\ket{\psi_i}_E \cdot \sum_j \bra{\phi_j}\otimes  \bra{\psi_j}_EA^\dagger\Big]\\
    & = \Tr_E\Big[\sum_{i,j} \ketbra{\phi_i}{\phi_j} \otimes  A\ket{\psi_i}_E  \bra{\psi_j}_EA^\dagger\Big]\\
    & = \sum_{i,j} \ketbra{\phi_i}{\phi_j} \otimes  \Tr[A\ket{\psi_i}_E  \bra{\psi_j}_EA^\dagger]\\
    & = \sum_{i,j} \ketbra{\phi_i}{\phi_j} \cdot  \bra{\psi_j} A^\dagger \cdot A\ket{\psi_i}\;.
\end{align*}
Because $A$ is an isometry, $\bra{\psi_j} A^\dagger \cdot A\ket{\psi_i} = \langle{\psi_j} \ket{\psi_i}$.
This finishes the proof. 
\end{proof}

\begin{claimout}
    \label{clm3}
    Let $\mathcal{H}$ be a Hilbert space, $\mathcal{B}$ an orthonormal basis for $\mathcal{H}$ and $U \in U(\mathcal{H})$. Then
    \begin{equation*}
        \sum_{x\in \mathcal{B}}  U\ket{x}\otimes\ket{x} = \sum_{x\in \mathcal{B}}  \ket{x}\otimes U^T\ket{x}\;.
    \end{equation*}
\end{claimout}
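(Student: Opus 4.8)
The plan is to prove the identity by expanding both sides in coordinates with respect to the basis $\mathcal{B}$ and then relabelling the summation indices. The conceptual point to fix at the outset is that the transpose $U^T$ on the right-hand side must be read with respect to the very basis $\mathcal{B}$ that defines the (unnormalised) maximally entangled vector $\sum_{x}\ket{x}\otimes\ket{x}$; this is precisely what makes the identity—often called the transpose trick or ricochet property—hold, and it is the only assumption implicitly at play.

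First I would introduce the matrix entries $U_{yx}\deff\bra{y}U\ket{x}$ for $x,y\in\mathcal{B}$, so that $U\ket{x}=\sum_{y\in\mathcal{B}}U_{yx}\ket{y}$. Substituting this into the left-hand side gives
\[
    \sum_{x\in\mathcal{B}} U\ket{x}\otimes\ket{x} \;=\; \sum_{x,y\in\mathcal{B}} U_{yx}\,\ket{y}\otimes\ket{x}\;.
\]
Next I would treat the right-hand side. By definition of the transpose in the basis $\mathcal{B}$ one has $\bra{y}U^T\ket{x}=\bra{x}U\ket{y}=U_{xy}$, hence $U^T\ket{x}=\sum_{y\in\mathcal{B}}U_{xy}\ket{y}$, and therefore
\[
    \sum_{x\in\mathcal{B}} \ket{x}\otimes U^T\ket{x} \;=\; \sum_{x,y\in\mathcal{B}} U_{xy}\,\ket{x}\otimes\ket{y}\;.
\]
The proof then concludes by swapping the names of the dummy indices $x$ and $y$ in this last double sum, which rewrites it as $\sum_{x,y\in\mathcal{B}} U_{yx}\,\ket{y}\otimes\ket{x}$, matching the expression obtained for the left-hand side.

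I do not anticipate a genuine technical obstacle: the statement is a routine bilinear identity once everything is written in coordinates, and the sums are well defined since the expressions are finite (or converge) in the relevant Hilbert space. The single point requiring care—and the only place an error could creep in—is keeping the convention for $U^T$ consistent with $\mathcal{B}$ throughout, since the identity is simply false for a basis-independent notion of transpose. If a coordinate-free phrasing were preferred, one could instead observe that the two sides are $(U\otimes I)$ and $(I\otimes U^T)$ applied to $\sum_{x}\ket{x}\otimes\ket{x}$, and verify their equality by pairing against an arbitrary product basis vector $\bra{a}\otimes\bra{b}$; this reduces again to the same entrywise comparison $U_{ab}=(U^T)_{ba}$.
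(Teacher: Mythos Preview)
Your proof is correct; this is the standard coordinate argument for the transpose trick, and your remark that $U^T$ must be taken with respect to the basis $\mathcal{B}$ is exactly the point. The paper itself states this claim without proof, treating it as a well-known identity, so there is nothing to compare against.
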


\begin{lemma}[Gentle measurement lemma~\cite{ogawa2002new}]\label{lem:gentle}
For all positive semidefinite $\rho$ and $X$, 
\[ TD\big( \rho , \sqrt{X}\rho\sqrt{X}\big)\,\leq\, 2\sqrt{\Tr(\rho)}\sqrt{\Tr(\rho(\I-X))}\;.\]
\end{lemma}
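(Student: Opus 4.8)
The plan is to derive the bound from the two pure-state facts already in hand, Fact~\ref{fact1} and Fact~\ref{fact2}, applied to a carefully chosen purification of the post-measurement operator. I read the hypothesis as the standard one, $0\preceq X\preceq\I$ (otherwise $\Tr(\rho(\I-X))$ need not be nonnegative and the right-hand side is meaningless), and I first establish the inequality for a normalized $\rho$, i.e.\ $\Tr(\rho)=1$. The general positive semidefinite case then follows at the end by homogeneity: writing $\rho=\Tr(\rho)\,\hat\rho$ with $\hat\rho$ a state, both $TD(\cdot,\cdot)$ and $\rho\mapsto\sqrt{X}\rho\sqrt{X}$ scale linearly in $\Tr(\rho)$, which reproduces the factor $\sqrt{\Tr(\rho)}$ once the argument of the square root is rewritten in terms of $\hat\rho$.

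So fix $\Tr(\rho)=1$ and set $\sigma\deff\sqrt{X}\rho\sqrt{X}$, $p\deff\Tr(\sigma)=\Tr(\rho X)$, and $\epsilon\deff 1-p=\Tr(\rho(\I-X))$. Let $\ket{\psi}$ be any purification of $\rho$ on $\mathcal{H}\otimes\mathcal{H}'$. Since the partial trace over $\mathcal{H}'$ commutes with an operator acting on $\mathcal{H}$, the vector $(\sqrt{X}\otimes I)\ket{\psi}$ purifies $\sigma$, and after normalizing, $\ket{\phi}\deff\frac{1}{\sqrt{p}}(\sqrt{X}\otimes I)\ket{\psi}$ is a unit vector purifying the normalized state $\hat\sigma\deff\sigma/p$. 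Because the partial trace is trace-preserving and completely positive, Fact~\ref{fact2} gives $TD(\rho,\hat\sigma)\le TD(\ketbra{\psi}{\psi},\ketbra{\phi}{\phi})$, and Fact~\ref{fact1} evaluates the latter as $\sqrt{1-\abs{\braket{\psi}{\phi}}^2}$. Here $\braket{\psi}{\phi}=\frac{1}{\sqrt{p}}\Tr(\rho\sqrt{X})$, and the one genuinely nonroutine step is to note that operator monotonicity of $t\mapsto\sqrt{t}$ on $[0,1]$ yields $X\preceq\sqrt{X}$, hence $\Tr(\rho\sqrt{X})\ge\Tr(\rho X)=p$ and $\abs{\braket{\psi}{\phi}}^2\ge p$. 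This gives $TD(\rho,\hat\sigma)\le\sqrt{\epsilon}$. A triangle inequality then restores the missing normalization: $2\,TD(\rho,\sigma)=\norm{\rho-\sigma}_1\le\norm{\rho-\hat\sigma}_1+\norm{\hat\sigma-\sigma}_1\le 2\sqrt{\epsilon}+(1-p)\norm{\hat\sigma}_1=2\sqrt{\epsilon}+\epsilon$, so $TD(\rho,\sigma)\le\sqrt{\epsilon}+\tfrac{\epsilon}{2}\le\tfrac{3}{2}\sqrt{\epsilon}\le 2\sqrt{\epsilon}$, using $\epsilon\le 1$.

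I expect the obstacles to be bookkeeping rather than conceptual. The two places that need care are, first, the subnormalization of the post-measurement operator: $(\sqrt{X}\otimes I)\ket{\psi}$ has squared norm $p\le 1$, so I must pass to the normalized $\hat\sigma$ before invoking Fact~\ref{fact1} (which is stated for unit vectors) and only reinstate the loss $\norm{\hat\sigma-\sigma}_1=\epsilon$ afterwards through the triangle inequality; and second, the operator inequality $X\preceq\sqrt{X}$, which is exactly where the hypothesis $X\preceq\I$ is used and which converts the overlap $\Tr(\rho\sqrt{X})$ into the quantity $\Tr(\rho X)$ appearing in the statement. Beyond these, the argument is a mechanical chaining of Facts~\ref{fact1} and~\ref{fact2} together with the homogeneity rescaling to arbitrary positive semidefinite $\rho$.
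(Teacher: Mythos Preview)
The paper does not supply its own proof of Lemma~\ref{lem:gentle}; it is stated with a citation to~\cite{ogawa2002new} and used as a black box. So there is nothing to compare against, and your proposal stands or falls on its own.

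Your argument is correct. The purification step, the use of Fact~\ref{fact2} to pass from the pure-state distance to the mixed-state distance, the operator inequality $X\preceq\sqrt{X}$ on $[0,\I]$ to convert $\Tr(\rho\sqrt{X})$ into $\Tr(\rho X)$, the triangle-inequality reinsertion of the normalization defect, and the final homogeneity rescaling all check out. Two cosmetic points: you should dispose of the degenerate cases $\Tr(\rho)=0$ (trivial, both sides vanish) and $p=0$ (then $\sigma=0$, $TD(\rho,0)=\tfrac12\leq 2=2\sqrt{\epsilon}$) explicitly, since your division by $\sqrt{p}$ is otherwise undefined; and your reading of the hypothesis as $0\preceq X\preceq\I$ is indeed the intended one, as the right-hand side is meaningless without it.
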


\subsection{Cryptography}

\begin{definition}[Negligible Function]
    A function $\epsilon(\lambda)$ is negligible if
    for all constant $c > 0$ there exists some $\Lambda$ s.t: $\forall \lambda > \Lambda:$
    \begin{equation*}
        \epsilon(\lambda) < \lambda^{-c}\;.
    \end{equation*}
\end{definition}

\begin{definition}[Quantum-Secure PRF]\label{def:QPRF}

Let $\mathcal{K, X, Y}$ be the key space, the domain and range, all implicitly depending on the security parameter $\lambda$. 
\newline A keyed family of functions $\{ PRF_k : \mathcal{X \rightarrow Y} \}_{k\in \mathcal{K}}$
is a quantum-secure pseudo-random function (Quantum-Secure PRF) if for any
polynomial-time quantum oracle algorithm A, $PRF_k$ with a random $k \leftarrow \mathcal{K}$ is
indistinguishable from a truly random function $f \leftarrow \mathcal{Y^X}$
in the sense that:

\begin{equation*}
    \abs{\underset{k \xleftarrow{\$}\mathcal{K} }{Pr}[\mathcal{A}^{PRF_k}(1^\lambda) = 1] - 
    \underset{f \xleftarrow{\$} \mathcal{Y^X} }{Pr}[\mathcal{A}^{f}(1^\lambda) = 1]} = \negl(\lambda)
\end{equation*}
    
\end{definition}

\begin{definition}[PRS]\label{def:prs}
    Let $\lambda$ be the security parameter. Let $\mathcal{H}$ be a Hilbert space and $\mathcal{K}$ the key space, both parameterized by $\lambda$. A keyed family of quantum states $\{\ket{\phi_k}\in S(\mathcal{H})\}_{k\in \mathcal{K}}$ is pseudo-random, if the following two conditions hold:

    \begin{enumerate}
        \item \textbf{(Efficient generation).} There is a polynomial-time quantum algorithm $G$ that generates state $\ket{\phi_k}$ on input $k$. That is, for all $k\in \mathcal{K}: G(k) = \ket{\phi_k}$.

        \item \textbf{(Pseudo-randomness).} Any polynomially many copies of $\ket{\phi_k}$ with the same random $k\in \mathcal{K}$ is computationally indistinguishable from the same number of copies of a Haar random state. 
    More precisely, for any efficient quantum algorithm $\mathcal{A}$ and any $m \in poly(\lambda)$, 
    \[\abs{\underset{k \xleftarrow{\$}\mathcal{K} }{Pr}[\mathcal{A}(\ket{\phi_k}^{\otimes m}) = 1] - 
    \underset{\ket{\psi} \xleftarrow{} \mu }{Pr}[\mathcal{A}(\ket{\psi}^{\otimes m}) = 1]} = \negl(\lambda)\;.\]

    where $\mu$ is the Haar measure on $S(\mathcal{H})$.
    \end{enumerate}

\end{definition}

Since this definition was first given, in~\cite{ji2018pseudorandom}, several other constructions were proposed.
The first one was given in the same paper \cite{ji2018pseudorandom}, and we will refer it as the general-phase PRS.

\begin{definition}[General Phase PRS on $n$ qubits] \label{GP PRS def} For any $n=n(\lambda)$ such that $n=\omega(\log\lambda)$ and $n=\poly(\lambda)$, 
\[ \ket{\phi_k}\,=\, 2^{-\frac{n}{2}} \sum_{x\in \{0,1\}^n} \omega_N^{f_k(x)} \ket{x}\;,\]
    Where $\{f_k\}_k$ is a quantum secure family of pseudo-random functions (PRF), $N = 2^n$ and $\omega_N = 2^{2i\pi/N}$.
\end{definition}

Another suggested construction, which we will refer as the binary-phase PRS, was given in~\cite{brakerski2020scalable}.

\begin{definition}[Binary Phase PRS on $n$ qubits] \label{BP PRS def}
For any $n=n(\lambda)$ such that $n=\omega(\log\lambda)$ and $n=\poly(\lambda)$, 
    \[\ket{\phi_k}\,=\, 2^{-\frac{n}{2}} \sum_{x\in \{0,1\}^n} (-1)^{f_k(x)} \ket{x}\;,\]
    where $\{f_k\}_k$ is a quantum secure PRF.
\end{definition}

Because our results depend on the specific algorithm used to generate the PRS, we here recall the natural implementation of both phase PRS constructions. \thomas{added this since our results depend on it}

\begin{definition}[Algorithm for generating phase PRS]\label{def:prs-algo}
Let $\ket{\phi_k}$ be the binary phase (resp.\ general phase) PRS construction defined above. Then $\ket{\phi_k}=PRS_k\ket{0^n}$, where $PRS_k = U_{f_k} \circ QFT_n$, with $QFT_n= H^{\otimes n}$, $H$ the single-qubit Hadamard (resp.\ $QFT_n$ the quantum Fourier transform over $\mathbb{F}_N$, $N=2^n$, i.e. $QFT_n \ket{x} = \sum_y \omega_N^{xy}\ket{y}$) and $U_{f_k}\ket{x}=(-1)^{f_k(x)}\ket{x}$ (resp.\ $U_{f_k}\ket{x} = \omega_N^{f_k(x)} \ket{x}$).
\end{definition}

We define what we mean by length expansion of a PRS. 

\begin{definition}[valid length expansion]
    Let $\{C_k\}_k$ be some PRS on $n$ qubits.
    We say that a construction is a valid length expansion for $\{C_k\}_k$  if when instantiating the PRS calls within the construction with $\{C_k\}_k$, the resulting algorithm produces a PRS on $\ell(n) > n$ qubits.
\end{definition}

\section{Our results}\label{sec2}

We present three constructions for expanding PRS, referred to as Constructions 1, 2, and 3. We prove that Construction 1 is valid for the binary-phase PRS and propose that it is similarly applicable to the general-phase PRS. Additionally, we introduce two variations of this construction (Constructions 2 and 3) and conjecture their validity for both binary- and general-phase PRS. Finally, we propose a Generalization Condition, which we conjecture allows any PRS satisfying this condition to be expanded using any of our constructions.

\subsection{The Expansion Construction}\label{construction1}

Here we introduce a construction that expands the binary-phase PRS and the general-phase PRS and can be potentially used to expand some\footnote{Any PRS that satisfy the condition stated in \ref{condition}.} other PRS as a black box to generate longer PRS. We refer to this construction as ``Construction 1.'' The construction is depicted in Figure~\ref{fig:construction1}. It is very similar to the standard approach for expanding the output length of a PRG, and proceeds by composing input-shifted instances of the PRS generation algorithm, which we view as a unitary map $PRS_k$ parameterized by the key $k$.

\begin{center}
    \includegraphics[width=0.6\linewidth]{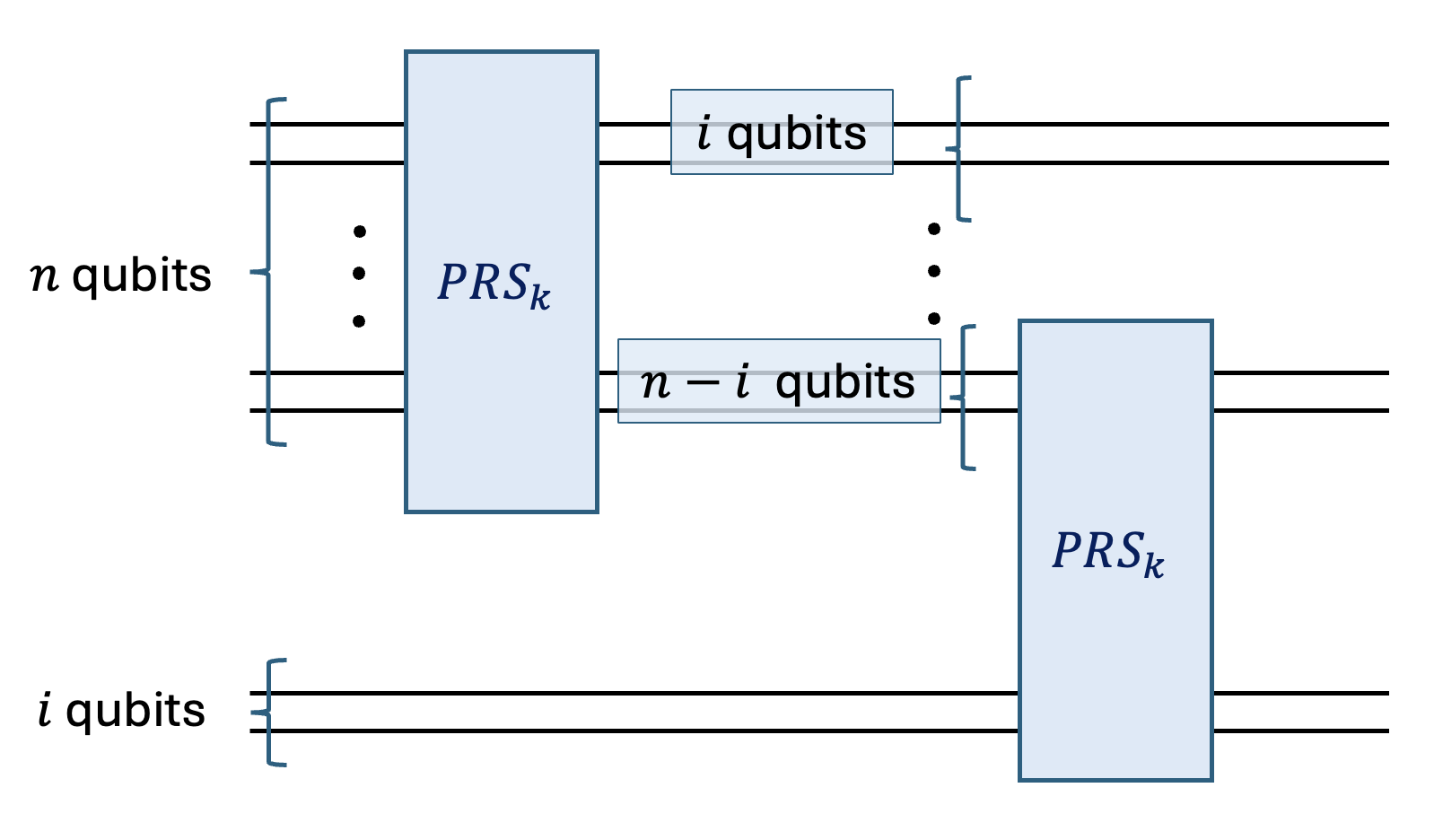}
    \captionof{figure}{\label{fig:construction1}PRS expansion with 2 blocks.}
\end{center}

We will later claim and prove formally that the output of this construction is a PRS on $n + i$ qubits, as long as $n-i=\omega(\log \lambda)$ and $PRS_k$ is a specific construction. Note that this will be true {if and only if} we can apply any efficient unitary on the output and still get a PRS on $(n + i)$ qubits. In particular when this unitary is $H$---More formally:
\begin{claimout}
    Let $\lambda$ be an asymptotic security parameter. Let $\{C_k\}_k$ be a family of quantum circuits indexed by $k=k(\lambda)$ on $n=n(\lambda)$ qubits and $U=(U_\lambda)$ an efficiently generated family of unitaries on $n(\lambda)$ qubits.
    Then $\{C_k\ket{0}^{\otimes n}\}_k$ is a PRS $\iff$ $\{UC_k\ket{0}^{\otimes n}\}_k$ is a PRS. 
\end{claimout}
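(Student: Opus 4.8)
The plan is to prove the forward implication ($\{C_k\ket{0}^{\otimes n}\}_k$ a PRS $\Rightarrow$ $\{UC_k\ket{0}^{\otimes n}\}_k$ a PRS) and then obtain the converse for free by applying the same argument with $U^\dagger$ in place of $U$. This reduction is legitimate because if $U=(U_\lambda)$ is an efficiently generated family of unitaries then so is $U^\dagger=(U_\lambda^\dagger)$ (reverse the circuit and take the adjoint of each gate), and $U^\dagger(UC_k\ket{0}^{\otimes n})=C_k\ket{0}^{\otimes n}$. So it suffices to handle one direction. I would write $\ket{\phi_k}=C_k\ket{0}^{\otimes n}$ and verify that $\{U\ket{\phi_k}\}_k$ satisfies the two conditions of Definition~\ref{def:prs}.

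Efficient generation I expect to be immediate: $U\ket{\phi_k}$ is produced by running the efficient generator $G$ for $\ket{\phi_k}$ on input $k$ and then applying the efficient circuit for $U$, so $U\cdot C_k$ is a polynomial-time generator on input $k$.

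For pseudo-randomness I would argue by a reduction. Given any efficient distinguisher $\mathcal{A}$ and any $m\in\poly(\lambda)$ violating pseudo-randomness of $\{U\ket{\phi_k}\}_k$, I would build a distinguisher $\mathcal{A}'$ for the original family as follows: on receiving $m$ copies of the challenge state, $\mathcal{A}'$ applies $U$ to each copy (i.e.\ applies $U^{\otimes m}$, which is efficient since $U$ is) and then runs $\mathcal{A}$ on the result. On the pseudorandom branch, an input $\ket{\phi_k}^{\otimes m}$ is mapped to $(U\ket{\phi_k})^{\otimes m}$, which is exactly the first distribution appearing in the pseudorandomness experiment for $\{U\ket{\phi_k}\}_k$.

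The one substantive step---the only place the argument is more than bookkeeping---is the Haar branch, so this is where I would focus care. Here I would invoke the left-invariance of the Haar measure $\mu$: if $\ket{\psi}\leftarrow\mu$ then $U\ket{\psi}$ is again distributed according to $\mu$. Hence $(U\ket{\psi})^{\otimes m}$ is distributed identically to $\ket{\psi'}^{\otimes m}$ for a fresh $\ket{\psi'}\leftarrow\mu$, so the state that $\mathcal{A}$ sees on this branch has exactly the Haar distribution of the original experiment. Combining the two branches, the advantage of $\mathcal{A}'$ against $\{\ket{\phi_k}\}_k$ equals the advantage of $\mathcal{A}$ against $\{U\ket{\phi_k}\}_k$; since the former is negligible by assumption, so is the latter. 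Applying this with $U^\dagger$ yields the converse, completing the equivalence. I do not anticipate any genuine obstacle beyond correctly pushing the unitary onto each of the $m$ copies and recording that Haar invariance makes the two reductions \emph{exact} rather than merely approximate.
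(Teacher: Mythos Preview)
Your proposal is correct and follows essentially the same approach as the paper: the paper's proof is a one-sentence sketch noting that a distinguisher for $\{UC_k\ket{0}^{\otimes n}\}_k$ can be turned into one for $\{C_k\ket{0}^{\otimes n}\}_k$ by applying $U$, using that the Haar measure is invariant under $U$. Your write-up simply makes this explicit (spelling out efficient generation, the $U^{\otimes m}$ reduction, and the $U^\dagger$ trick for the converse), which is entirely appropriate.
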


The claim follows since otherwise applying $U$ can be used to distinguish the PRS before $U$ from a Haar random state, that remains Haar random after applying $U$, in contradiction to the definition of PRS.
So from now on for convenience we shall analyze the following construction:
\begin{center}\label{fig:construction11}
    \includegraphics[width=0.6\linewidth]{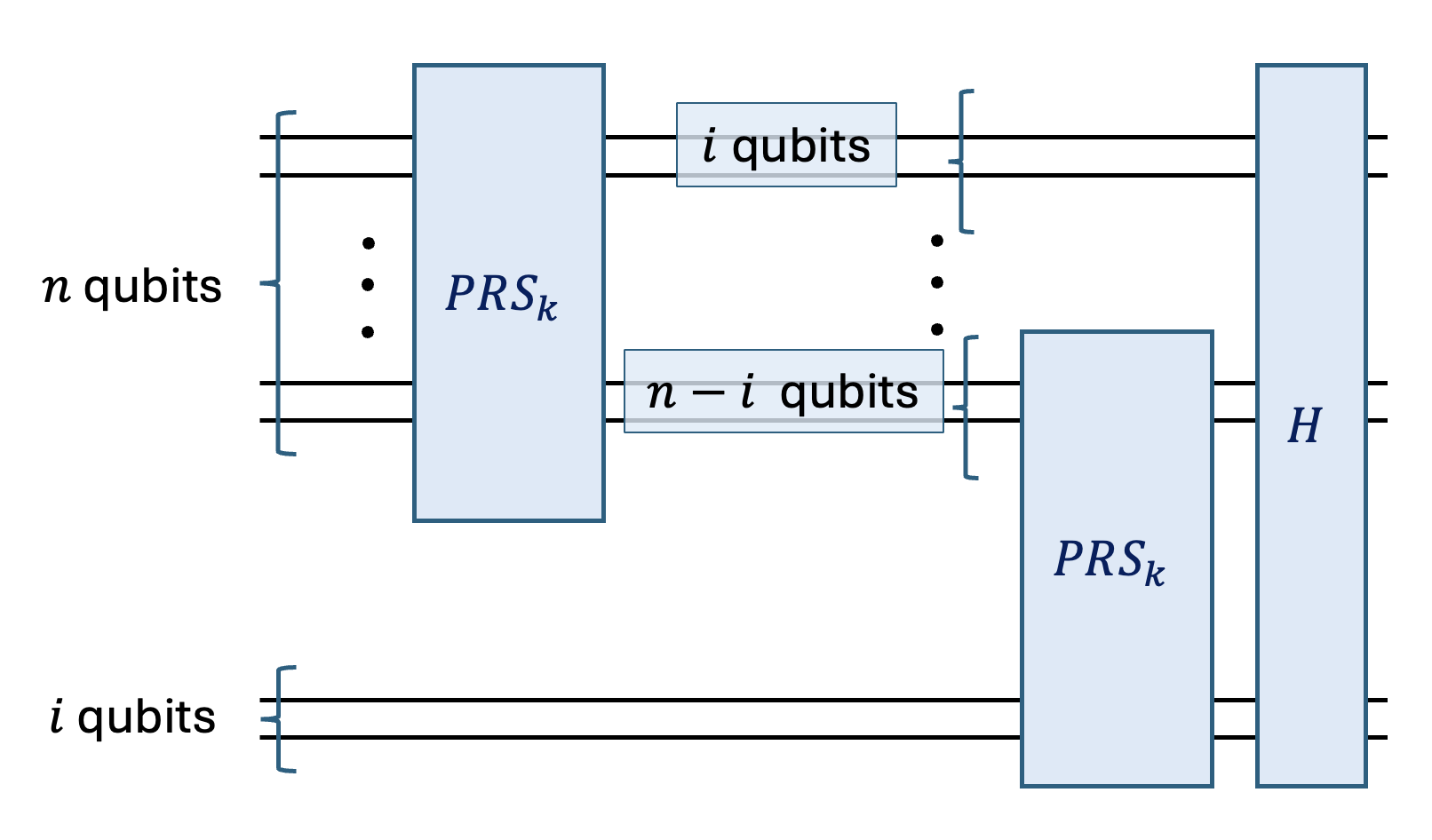}
    \captionof{figure}{Modified PRS expansion with 2 blocks.}
\end{center}

Let $\ket{\psi_k}$ denote the state at the end of this construction. The following is our main result. 

\begin{theorem}
\label{thm1}
    Let $\lambda$ be the security parameter, $PRS_k$ be the binary-phase PRS as in definition \ref{BP PRS def}, implemented as described in Definition~\ref{def:prs-algo}, $n(\lambda)$ the number of qubits of the output state of $PRS_k$ such that $n = \poly(\lambda)$ and let $i=i(n) : \N \rightarrow \N$ be such that $n - i(n) = \omega(\log(\lambda))$.
   Then the family of states $\{\ket{\psi_k}\}_k$ from ~\hyperref[fig:construction11]{Figure 2} is a PRS on $n + i$ qubits.
\end{theorem}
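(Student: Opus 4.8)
The plan is to follow the standard two-step template for proving pseudorandomness, and to locate all the difficulty in the second, statistical, step. Writing $\ket{\psi_k}$ for the output of the (modified) construction, I first want to pass from the keyed function $f_k$ to a truly random function $f:\{0,1\}^n\to\{0,1\}$. Concretely, for any efficient distinguisher $\mathcal A$ receiving $m=\poly(\lambda)$ copies, I would insert the hybrid $\E_f\ketbra{\psi_f}{\psi_f}^{\otimes m}$ between $\ket{\psi_k}^{\otimes m}$ and the Haar average. The first gap is bounded using Definition~\ref{def:QPRF}: a reduction $\mathcal B$ with oracle access to its function prepares $m$ copies of $\ket{\psi}$ using $2m$ oracle calls (two per copy, one for each shifted block) and then runs $\mathcal A$; since $\mathcal B$ is polynomial-time and makes polynomially many queries, PRF security bounds this gap by $\negl(\lambda)$. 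It then remains to bound the second gap, $TD\big(\E_f\ketbra{\psi_f}{\psi_f}^{\otimes m},\,\E_{\text{Haar}}\ketbra{\psi}{\psi}^{\otimes m}\big)$, by a negligible function, which is the heart of the argument.

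For the statistical step I would first record the explicit form of the random-function state. With $f$ random, the construction produces (up to the efficient unitary afforded by the claim preceding Theorem~\ref{thm1}, which lets me analyze whichever layout is most convenient) a state of the schematic form $\ket{\psi_f}=2^{-(n+i)/2}\sum_{a,d,c}\tilde\beta_a(d)\,(-1)^{f(d,c)}\ket{a}\ket{d}\ket{c}$, where $a$ ranges over the $i$ qubits touched only by the first block, $c$ over the $i$ qubits touched only by the second block, $d$ over the $n-i$ overlap qubits, and $\tilde\beta_a(\cdot)$ is the Hadamard transform of $b\mapsto(-1)^{f(a,b)}$. The two features that obstruct a naive hybrid are visible here: the same $f$ appears in both blocks, and the overlap register of dimension $2^{n-i}$ is shared between them. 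To handle both uniformly I would use Ma's purification (path-recording) technique: replace the random phase oracle by its purified version acting on the data registers together with a single database register $\mathsf F$ shared across both blocks and all $m$ copies, so that tracing out $\mathsf F$ reproduces $\E_f\ketbra{\psi_f}{\psi_f}^{\otimes m}$ exactly. Repeated queries to the same input are then automatically identified by the recorded database, which is precisely the bookkeeping a copy-by-copy or block-by-block hybrid cannot perform.

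Next I would isolate a collision-free ``good'' event on the database and argue it captures almost all of the state. The relevant recorded inputs are the at most $2m$ points $(a_j,b_j)$ and $(d_j,c_j)$ across the $m$ copies, the $b_j$ being the overlap-register summation indices produced inside the first block; the good event asks that these be pairwise distinct, and in particular that the overlap values $b_j$ associated to equal $a_j$ do not collide. Using the Gentle Measurement Lemma~\ref{lem:gentle} together with the data-processing inequality (Fact~\ref{fact2}), projecting onto the good event changes the $m$-copy state by at most $O(\sqrt{p_{\mathrm{coll}}})$ in trace distance, where $p_{\mathrm{coll}}$ is the probability that some collision is recorded. A birthday bound gives $p_{\mathrm{coll}}=O(m^2/2^{\,n-i})+O(m^2/2^{\,n})$, and this is exactly where the hypothesis $n-i=\omega(\log\lambda)$ is used: since $2^{\,n-i}$ is superpolynomial and $m=\poly(\lambda)$, the quantity $p_{\mathrm{coll}}$, and hence the perturbation, is negligible.

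Finally, conditioned on the good event I would show that the reduced state coincides with the Haar target up to negligible corrections. A direct moment computation shows that, for generic (collision-free) configurations, the only surviving contributions to the $(\bx,\by)$ matrix entry are the ``matched'' ones in which $\by$ is a permutation of $\bx$ and the internal overlap indices are paired ($\vec b'=\vec b$), each contributing weight $2^{-m(n+i)}$; summing over the $m!$ permutations reproduces the normalized projector onto the symmetric subspace, which is exactly $\E_{\text{Haar}}\ketbra{\psi}{\psi}^{\otimes m}$ on $n+i$ qubits, up to normalization factors $1+O(\poly(m)/2^{\,n+i})$. The main obstacle, and the step that makes the proof delicate, is controlling the cross terms between the two blocks arising from the shared key and the shared overlap register: these are genuinely absent from the single-block (ordinary PRS) analysis and from classical PRG expansion, and it is their suppression by the superpolynomial overlap dimension $2^{\,n-i}$ that both forces the hypothesis on $n-i$ and necessitates the purification-plus-gentle-measurement route in place of a black-box hybrid.
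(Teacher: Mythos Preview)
Your proposal follows essentially the same architecture as the paper's proof: replace the PRF by a random function, purify over the function register, restrict to a collision-free event, and then identify the conditioned state with (a purification of) the symmetric-subspace projector. You also correctly locate the role of the hypothesis $n-i=\omega(\log\lambda)$ and correctly single out the cross-block terms (same key, shared overlap register) as the place where the argument is genuinely harder than the single-block PRS analysis.

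The one point where your outline is too optimistic is the sentence ``A birthday bound gives $p_{\mathrm{coll}}=O(m^2/2^{\,n-i})+O(m^2/2^{\,n})$.'' For the \emph{within-block} distinctness ($\mathbf{x''}$ distinct, $\mathbf{y}_>$ distinct separately) this is exactly right, and the paper does it precisely this way via two gentle measurements applied when those registers are still physically present and uniformly distributed. But for the \emph{cross-block} collisions ($x''_j=y_{k,>}$ for some $j,k$), by the time $\mathbf{y}$ exists the indices $\mathbf{x''}$ have been summed out with phases $(-1)^{\sum_j y'_j\cdot x''_j}$, so the quantity you must bound is not a classical collision probability but $\big\|\Pi_{\text{collision}}\ket{\Psi}\big\|^2$ where the projector acts on the database register and the amplitudes for different $\mathbf{x''}$ interfere. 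A naive birthday count does not control this norm. In the paper this is the content of Claim~\ref{clm2} and Lemma~\ref{lem:helper}: one expands the squared norm, groups the interfering $\mathbf{x''}$-terms by the set $T=\{x_j\}\setminus\{y_j\}$, and uses the phase sum $\sum_{y_j}(-1)^{y'_j\cdot(\pi(j)+\pi'(j))}$ to kill most cross terms combinatorially; only then does the $2^{-(n-i)}$ suppression emerge. Your final paragraph essentially acknowledges this (``controlling the cross terms\ldots makes the proof delicate''), so the gap is one of emphasis rather than direction: the step you describe as ``gentle measurement plus birthday bound'' is, for the cross-block part, the actual technical core of the proof and needs an argument of the Lemma~\ref{lem:helper} type, not just a counting estimate.

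A second, smaller remark: after conditioning on distinctness, the paper does not proceed by a ``direct moment computation'' as you propose. Instead it exploits the maximally-entangled structure between the data registers and the purifying register (Claim~\ref{clm3}) to transfer the outer layer of Hadamards onto the environment, and then exhibits an explicit bijection (the set $\mathrm{Good}$) between the $2t$-element symmetrized database $\ket{\{x'_1x''_1,y_1,\ldots\}}$ and the $t$-element symmetrized state $\ket{\{x'_1y_1,\ldots,x'_ty_t\}}$ that purifies the Haar moment. Your moment-matching route should also work, but it will have to handle the same cross-matching bijections (first-block ket query paired with second-block bra query) that the paper's approach sidesteps by working entirely on the purification side.
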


Theorem~\ref{thm1} is proved in Section~\ref{sec:proof1}. We note that two elements distinguish the result from known results. Firstly, it was recently shown in~\cite{schuster2024random} that ``gluing'' two pseudorandom \emph{unitaries} (PRU) along a similar scheme as that described in Figure~\ref{fig:construction1} does produce a valid, larger pseudorandom unitary. This result is incomparable because PRUs are a different assumption than PRS (PRUs trivially imply PRS, but the reverse is not known). In particular, the proof uses properties of random unitary operations; in our construction, either constituent or the whole may all be very far from random as unitary operations. Secondly, in our situation we are able to analyze the construction when both instantiations of the PRS are based on the same key. This precludes hybrid arguments of the kind that is essential to~\cite{schuster2024random} and requires a more hands-on analysis. While the result is more specific---it does not apply to all PRS---conceptually it provides the first step towards establishing more general length manipulation techniques on PRS that would parallel the classical setting.

\paragraph{Notes:}
\begin{itemize}
    \item A similar theorem, where the binary-phase PRS is replaced with the general-phase PRS (given in Definition \ref{GP PRS def}), can be proven using a similar proof, where $H$ is replaced with the quantum Fourier transform where needed. We omit the details. 

    \item Since the first and second blocks of our construction use the same key, it means we were able to expand this PRS without increasing the key length.
   
    \item A distinguishing feature of our result is that it allows for the expansion of PRS without compromising the number of copies that can be provided to the adversary. Specifically, our construction remains secure for any number of copies $t$ as long as $t \in \poly(\lambda)$, in accordance with Definition \ref{def:prs}. As far as we know, this is a unique aspect of our result, as previous works addressing the expansion problem either imposed a constant limit on the number of copies to begin with \cite{schuster2024random, chen2024power}, or involved a trade-off on the number of copies \cite[Theorem C.2]{gunn2023commitments} .

    \item In some cases one may care more about the efficiency of the circuit or the length of the output rather than the amount of randomness (i.e the key length). In such cases our result can be extended in such a way that the first PRS application can be made according to \emph{any} PRS, and not only the binary- (or general-)phase constructions. This follows from a simply hybrid argument, as replacing the first block with any PRS on $n-$qubits can only result in a computationally indistinguishable final state. Indeed, assume towards contradiction this is not the case. Then the construction can be used to distinguish between the PRS and $\frac{1}{\sqrt{N}}\sum_{x\in \{0,1\}^n} (-1)^{r(x)}\ket{x} $ (where $r(x)$ is a random function as in \cite{brakerski2019pseudo} Theorem 1.) which is a contradiction to the fact they are both indistinguishable from Haar and therefore indistinguishable from each other. 
    
    \item Unfortunately, we do not know how to carry out the same argument for the second block; and indeed it is unlikely that it could be done in such generality. This is simply because the definition of a PRS only guarantees how the family of circuits behaves on the specific fixed input state $\ket{0}$, not on all input states. As a result, the output on other input states may not be indistinguishable from Haar; indeed it is not hard to construct such examples artificially. For example, on input $\ket{0}$ the PRS algorithm creates the right PRS state, but on any other input it

    does nothing. One can easily verify it satisfy the definition of PRS but fails to give something interesting on inputs that are not $\ket{0}$.
\end{itemize}

\subsection{Additional Expansion Constructions}

In this section we provide additional possible length-expanding constructions for the binary/general-phase PRS.

\subsubsection{Construction 2}
\label{construction2}

\begin{center}
    \includegraphics[width=0.6\linewidth]{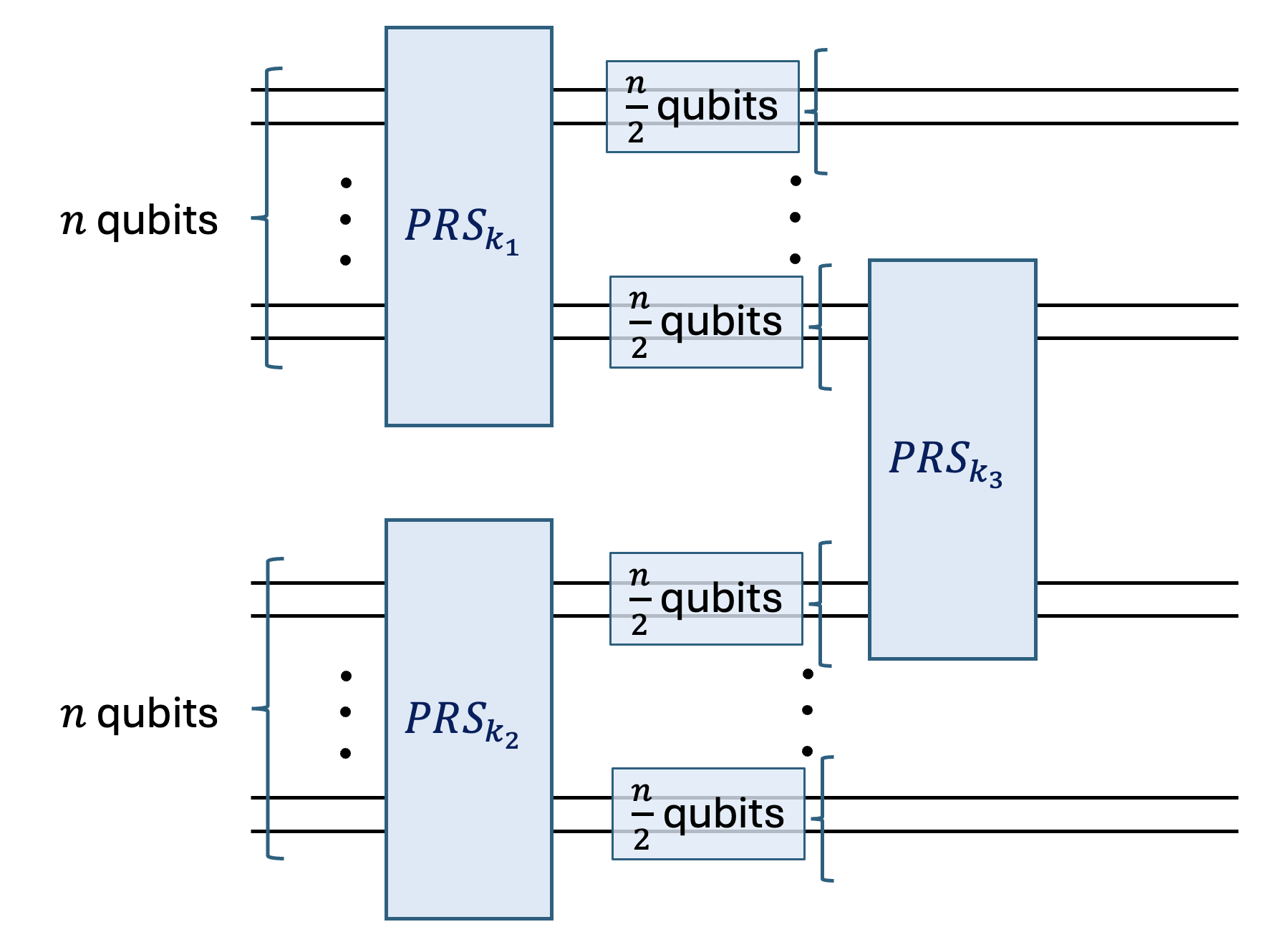}
    \captionof{figure}{\label{fig:construction2}PRS expansion with 3 blocks}
\end{center}

\begin{itemize}
    \item This construction can be proven as a valid length expansion for the binary-phase PRS, using a purification technique much like in the proof in Section \ref{sec:proof-thm1}, but we omit the details.. The purification technique was introduced by Ma in a \href{https://www.youtube.com/playlist?list=PLgKuh-lKre10dVLxTTK-wcEOD-FTe_SVq}{Workshop on pseudorandom unitaries} given by him and Huang at Simons Institute. And indeed a similar construction is presented in \cite{schuster2024random} for pseudo-random unitaries which are a potential stronger assumption than PRS.

    \item Note that here the overlap between the first and second layers is precisely $\frac{n}{2}$. 

    \item Another notable aspect of this construction is that we are able to expand the state to $2n$ qubits unlike
    \hyperref[fig:construction1]{Construction 1}, but at the expense of a key of 3 times the original length. (We do not know if this expansion construction works by re-using the same key thrice.)
\end{itemize}

\subsubsection{Construction 3}
\label{construction3}

\begin{center}
    \includegraphics[width=0.8\linewidth]{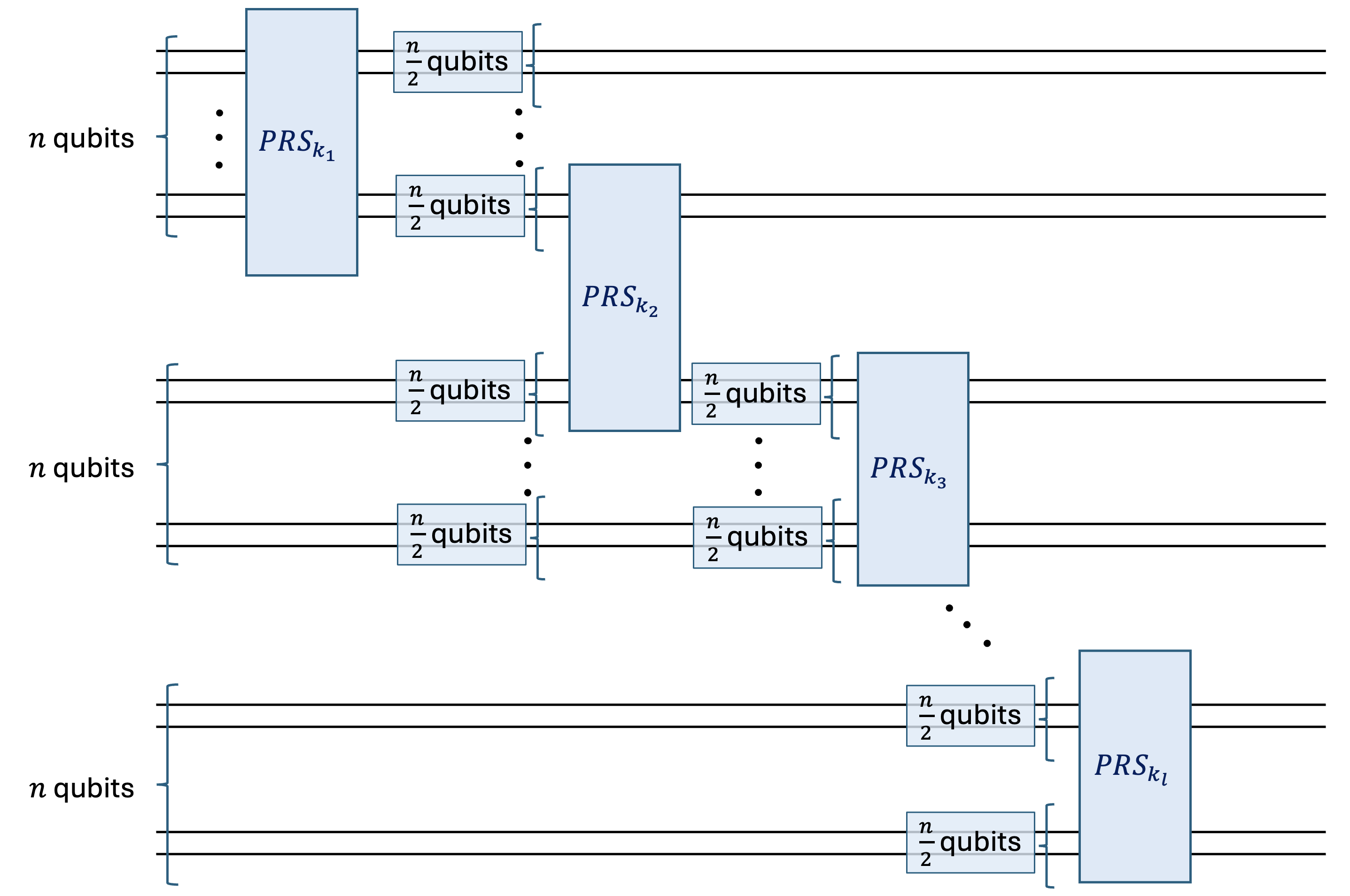}
    \captionof{figure}{\label{fig:construction3}Stairs PRS expansion}
\end{center}

\begin{itemize}
    \item In this construction we are able to expand the length from $n$ to $\frac{n}{2}(\ell+1)$ for any $\ell \in \N$ at the price of $\ell$ different keys.

    \item Note that for the case of $\ell = 3$ we get as output a PRS of size $2n$ as in \hyperref[fig:construction2]{Construction 2}. The key difference is that \hyperref[fig:construction2]{Construction 2} is more efficient, as two PRS blocks operate simultaneously at the start, whereas in this case, they act sequentially. The advantage of this construction, however, is its greater flexibility in adjusting the output length.
\end{itemize}

\subsection{Generalization Condition}

Finally we propose a condition on a PRS that preserves the “good” properties of the PRS generator even when applying it on a
general state. We conjecture that this condition is sufficient to make constructions \hyperref[construction1]{1} (when $i = \frac{n}{2}$), \hyperref[construction2]{2} and \hyperref[construction3]{3} valid length expansions. Specifically, \textit{any} PRS that meets this condition can use these constructions to be expanded.

\begin{definition}[Generalization Condition] \label{condition}
Let $\mathcal{H}$ be a Hilbert space on $n$ qubits, and $\mathcal{K}$ the key space. We say that a PRS satisfy the Generalization Condition if:
    \begin{align}
        \forall x\in \{0,1\}^n \;\; \exists U_x &\in U(\mathcal{H}) \; s.t: \; \forall k \in \mathcal{K}: \nonumber\\
        PRS_k \ket{x} &= U_x \cdot PRS_k \ket{0} \label{eq:cond-1}
    \end{align}
    And $U_x$ is an efficient unitary s.t: 
    \begin{equation}\label{eq:cond-2}
        \exists V,W \in U(\mathcal{H}) \;\;s.t.\;\; \forall y \in \{0,1\}^n :
        \sum_x  \ket{x} \otimes U_x^T \ket{y} = V \ket{y} \otimes W \ket{y}
    \end{equation}
\end{definition}

The motivation for this stems from one of the goals of the purification technique: to obtain a purifying state that resembles the maximally entangled state. The condition helps align the appropriate parts of the purifying register, ensuring that the resulting state resembles the maximally entangled state.

In more detail, the condition~\eqref{eq:cond-1} guarantees that the PRS creation algorithm, when applied on an arbitrary \emph{computational basis} state $\ket{x}$, produces the same pseudorandom state as when applied on the state $\ket{0}$, up to an $x$-dependent unitary $U_x$. Note that crucially, $U_x$ does \emph{not} depend on the key $k$, as otherwise the condition would be vacuous. The condition is natural and could arguably be used to define $PRS_k$ on an input $\ket{x}$; but not all known PRS generation algorithms may lend themselves to it --- for example, we were unable to find an algorithm for the substet PRS~\cite{giurgica2023pseudorandomness,jeronimo2023subset} that would satisfy the condition. 

The second part of the condition,~\eqref{eq:cond-2}, is a technical condition that allows the proof technique of Theorem~\ref{thm1}, based on the ``purification technique'', to go through.

\begin{theorem} \label{BP prs satisfy condition}
    The binary-phase PRS satisfies the Generalization Condition.
\end{theorem}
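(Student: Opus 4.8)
The plan is to verify the two parts of the Generalization Condition (Definition~\ref{condition}) directly for the binary-phase PRS, using its explicit implementation $PRS_k = U_{f_k}\circ H^{\otimes n}$ from Definition~\ref{def:prs-algo}. First I would establish~\eqref{eq:cond-1} by identifying the correct family of unitaries $U_x$. The key observation is that applying $H^{\otimes n}$ to a computational basis state $\ket{x}$ produces $2^{-n/2}\sum_{y}(-1)^{x\cdot y}\ket{y}$, so the only difference from the $\ket{0}$ case is the extra sign pattern $(-1)^{x\cdot y}$. This sign pattern can be absorbed by the diagonal Pauli-$Z$ operator $Z_x \deff \sum_y (-1)^{x\cdot y}\ketbra{y}{y} = Z^{x_1}\otimes\cdots\otimes Z^{x_n}$, which crucially commutes with the diagonal phase operator $U_{f_k}$. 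Hence I would set $U_x \deff Z_x$ and check that
\begin{align*}
PRS_k\ket{x} &= U_{f_k} H^{\otimes n}\ket{x} = U_{f_k}\, 2^{-n/2}\sum_y (-1)^{x\cdot y}\ket{y} = Z_x\, U_{f_k} H^{\otimes n}\ket{0} = U_x\cdot PRS_k\ket{0}\;,
\end{align*}
which holds for every key $k$ precisely because $Z_x$ and $U_{f_k}$ are simultaneously diagonal and thus commute. Note $U_x=Z_x$ is manifestly efficient and key-independent, as required.

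Next I would verify the technical condition~\eqref{eq:cond-2}. Since $Z_x$ is diagonal and real, $U_x^T = Z_x^T = Z_x$, and $Z_x\ket{y} = (-1)^{x\cdot y}\ket{y}$. Therefore
\begin{equation*}
\sum_x \ket{x}\otimes U_x^T\ket{y} = \sum_x (-1)^{x\cdot y}\ket{x}\otimes\ket{y} = \Big(\sum_x (-1)^{x\cdot y}\ket{x}\Big)\otimes\ket{y}\;.
\end{equation*}
The right-hand factor is exactly $2^{n/2}\,H^{\otimes n}\ket{y}$, so the whole expression equals $\big(2^{n/2}H^{\otimes n}\ket{y}\big)\otimes\ket{y}$. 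To match the required form $V\ket{y}\otimes W\ket{y}$, I would absorb the scalar $2^{n/2}$ into the definitions and set $V \deff 2^{n/2}H^{\otimes n}$ and $W \deff I$. Strictly speaking $V$ as written is not unitary because of the scalar, so the cleanest presentation is to note that the construction only uses~\eqref{eq:cond-2} up to the overall normalization inherited from the purification, or to fold the factor appropriately; I would state $V = H^{\otimes n}$ and $W = I$ and track the normalization constant explicitly, observing that $H^{\otimes n}$ is unitary (indeed $H^{\otimes n} = (H^{\otimes n})^T$). This gives the decomposition with $y$-independent $V,W$, as the condition demands.

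The two verifications above are essentially the whole argument, and the main subtlety — rather than any deep obstacle — is the commutation step underlying~\eqref{eq:cond-1}: the condition would fail for a construction whose phase operator did not commute with the sign correction, so the proof genuinely relies on $U_{f_k}$ being diagonal in the computational basis. I expect the most delicate point to be the bookkeeping in~\eqref{eq:cond-2}, specifically reconciling the normalization factor $2^{n/2}$ with the requirement that $V,W$ be honest unitaries; I would handle this by making the scalar explicit and noting it plays no role in the downstream purification argument of Theorem~\ref{thm1}, where global scalars and normalizations are tracked separately. Everything else is a direct computation with Hadamard and diagonal phase gates.
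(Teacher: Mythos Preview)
Your proposal is correct and matches the paper's proof almost exactly: the paper also takes $U_x=\sum_y(-1)^{x\cdot y}\ketbra{y}{y}$, verifies~\eqref{eq:cond-1} by the same diagonal-commutation computation, and sets $V=H^{\otimes n}$, $W=\I$ for~\eqref{eq:cond-2}. You are right to flag the $2^{n/2}$ normalization discrepancy in~\eqref{eq:cond-2}; the paper simply writes $\sum_x(-1)^{x\cdot y}\ket{x}=H\ket{y}$ without the scalar, so your more careful treatment is, if anything, an improvement.
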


\begin{proof} Let $x \in \{0,1\}^n$. 
On one hand the algorithm for the binary-phase in~\cite{brakerski2019pseudo} gives, when applied to a general computational basis state $\ket{x}$, 
\begin{equation}\label{PRS_k on x}
    PRS_k\ket{x} = \frac{1}{\sqrt{N}}\sum_{y\in \{0,1\}^n} (-1)^{f_k(y) + x\cdot y}\ket{y}\;.
\end{equation}
Define
\begin{equation*}
    U_x \deff \sum_{y\in \{0,1\}^n} (-1)^{x\cdot y}\ketbra{y}{y},\;\;\; 
    V = H, \;\;\; W = \I\;.
\end{equation*}
Then $\forall y\in \{0,1\}^n$:
\begin{align*}
    \sum_x  \ket{x} \otimes U_x^T \ket{y} &= \sum_x  \ket{x} \otimes \sum_{y'\in \{0,1\}^n} (-1)^{x\cdot y'}\ketbra{y'}{y'} \ket{y}\\
    &= \sum_x  \ket{x} \otimes (-1)^{x\cdot y} \ket{y}\\
    &= \sum_x (-1)^{x\cdot y} \ket{x} \otimes \ket{y}\\
    &= H \ket{y} \otimes \I \ket{y}\;.
\end{align*}
On the other hand we get:
\begin{align*}
    U_x\cdot PRS_k \ket{0} &= U_x \frac{1}{\sqrt{N}}\sum_{y\in \{0,1\}^n} (-1)^{f_k(y) }\ket{y}\\
    &= \sum_{y'\in \{0,1\}^n} (-1)^{x\cdot y'}\ketbra{y'}{y'} \frac{1}{\sqrt{N}}\sum_{y\in \{0,1\}^n} (-1)^{f_k(y)}\ket{y}\\
    &= \frac{1}{\sqrt{N}} \sum_{y'\in \{0,1\}^n}\sum_{y\in \{0,1\}^n} (-1)^{f_k(y) + x\cdot y'}\ketbra{y'}{y'}\ket{y}\\
    &= \frac{1}{\sqrt{N}} \sum_{y\in \{0,1\}^n} (-1)^{f_k(y) + x\cdot y}\ket{y}\\
    &\underset{\eqref{PRS_k on x}}{=} PRS_k\ket{x} \;\;\;.
\end{align*}
\end{proof}

\begin{theorem} \label{GP prs satisfy condition}
    The general-phase PRS satisfy the Generalization Condition.
\end{theorem}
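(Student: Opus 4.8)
The plan is to mirror the structure of the proof of Theorem~\ref{BP prs satisfy condition}, adapting each step from the binary-phase setting to the general-phase setting by replacing the Hadamard/sign structure with the quantum Fourier transform and $N$-th roots of unity. First I would compute the action of $PRS_k$ on a general computational basis state $\ket{x}$. Recalling from Definition~\ref{def:prs-algo} that $PRS_k = U_{f_k}\circ QFT_n$ with $QFT_n\ket{x}=\sum_y \omega_N^{xy}\ket{y}$ and $U_{f_k}\ket{y}=\omega_N^{f_k(y)}\ket{y}$, I expect to obtain the analogue of~\eqref{PRS_k on x}, namely
\begin{equation*}
    PRS_k\ket{x} \,=\, \frac{1}{\sqrt{N}}\sum_{y\in\{0,1\}^n} \omega_N^{f_k(y)+xy}\ket{y}\;,
\end{equation*}
where $xy$ denotes the product in $\mathbb{F}_N$. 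This identifies the $x$-dependent correction as the diagonal phase unitary.

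Next I would define the candidate operators. Guided by the above, I would set
\begin{equation*}
    U_x \deff \sum_{y\in\{0,1\}^n} \omega_N^{xy}\,\ketbra{y}{y}\;,\qquad V = QFT_n\;,\qquad W = \I\;,
\end{equation*}
the direct general-phase replacements of the binary-phase choices $V=H$, $W=\I$. The crucial feature, exactly as in the binary case, is that $U_x$ is a diagonal phase unitary depending only on $x$ and not on the key $k$, so condition~\eqref{eq:cond-1} is not vacuous. To verify~\eqref{eq:cond-1}, I would apply $U_x$ to $PRS_k\ket{0}=\frac{1}{\sqrt{N}}\sum_y \omega_N^{f_k(y)}\ket{y}$ and check that the diagonal action multiplies each $\ket{y}$ by $\omega_N^{xy}$, recovering the displayed expression for $PRS_k\ket{x}$; this is the same short diagonal computation as in Theorem~\ref{BP prs satisfy condition}.

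To verify~\eqref{eq:cond-2}, I would compute $U_x^T$. Since $U_x$ is diagonal it is symmetric, so $U_x^T = U_x$ and $U_x^T\ket{y}=\omega_N^{xy}\ket{y}$. Then
\begin{align*}
    \sum_x \ket{x}\otimes U_x^T\ket{y}
    &= \sum_x \ket{x}\otimes \omega_N^{xy}\ket{y}
    = \Big(\sum_x \omega_N^{xy}\ket{x}\Big)\otimes\ket{y}
    = QFT_n\ket{y}\otimes\I\ket{y}\;,
\end{align*}
which is precisely the required form $V\ket{y}\otimes W\ket{y}$. The one point requiring care, and the step I expect to be the main (mild) obstacle, is bookkeeping the arithmetic: in the general-phase construction the exponents $xy$ live in $\mathbb{F}_N$ rather than being the $\mathbb{F}_2$ inner product $x\cdot y$, so I must make sure the QFT's defining relation $QFT_n\ket{y}=\sum_x \omega_N^{xy}\ket{x}$ is applied with the correct pairing and that the symmetry $QFT_n^T=QFT_n$ is used where the binary proof invoked $H^T=H$. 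Everything else transfers verbatim from the binary-phase argument, so once the phase arithmetic is set up consistently the two displayed checks close the proof.
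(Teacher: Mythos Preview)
Your proposal is correct and follows essentially the same approach as the paper: define $U_x=\sum_y \omega_N^{xy}\ketbra{y}{y}$, $V=QFT_n$, $W=\I$, and verify both parts of the Generalization Condition exactly as in the binary-phase case with $H$ replaced by $QFT_n$. The paper's own proof is even terser---it just records these choices and says ``the rest is identical to the proof of Theorem~\ref{BP prs satisfy condition}''---so your write-up is if anything more complete.
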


\begin{proof} Let $x \in \{0,1\}^n$. 
On one hand by the algorithm given to the binary-phase at \cite{ji2018pseudorandom}:
\begin{equation}\label{GP PRS_k on x}
    PRS_k\ket{x} = \frac{1}{\sqrt{N}}\sum_{y\in \{0,1\}^n} \omega_N^{f_k(y) + x\cdot y}\ket{y}
\end{equation}

Define
\begin{equation*}
    U_x \deff \sum_{y\in \{0,1\}^n} \omega_N^{x\cdot y}\ketbra{y}{y},\;\;\; 
    V = QFT, \;\;\; W = \I
\end{equation*}
The rest is identical to the proof of Theorem \ref{BP prs satisfy condition}. 

\end{proof}

We conjecture the following. 

\begin{conjecture}\label{conj:1 can be generlized}
    Let $\{C_k\}_k$ be a PRS that satisfy the Generalization Condition and $i(n) = \frac{n}{2}$ the number of added qubits.
    Then \hyperref[fig:construction1]{Construction 1} is a valid length expansion for $\{C_k\}_k$.
\end{conjecture}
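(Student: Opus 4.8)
The plan is to follow the purification-based proof of Theorem~\ref{thm1}, replacing every step that invokes the explicit Hadamard/PRF structure of the binary-phase construction by an appeal to the Generalization Condition. Label the $n+i$ output qubits as three registers $A$ (the first $i$ qubits), $B$ (the overlapping $n-i$ qubits) and $C$ (the $i$ fresh qubits), so that block~1 is $PRS_k$ on $AB$ and block~2 is $PRS_k$ on $BC$. After block~1 the state is $\ket{\phi_k}_{AB}\otimes\ket{0^i}_C$ with $\ket{\phi_k}=PRS_k\ket{0^n}$, and since register $C$ enters block~2 in the state $\ket{0}$, condition~\eqref{eq:cond-1} lets me evaluate block~2 on every basis state it actually meets: $PRS_k\ket{b}_B\ket{0^i}_C=U_{b0^i}\ket{\phi_k}_{BC}$. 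Expanding $\ket{\phi_k}_{AB}=\sum_{a,b}\langle ab|\phi_k\rangle\,\ket{a}_A\ket{b}_B$ therefore gives the closed form
\[
  \ket{\psi_k}=\sum_{a,b}\langle ab|\phi_k\rangle\;\ket{a}_A\otimes U_{b0^i}\ket{\phi_k}_{BC},
\]
which expresses the output entirely through the single pseudorandom state $\ket{\phi_k}$ and the key-\emph{independent} unitaries $U_x$. This is the exact analogue of the opening computation in the proof of Theorem~\ref{thm1}, and it is the only place where \eqref{eq:cond-1} is used.

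Next I would carry out the purification argument. Introduce the ``ideal'' ensemble obtained by substituting a Haar-random $\ket{\vartheta}$ for $\ket{\phi_k}$ \emph{consistently in both occurrences}, $\ket{\psi_\vartheta}=\sum_{a,b}\langle ab|\vartheta\rangle\,\ket{a}_A\otimes U_{b0^i}\ket{\vartheta}_{BC}$, and split the argument in two. The information-theoretic half is to show that $\E_{\vartheta}\ketbra{\psi_\vartheta}{\psi_\vartheta}^{\otimes t}$ is $\negl(\lambda)$-close in trace distance to the $t$-copy Haar state on $n+i$ qubits (the normalized projector onto the symmetric subspace). Here I would purify each copy over the Haar measure, push the $U_x$'s across the entanglement using the transpose identity of Claim~\ref{clm3}, and invoke the factorization~\eqref{eq:cond-2}, $\sum_x\ket{x}\otimes U_x^T\ket{y}=V\ket{y}\otimes W\ket{y}$, which is engineered precisely so that the overlap register $B$ decouples into a product of a fixed resource and a maximally-entangled-like block; the choice $i=n/2$ is what makes the overlap register ($n-i=n/2$ qubits) and the fresh register ($i=n/2$ qubits) symmetric and matches the $V\otimes W$ split in \eqref{eq:cond-2}. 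Conditioned on the $t$ purifying labels being distinct, the global state is then exactly maximally entangled, so its reduction equals the Haar state; the labels live in a set of size $2^{n-i}=2^{n/2}$, so a collision among $t=\poly(\lambda)$ of them has probability $O(t^2 2^{-n/2})$, which is negligible provided $n/2=\omega(\log\lambda)$. A single application of the gentle measurement Lemma~\ref{lem:gentle} converts this collision bound into the desired trace-distance bound, exactly as in Theorem~\ref{thm1}.

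The remaining, and in my view \emph{hardest}, half is to justify the replacement $\E_k\ketbra{\psi_k}{\psi_k}^{\otimes t}\approx\E_{\vartheta}\ketbra{\psi_\vartheta}{\psi_\vartheta}^{\otimes t}$, i.e.\ to move from the key-averaged object to the Haar-averaged one. For the phase PRS this step is free: the generator is the explicit circuit $U_{f_k}\circ QFT_n$, so pseudorandomness of $f_k$ lets one replace the key by a truly random function and then simulate \emph{both} blocks given oracle access to that function. A generic PRS satisfying only the Generalization Condition offers no such handle --- the reduction would receive merely copies of $\ket{\phi_k}$ (or of $\ket{\vartheta}$) and would have to manufacture $\ket{\psi_\xi}=\sum_{a,b}\langle ab|\xi\rangle\,\ket{a}_A\otimes U_{b0^i}\ket{\xi}_{BC}$ from them. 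By \eqref{eq:cond-1} this no longer needs black-box access to $C_k$ on entangled inputs, but it still demands a \emph{coherent double use} of the single state $\ket{\xi}$ (one copy contributes the amplitudes $\langle ab|\xi\rangle$, i.e.\ the conjugate $\overline{\ket{\xi}}$, and one copy is acted on by $U_{b0^i}$). Realizing this bilinear map on copies of $\ket{\xi}$ --- plausibly through a Bell-type measurement/gate-teleportation gadget whose correction operators are exactly the $U_x$, so that \eqref{eq:cond-2} guarantees a key-independent and efficiently implementable correction --- and controlling the resulting post-selection error is the technical crux on which the conjecture hinges, and is the step I would expect to consume the bulk of the work.
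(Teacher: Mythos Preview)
The paper does not prove this statement --- it is presented as a conjecture with no accompanying argument --- so there is no paper proof to compare your proposal against. Your outline is a sensible attempt to port the Theorem~\ref{thm1} strategy, and you have correctly located the principal obstruction: the reduction that would let a PRS distinguisher manufacture $\ket{\psi_\xi}$ from copies of $\ket{\xi}$. The expression $\sum_{a,b}\langle ab|\xi\rangle\,\ket{a}\otimes U_{b0^i}\ket{\xi}$ consumes $\ket{\xi}$ twice (once as amplitudes, once as a ket), and no black-box procedure on copies of $\ket{\xi}$ realizes this bilinear map without leaving behind a garbage register entangled with the overlap label $b$; your teleportation idea trades that garbage for a random correction label, but absorbing or uncomputing that label is precisely where one would need structure beyond~\eqref{eq:cond-1}--\eqref{eq:cond-2}. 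This is almost certainly why the authors left the statement open, and your proposal does not close the gap so much as restate it.

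Two further points in your information-theoretic half also deserve scrutiny. First, ``purify over the Haar measure'' is not the analogue of what the paper does: Theorem~\ref{thm1} purifies over the discrete random function $f$ via $\sum_f\ket{\psi_f}\ket{f}$, and all of the subsequent combinatorics (Claims~\ref{clm2}, \ref{clm6}, Lemma~\ref{lem:helper}) lives on that function register. Your $\ket{\psi_\vartheta}$ is degree two in $\vartheta$, so its $t$-copy Haar average is a $(2t,2t)$-moment object with no comparably clean purification; you would need a genuinely different calculation here, not a translation. Second, condition~\eqref{eq:cond-2} is stated for the full sum $\sum_{x\in\{0,1\}^n}\ket{x}\otimes U_x^T\ket{y}$, but in Construction~1 the second block only ever sees inputs of the restricted form $b0^i$; the object your decoupling step actually needs to factorize is $\sum_{b\in\{0,1\}^{n-i}}\ket{b}\otimes U_{b0^i}^T\ket{y}$, and~\eqref{eq:cond-2} does not obviously control that restricted sum (in the binary-phase case it happens to, because $U_{b0^i}$ simply ignores the last $i$ bits of $y$, but that is a feature of the specific $U_x$, not a consequence of the condition as written).
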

\begin{conjecture}\label{conj:2 can be generlized}
    Let $\{C_k\}_k$ be a PRS that satisfy the Generalization Condition.
    Then \hyperref[fig:construction2]{Constructions 2} is a valid length expansion for $\{C_k\}_k$.
\end{conjecture}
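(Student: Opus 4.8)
The plan is to adapt the purification-technique proof of Theorem~\ref{thm1} from the two-block binary-phase setting to the three-block geometry of \hyperref[fig:construction2]{Construction~2}, replacing every appeal to the explicit phase structure by the abstract hypotheses~\eqref{eq:cond-1} and~\eqref{eq:cond-2}. Write $\ket{\phi_{k_j}}=C_{k_j}\ket{0^n}$ for the three blocks $j\in\{1,2,3\}$; since \hyperref[fig:construction2]{Construction~2} uses three \emph{independent} keys, each source of randomness will eventually be purified into its own register $E_j$, shared across the $t=\poly(\lambda)$ copies handed to the distinguisher. The quantity to control is the trace distance between $\rho_{\mathrm{real}}=\E_{k_1,k_2,k_3}\big[\ketbra{\psi}{\psi}^{\otimes t}\big]$ on $(2n)t$ qubits and the $t$-fold Haar average $\rho_{\mathrm{Haar}}=\E_{\mu}\big[\ketbra{\Psi}{\Psi}^{\otimes t}\big]$, which equals the normalized projector onto the symmetric subspace of $\big(\mathbb{C}^{2^{2n}}\big)^{\otimes t}$; a $\negl(\lambda)$ bound on this distance yields the theorem.

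The first and crucial structural step is to isolate the key dependence onto the three fresh states $\ket{\phi_{k_j}}$. For the two outer blocks this is immediate, as each acts on $\ket{0^n}$ and its output \emph{is} $\ket{\phi_{k_j}}$. The middle block is the difficult one: it applies the unitary $C_{k_3}$ \emph{in place} to the already-entangled output of blocks~$1$ and~$2$, so the bare PRS guarantee --- which concerns only the behaviour on $\ket{0^n}$ --- gives no control, exactly as noted after Theorem~\ref{thm1}. This is where~\eqref{eq:cond-1} is essential: it yields the operator identity $C_{k_3}=\sum_{x}U_x\ketbra{\phi_{k_3}}{x}$ with the $U_x$ \emph{key-independent}, so the middle block's action on any input $\sum_x\beta_x\ket{x}$ is the key-independent operator $\sum_x\beta_x U_x$ applied to $\ket{\phi_{k_3}}$. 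Thus, up to key-independent routing, the whole $t$-copy experiment depends on the keys only through $\ket{\phi_{k_1}}^{\otimes t}$, $\ket{\phi_{k_2}}^{\otimes t}$, and $\ket{\phi_{k_3}}^{\otimes t}$, which is what makes it possible to pass to an ideal experiment in which each family is replaced by a purified truly-random source.

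Next I would purify the three ideal sources into registers $E_1,E_2,E_3$ and analyse the resulting global pure state. Here Claim~\ref{clm1} is used to move freely between purifications (the reduced state on the output registers is unchanged when an isometry acts on a purifying register), and Claim~\ref{clm3} supplies the transpose identity that lets the $U_x$ be pushed onto the purifying side as $U_x^T$. Condition~\eqref{eq:cond-2} is then the linchpin: the product form $\sum_x\ket{x}\otimes U_x^T\ket{y}=V\ket{y}\otimes W\ket{y}$ collapses the sum over input labels into a tensor product, aligning the purifying registers so that the output register and the purifying register become (approximately) maximally entangled. Tracing out $E_1,E_2,E_3$ should then leave, on the $t$ output registers, a state negligibly close to the normalized symmetric-subspace projector, i.e.\ to $\rho_{\mathrm{Haar}}$. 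The choice of overlap $n/2$ between the two layers is what balances the dimension count so that the effective support of the output is all of the $2^{2n}$-dimensional space rather than a proper subspace.

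I expect the final estimate to be the main obstacle, and it is the reason the statement is only conjectured. The alignment produced by~\eqref{eq:cond-2} is exact only on the maximally entangled part of the purification; because the middle block overlaps each outer block in $n/2$ qubits, the registers $E_1,E_2,E_3$ become coupled through the two shared overlap regions, and the maximally entangled structure emerges only up to an error term. Controlling this term calls for a gentle-measurement argument (Lemma~\ref{lem:gentle}) applied to the \emph{joint} $t$-copy purification, together with the channel-monotonicity of trace distance (Fact~\ref{fact2}) to discard the key-independent routing; I would aim for a bound of order $\poly(t)\cdot 2^{-\Omega(n/2)}$, which is $\negl(\lambda)$ precisely when $n/2=\omega(\log\lambda)$. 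Two features make this delicate. First, for a general PRS satisfying the condition the operator $\sum_x U_x\ketbra{\theta}{x}$ obtained after substituting a Haar state $\ket{\theta}$ for $\ket{\phi_{k_3}}$ need no longer be unitary, so the purified analysis must absorb this defect through the approximations rather than an exact hybrid. Second, the error must degrade only polynomially in the number of copies $t$, which forces the gentle-measurement and alignment estimates to be carried out on the entire $t$-copy state at once rather than copy by copy. Establishing that these approximations remain negligible under~\eqref{eq:cond-1}--\eqref{eq:cond-2} alone is the heart of the conjecture.
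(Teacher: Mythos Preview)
The statement you are addressing is explicitly a \emph{conjecture} in the paper: it is listed among Conjectures~\ref{conj:1 can be generlized}--\ref{conj:3 can be generlized} with no accompanying proof, and the paper makes no claim to have established it. There is therefore no ``paper's own proof'' to compare your proposal against. The only related remark in the paper is that Construction~2 \emph{can} be shown valid for the concrete binary-phase PRS via the same purification technique as Theorem~\ref{thm1} (details omitted); the extension to an arbitrary PRS satisfying the Generalization Condition is left open.

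Your write-up is appropriate given this: it is framed as a plan rather than a proof, it correctly invokes the purification technique, Claims~\ref{clm1} and~\ref{clm3}, and conditions~\eqref{eq:cond-1}--\eqref{eq:cond-2} at the right places, and---crucially---you explicitly flag the step that prevents the argument from closing, namely the need to control the coupling of $E_1,E_2,E_3$ through the two overlap regions by a bound that is uniform in $t$ and depends on~\eqref{eq:cond-1}--\eqref{eq:cond-2} alone. That is exactly the gap the paper leaves open, so your identification of the obstacle matches the authors' assessment. In short: no discrepancy with the paper, because the paper offers nothing to compare to; your outline is a faithful description of what a proof would have to do and where it is expected to get stuck.
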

\begin{conjecture}\label{conj:3 can be generlized}
    Let $\{C_k\}_k$ be a PRS that satisfy the Generalization Condition.
    Then \hyperref[fig:construction3]{Constructions 3} is a valid length expansion for $\{C_k\}_k$.
\end{conjecture}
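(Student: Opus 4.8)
The plan is to mirror the purification-based argument behind Theorem~\ref{thm1} (Section~\ref{sec:proof-thm1}), while exploiting the one structural advantage of Construction~3 over Construction~1: the $\ell$ staircase blocks use $\ell$ \emph{independent} keys $k_1,\dots,k_\ell$, so a genuine hybrid argument across the blocks becomes available. I would proceed by induction on $\ell$, the base case $\ell=1$ being the hypothesis that $\{C_k\}_k$ is itself a PRS on $n$ qubits (note $\frac{n}{2}(\ell+1)=n$ when $\ell=1$). Fix the number of adversarial copies $t\in\poly(\lambda)$. The first move is to argue, block by block, that replacing the generated state $C_{k_j}\ket{0^n}$ by a Haar-random state changes the final $t$-copy distinguishing probability by at most $\negl(\lambda)$; summed over the $\ell\le\poly(\lambda)$ independent keys this is still $\negl(\lambda)$. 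The crux, however, is that the PRS guarantee only speaks about the input $\ket{0^n}$, whereas inside the staircase every block after the first acts on its $\frac{n}{2}$-qubit \emph{overlap} register, which carries a superposition coming from the earlier blocks.

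This is exactly what the Generalization Condition is designed to repair. On a computational-basis input $\ket{x}$, condition~\eqref{eq:cond-1} lets me write the block's action as $U_x\,C_k\ket{0^n}$ with $U_x$ \emph{independent of the key}; by linearity the block therefore maps an arbitrary overlap superposition $\sum_x\alpha_x\ket{x}$ to $\sum_x\alpha_x\,U_x\,C_k\ket{0^n}$, which now refers only to the $\ket{0^n}$ behaviour and is amenable to the pseudorandomness replacement above. I would then purify the Haar-randomness of each replaced block into an ancilla $E_j$, following Ma's technique, and use condition~\eqref{eq:cond-2} together with Claims~\ref{clm1} and~\ref{clm3} to rewrite $\sum_x\ket{x}\otimes U_x^T\ket{y}$ as the product $V\ket{y}\otimes W\ket{y}$ (Claim~\ref{clm1} ensuring that such isometric rewritings of the purifying register leave the physical state intact). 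Applied to the purifying register, this factorization is precisely what ``aligns'' the overlap qubits so that, after tracing out the ancillae, the reduced state on the enlarged output resembles a maximally entangled state on $\frac{n}{2}(\ell+1)$ qubits --- the signature of Haar-randomness in the purified picture. I would carry out this rewriting one layer at a time, so that each added block contributes a fresh maximally-entangled-like component on its $\frac{n}{2}$ new qubits while correctly stitching to the $\frac{n}{2}$ overlap it inherits.

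The quantitative work lives in the inductive error bound. Each layer produces the maximally entangled state only approximately --- the defect arising from collisions and normalisation once $t$ copies are present --- and I would control the resulting disturbance with the Gentle Measurement Lemma~\ref{lem:gentle} and Facts~\ref{fact1}--\ref{fact2}, exactly as in the single-layer analysis. The hypothesis $n=\omega(\log\lambda)$ makes each block's $\frac{n}{2}=\omega(\log\lambda)$ fresh qubits supply enough entropy that the per-layer defect is $2^{-\omega(\log\lambda)}=\negl(\lambda)$; combining the $\ell$ contributions by the triangle inequality for trace distance keeps the total negligible provided $\ell=\poly(\lambda)$.

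The main obstacle I anticipate is the interaction between the condition, which is stated only for basis inputs, and the fact that the $j$-th block's overlap register is already entangled with the ancillae $E_1,\dots,E_{j-1}$ of the earlier layers. The delicate step is to show that the identity $C_k\ket{x}=U_x\,C_k\ket{0^n}$ can be applied \emph{coherently through} this carried-over superposition, and that the $V\otimes W$ splitting of~\eqref{eq:cond-2} genuinely decouples the fresh half from the overlap half without reintroducing correlations that spoil the maximally-entangled structure already built by the previous layers. Verifying that this decoupling commutes with the earlier alignment --- rather than unravelling it --- is the heart of the matter, and is precisely where a general condition-satisfying PRS might misbehave even though the explicit binary/general-phase instances (for which $U_x$ is a diagonal phase and $(V,W)=(H,\I)$ or $(QFT,\I)$) cooperate. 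This gap is what keeps the statement a conjecture rather than a theorem.
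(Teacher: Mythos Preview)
The paper does not prove this statement: it is explicitly stated as a conjecture (Conjecture~\ref{conj:3 can be generlized}) with no accompanying argument, so there is no ``paper's own proof'' against which to compare your attempt. You yourself acknowledge this in your final paragraph, correctly identifying that what you have written is a strategy rather than a proof.

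That said, your plan is well-aligned with the paper's philosophy. The use of independent keys in Construction~3 to enable a genuine hybrid argument is exactly the structural advantage the authors highlight (they note explicitly that they ``do not know if [Construction~2] works by re-using the same key''), and your invocation of~\eqref{eq:cond-1} to reduce the action on an arbitrary overlap input to the action on $\ket{0^n}$, followed by~\eqref{eq:cond-2} to align the purifying registers, is precisely the intended role of the Generalization Condition as the paper describes it. The obstacle you isolate --- whether the $V\otimes W$ factorisation at layer $j$ commutes with the alignment already performed at layers $1,\dots,j-1$, rather than unravelling it --- is the right one, and it is indeed the reason the paper leaves this as a conjecture.

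One additional technical point your sketch glosses over: to run the hybrid at block $j$ you need the reduction adversary to efficiently simulate the map $\ket{x}\otimes\ket{\phi}\mapsto\ket{x}\otimes U_{x0^{n/2}}\ket{\phi}$ as a controlled operation, given only black-box copies of $C_{k_j}\ket{0^n}$. The Generalization Condition asserts that each $U_x$ is ``efficient'' but does not explicitly guarantee that the \emph{controlled} family $\{U_x\}_x$ is uniformly efficiently implementable; for the binary- and general-phase examples this is immediate (the $U_x$ are diagonal phases), but for an abstract PRS satisfying the condition this would need to be added as a hypothesis or argued separately.
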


%

\section{Proof of Theorem~\ref{thm1}}
\label{sec:proof1}

In this section we prove Theorem~\ref{thm1}. We start in section~\ref{sec:counting} with some preliminary counting lemmas that will be used in the proof. The main argument is given in Section~\ref{sec:proof-thm1}.

\subsection{Counting lemmas}
\label{sec:counting}

\begin{definition}\label{def:dist-unique}
Let $S$ be a finite set and $t\geq 1$ an integer. For $\ba = (a_1,\ldots,a_t) \in S^t$ we let $\unique(\ba)$  be the set of indices of unique elements among $(a_1,\ldots,a_t)$, i.e.\
\[ \unique(\ba) \,=\, \unique(a_1,\ldots,a_t)\,=\ \big\{ i\in[t]\ :\ a_i\neq a_j\; \forall j\neq i\big\}\;.\]
We also let $\Dist(S;t)$ be the set of $t$-tuples of distinct elements from $S$, i.e.\
\[ \Dist(S;t) \,=\, \big\{ \ba=(a_1,\ldots,a_t)\in S^t\ :\ |\unique(\ba)|=t\big\}\;.\]
When $S=\{0,1\}^n$ for some $n$ we sometimes write $\Dist(n;t)$ for $\Dist(\{0,1\}^n;t)$. 
\end{definition}

\begin{lemma}\label{lem:dist}
For any $n,t\geq 1$, 
\[\big|\Dist(n;t)\big|\geq 2^{nt}\Big(1-\frac{t^2}{2^n}\Big)\;.\]
\end{lemma}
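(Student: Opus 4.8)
The plan is to prove the lower bound $|\Dist(n;t)| \geq 2^{nt}(1 - t^2/2^n)$ by a direct counting argument combined with a union bound. Writing $N = 2^n$, the total number of tuples in $S^t = (\{0,1\}^n)^t$ is exactly $N^t = 2^{nt}$, so it suffices to bound the number of tuples that fail to be in $\Dist(n;t)$, i.e.\ the tuples that contain at least one repeated coordinate. The quantity $1 - |\Dist(n;t)|/2^{nt}$ is precisely the fraction of tuples with a collision, so I would aim to show this fraction is at most $t^2/2^n$.

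\medskip

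First I would set up the union bound over pairs. For each pair of indices $1 \leq i < j \leq t$, let $B_{ij}$ be the set of tuples $\ba \in S^t$ with $a_i = a_j$. The complement of $\Dist(n;t)$ is exactly $\bigcup_{i<j} B_{ij}$, since a tuple has a repeat if and only if some pair of coordinates agrees. Counting directly, $|B_{ij}| = N^{t-1}$: we are free to choose the $t-1$ coordinates other than $j$ arbitrarily, and then $a_j$ is forced to equal $a_i$. Hence $|B_{ij}|/N^t = 1/N$ for each pair. There are $\binom{t}{2} \leq t^2/2$ such pairs, so by the union bound the fraction of tuples with a collision is at most $\binom{t}{2}/N \leq (t^2/2)/2^n$.

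\medskip

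Combining these gives $|\Dist(n;t)| = 2^{nt} - |\bigcup_{i<j} B_{ij}| \geq 2^{nt}(1 - \binom{t}{2}/2^n) \geq 2^{nt}(1 - t^2/2^n)$, which is the claimed bound (in fact with a slightly sharper constant, since $\binom{t}{2} \leq t^2/2 \leq t^2$). The bound is only nontrivial when $t^2 < 2^n$, which is the regime of interest anyway, and the statement remains vacuously true otherwise since $|\Dist(n;t)| \geq 0$.

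\medskip

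I do not anticipate any serious obstacle here; this is an elementary birthday-bound style estimate and the only thing to be careful about is using the loose inequality $\binom{t}{2} \leq t^2$ (rather than the exact $\binom{t}{2} = t(t-1)/2$) to match the form stated in the lemma. If one wanted the exact product formula instead of a union bound, one could alternatively write $|\Dist(n;t)| = N(N-1)\cdots(N-t+1) = N^t \prod_{j=0}^{t-1}(1 - j/N)$ and lower-bound the product via $\prod_{j=0}^{t-1}(1-j/N) \geq 1 - \sum_{j=0}^{t-1} j/N = 1 - \binom{t}{2}/N$; but the union bound is cleaner and arrives at the same place, so that is the route I would take.
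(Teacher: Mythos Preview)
Your proposal is correct. The paper's proof takes the alternative route you mention at the end---writing $|\Dist(n;t)| = 2^n(2^n-1)\cdots(2^n-(t-1))$ and using $\prod_j(1-j/2^n)\geq 1-\sum_j j/2^n$---but your union-bound argument is equally valid and arrives at the same intermediate bound $1-\binom{t}{2}/2^n$.
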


\begin{proof}
We have
\begin{align*}
|\Dist(n;t)| &= 2^n(2^n-1)\cdots (2^n-(t-1))\\
&= 2^{nt}\Big(1-\frac{1}{2^n}\Big)\cdots\Big(1-\frac{t-1}{2^n}\Big)\\
&\geq 2^{nt}\Big(1-\frac{1+\cdots+t-1}{2^n}\Big)\;.
\end{align*}
\end{proof}

\begin{lemma}\label{lem:rpi2}
Let $S$ be a finite set, $t\geq 1$ an integer and $(a_1,\ldots,a_t) \in S^t$. Let $k=|\{a_1,\ldots,a_t\}|$ be the number of distinct elements in the tuple. Let 
\begin{equation}\label{eq:def-perm}
 \ket{\{a_1,\ldots,a_t\}} \,=\, \frac{1}{\sqrt{t!}}\sum_{\pi\in \mathfrak{S}_t} R_\pi \ket{a_1,\ldots,a_t}\;.
\end{equation}
Then 
\[ \big\|\ket{\{a_1,\ldots,a_t\}} \big\|^2 \,\leq\, (t-k+1)!\;.\]
\end{lemma}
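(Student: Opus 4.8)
The plan is to evaluate the squared norm exactly as the size of the stabilizer of the tuple under coordinate permutations, and then bound that stabilizer by a short combinatorial optimization.

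First I would expand the squared norm using the fact that the register-permutation operators $R_\pi$ form a unitary representation of $\mathfrak{S}_t$:
\[
\big\|\ket{\{a_1,\ldots,a_t\}}\big\|^2 \,=\, \frac{1}{t!}\sum_{\pi,\sigma\in\mathfrak{S}_t} \bra{a_1,\ldots,a_t} R_\pi^\dagger R_\sigma \ket{a_1,\ldots,a_t}\;.
\]
Since $R_\pi^\dagger R_\sigma = R_{\pi^{-1}\sigma}$, for each fixed $\pi$ the substitution $\tau=\pi^{-1}\sigma$ makes the summand independent of $\pi$, so the outer sum over $\pi$ just cancels the prefactor $1/t!$ and leaves
\[
\big\|\ket{\{a_1,\ldots,a_t\}}\big\|^2 \,=\, \sum_{\tau\in\mathfrak{S}_t}\bra{\ba}R_\tau\ket{\ba}\;.
\]
Each overlap $\bra{\ba}R_\tau\ket{\ba}$ is a product of indicator inner products $\prod_i \langle a_i|a_{\tau^{-1}(i)}\rangle$, hence equals $1$ exactly when $\tau$ fixes the tuple (i.e.\ $a_{\tau(j)}=a_j$ for all $j$) and $0$ otherwise.

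Next I would identify the set of such $\tau$ with the stabilizer subgroup of $\ba$ under the coordinate action of $\mathfrak{S}_t$. This subgroup factorizes as a direct product of symmetric groups acting within the blocks of indices carrying equal values. Writing $m_1,\ldots,m_k$ for the multiplicities of the $k$ distinct values (so $m_j\geq 1$ and $\sum_{j=1}^k m_j = t$), this yields the exact identity
\[
\big\|\ket{\{a_1,\ldots,a_t\}}\big\|^2 \,=\, \prod_{j=1}^k m_j!\;.
\]
The lemma then reduces to the purely combinatorial inequality $\prod_{j=1}^k m_j! \leq (t-k+1)!$ under the constraints $m_j\geq 1$ and $\sum_j m_j=t$. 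I would prove this by a mass-shifting argument: whenever two parts satisfy $m_i\geq m_j\geq 2$, replacing $(m_i,m_j)$ by $(m_i+1,m_j-1)$ multiplies the product of factorials by $(m_i+1)/m_j>1$, so the product strictly increases. Iterating until all but one part equal $1$ shows the maximum is attained at $(t-k+1,1,\ldots,1)$, giving the bound $(t-k+1)!$.

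I expect the only genuinely substantive step to be the first reduction---recognizing that the cross terms collapse after reparametrizing by $\tau$, and that $\bra{\ba}R_\tau\ket{\ba}\in\{0,1\}$ precisely counts the value-preserving permutations. The combinatorial maximization is routine once stated. The main point to be careful about is the convention for $R_\pi$ and confirming the stabilizer really has order $\prod_j m_j!$; everything else is bookkeeping.
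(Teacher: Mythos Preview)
Your proof is correct and essentially follows the paper's approach: both reduce the squared norm to the size of the stabilizer $\prod_j m_j!$ (the paper phrases this via equivalence classes of permutations giving identical states, which are precisely the cosets of your stabilizer) and then invoke the bound $\prod_j m_j!\le (t-k+1)!$. Your reparametrization $\tau=\pi^{-1}\sigma$ is a slightly slicker way to reach the stabilizer count than the paper's orthogonality-between-cosets computation, and you additionally supply the mass-shifting argument for the combinatorial inequality, which the paper simply asserts.
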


\begin{proof}
For $i=1,\ldots,k$ let $a'_i$ be the $i$-th element of $\{a_1,\ldots,a_t\}$ in lexicographic order, and define $S_i = \{j:\ a_j=a'_i\}$. For permutation $\pi,\pi' \in \mathfrak{S}_t$, $R_\pi \ket{a_1,\ldots,a_t} = R_{\pi'} \ket{a_1,\ldots,a_t}$ if $\pi^{-1}\circ \pi'$ decomposes as a product of permutations, each of which is identity outside of a set $S_i$ for some $i$. Write $\pi\sim \pi'$ if this condition is satisfied. For any permutation $\pi$ there are at most $|S_1|! \cdots |S_k|!$ permutations $\pi'$ such that $\pi\sim\pi'$. This is an equivalence relation, hence $\mathfrak{S}_k$ is a disjoint union of equivalence classes, each of which has at most $|S_1|! \cdots |S_k|!\leq (t-k+1)!$ elements. Furthermore, if $\pi$, $\pi'$ are in \emph{different} equivalence classes then $R_{\pi} \ket{a_1,\ldots,a_t}$ and $R_{\pi'} \ket{a_1,\ldots,a_t}$ are orthogonal. Letting $E_1,\ldots,E_r$ be the equivalence classes, 
\begin{align*}
\big\|\ket{\{a_1,\ldots,a_t\}} \big\|^2 &= \frac{1}{t!} \sum_j \Big\| \sum_{\pi\in E_j} R_\pi \ket{a_1,\ldots,a_t}\Big\|^2\\
&= \frac{1}{t!} \sum_j |E_j|^2 \\
&\leq (t-k+1)!\frac{1}{t!} \sum_j |E_j|\\
&= (t-k+1)!\;.
\end{align*}
\end{proof}


\subsection{Proof of Theorem~\ref{thm1}}\label{sec:proof-thm1}

We now prove Theorem~\ref{thm1}. The main idea in the proof is to use the purification technique introduced in~\cite{ma2024construct}; however, due to the nature of our construction, where the same cryptographic key is used for both building blocks, there are some additional technical challenges to the argument. 

First note that given $k \in \mathcal{K}$ the algorithm to create $\ket{\psi_k}$ is efficient as it calls twice $PRS_k$, which is an efficient quantum algorithm by the definition of PRS. In addition our algorithm use one layer of Hadamard gates at the end. Therefore overall this algorithm is an efficient quantum algorithm.  

Hence, in order to show that $\{\ket{\psi_k}\}_k$ is a PRS we need to show that:
\[\forall t = \poly(\lambda)\ ,\quad \E_{k}[\ketbra{\psi_k}{\psi_k}^{\otimes t}] \,\overset{c}{\approx}\, \E_{\mu\leftarrow Haar}[\ketbra{\mu}{\mu}^{\otimes t}]\;.\]
In the binary-phase construction $k$ determines a quantum secure PRF $f'$.
We denote $\ket{\psi_k}$ as $\ket{\psi_{f'}}$ i.e: 
$\E_{k}[\ketbra{\psi_k}{\psi_k}^{\otimes t}] = \E_{f'}[\ketbra{\psi_{f'}}{\psi_{f'}}^{\otimes t}]$.
We can replace the quantum secure PRF $f'$ with a random function $f$ to get a state $\ket{\psi_f}$. It is not hard to show that
\begin{equation}
    \E_{k}[\ketbra{\psi_k}{\psi_k}^{\otimes t}] = \E_{f'}[\ketbra{\psi_{f'}}{\psi_{f'}}^{\otimes t}] \overset{c}{\approx} 
    \E_{f\in \{0,1\}^N}[\ketbra{\psi_{f}}{\psi_{f}}^{\otimes t}]\;,
\end{equation}
where the computational indistinguishability $\overset{c}{\approx}$ follows from quantum security of the PRF. 
Therefore showing $\E_{f\in \{0,1\}^N}[\ketbra{\psi_{f}}{\psi_{f}}^{\otimes t}] \overset{c}{\approx} \E_{\mu\leftarrow Haar}[\ketbra{\mu}{\mu}^{\otimes t}]$ will finish the proof.
First note that, setting $N=2^n$,
\begin{equation}
\label{exp3}
    \E_{f\in \{0,1\}^N}[\ketbra{\psi_{f}}{\psi_{f}}^{\otimes t}] = \frac{1}{\sqrt{2^N}}\Tr_E\Big[\sum_f \ket{\psi_f}^{\otimes t} \otimes \ket{f}_E \Big]\;.
\end{equation}
Here we mean the partial trace on the density matrix of this pure state as in Definition \ref{def1}, namely
\begin{equation*}
    \Tr_E\Big[\sum_f \ket{\psi_f}^{\otimes t} \otimes \ket{f}_E \Big] \deff 
    \Tr_E\Big[\Big(\sum_f \ket{\psi_f}^{\otimes t} \otimes \ket{f}_E \Big)\cdot \Big(\sum_f \bra{\psi_f}^{\otimes t} \otimes \bra{f}_E \Big)\Big]\;.
\end{equation*}
Now let us analyze $\ket{\psi_f}$ based on the fact that $PRS_k$ is the binary-phase PRS, implemented as described in Definition~\ref{def:prs-algo}.
\begin{align}
\ket{0}^{\otimes n+i} \xrightarrow{PRS_k\otimes \I^{\otimes i}}\quad&
         \frac{1}{\sqrt{2^n}} \sum_{x\in \{0,1\}^n} (-1)^{f(x)} \ket{x} \otimes \ket{0}^{\otimes i} \notag \\ = \quad &
         \frac{1}{\sqrt{2^n}} \sum_{x'\in \{0,1\}^i} \sum_{x''\in \{0,1\}^{n-i}} (-1)^{f(x'x'')} \ket{x'}\ket{x''} \otimes \ket{0}^{\otimes i}\label{eq:xpp-0} \\
         \xrightarrow{\I^{\otimes i} \otimes PRS_k}\quad & 
         \frac{1}{2^n} \sum_{x'\in \{0,1\}^i} \sum_{x''\in \{0,1\}^{n-i}} \sum_{y \in \{0,1\}^n} (-1)^{f(x'x'') + y\cdot (x''0^i) + f(y)} \ket{x'} \ket{y}\label{eq:ypp-0}\\
         \xrightarrow{H^{\otimes (n+i)}}\quad&
         \frac{1}{2^n} \sum_{x'\in \{0,1\}^i} \sum_{x''\in \{0,1\}^{n-i}} \sum_{y \in \{0,1\}^n} (-1)^{f(x'x'') + y\cdot (x''0^i) + f(y)} H\left(\ket{x'}\ket{y}\right)\;.\notag
\end{align}
The state in the partial trace on the right-hand side of Eq. \eqref{exp3} is thus
\begin{align}
        &\sum_f \ket{\psi_f}^{\otimes t} \otimes \ket{f}_E \notag\\
        &=
        \sum_f \frac{1}{2^{tn}} \sum_{\substack{x'_1,...,x'_t \in \{0,1\}^i\\x''_1,...,x''_t \in \{0,1\}^{n-i}\\y_1,...,y_t \in \{0,1\}^n}} (-1)^{\sum_{j=1}^t f(x'_j x''_j) + y_j\cdot (x''_j 0^i) + f(y_j)} H\big(\ket{x'_1}\ket{y_1} \otimes ... \otimes \ket{x'_t}\ket{y_t}\big)
        \otimes \ket{f}_E\;.    \label{eq13}
\end{align}
Let $\mathbf{x'} \deff$ $(x'_1,...,x'_t)$ (i.e.\ $\mathbf{x'}$ is a vector of $t$ elements s.t each element is a bit string of size $i$), $\mathbf{x''} \deff$ $(x''_1,...,x''_t)$ (i.e.\ $\mathbf{x''}$ is a vector of $t$ elements s.t each element is a bit string of size $n-i$), and $\mathbf{y} \deff$ $(y_1,...,y_t)$ (i.e.\ $\mathbf{y}$ is a vector of $t$ elements s.t each element is a bit string of size $n$).
For simplicity denote
\begin{equation}
\label{phi}
    \ket{\phi_{\mathbf{x'},\mathbf{y}}} \,\deff\, H\big(\ket{x'_1}\ket{y_1} \otimes ... \otimes \ket{x'_t}\ket{y_t}\big)\;.
\end{equation}

\begin{claim}
\label{clm5} 
\[ \E_{f\in \{0,1\}^N}[\ketbra{\psi_{f}}{\psi_{f}}^{\otimes t}] \,=\, \Tr_E\big[ \ket{\psi_{all}} \big]\;,\]
where $\ket{\psi_{all}}$ is the normalized state 
\[ \ket{\psi_{all}}\,=\,\frac{1}{\sqrt{2^{(n+i)t}}} \sum_{\mathbf{x', y}} \ket{\phi_{\mathbf{x'},\mathbf{y}}}\ket{\phi'_{\bxp,\by}}_E\;,\]
with  the state     $\ket{\phi_{\mathbf{x'},\mathbf{y}}}$ defined in~\eqref{phi} and 
				\[ \ket{\phi'_{\bxp,\by}} \,=\, \frac{1}{\sqrt{2^{(n-i)t}}}\sum_{\bxpp} (-1)^{\sum_{j=1}^t y_j\cdot (x''_j 0^i)} \ket{e_{x'_1x''_1}\oplus e_{y_1}\oplus \cdots \oplus e_{x'_tx''_t}\oplus e_{y_t}}\;.\]
\end{claim}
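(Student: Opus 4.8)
The plan is to exhibit $\ket{\psi_{all}}$ as the image, under an isometry acting only on the purifying register $E$, of the ``obvious'' purification of $\E_{f}[\ketbra{\psi_f}{\psi_f}^{\otimes t}]$, so that Claim~\ref{clm1} immediately yields that the two share the same reduced state. That obvious purification is the pure state appearing inside the partial trace in~\eqref{exp3}, namely
\[
\ket{\Psi} \,=\, \frac{1}{\sqrt{2^N}}\sum_f \ket{\psi_f}^{\otimes t}\otimes\ket{f}_E\;.
\]
Since $\{\ket{f}\}_f$ is orthonormal we have $\Tr_E[\ketbra{\Psi}{\Psi}] = \E_{f}[\ketbra{\psi_f}{\psi_f}^{\otimes t}]$, and the same computation shows $\ket{\Psi}$ is a unit vector because each $\ket{\psi_f}$ is. It therefore suffices to transform $\ket{\Psi}$ into $\ket{\psi_{all}}$ by an isometry on $E$.

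First I would expand $\ket{\psi_f}^{\otimes t}$ using~\eqref{eq13} and separate the phase into its $f$-independent part $(-1)^{\sum_j y_j\cdot(x''_j0^i)}$ and its $f$-dependent part $(-1)^{\sum_j [f(x'_jx''_j)+f(y_j)]}$, so that
\[
\ket{\Psi} \,=\, \frac{1}{2^{tn}}\sum_{\bxp,\bxpp,\by}(-1)^{\sum_j y_j\cdot(x''_j0^i)}\ket{\phi_{\bxp,\by}}\otimes\Big(\frac{1}{\sqrt{2^N}}\sum_f(-1)^{\sum_j[f(x'_jx''_j)+f(y_j)]}\ket{f}_E\Big)\;.
\]
The crux is that the $f$-dependent phase is \emph{linear} in $f$: viewing $f=(f(z))_{z\in\{0,1\}^n}\in\{0,1\}^N$, we have $\sum_j[f(x'_jx''_j)+f(y_j)] = c\cdot f$, where $c = e_{x'_1x''_1}\oplus e_{y_1}\oplus\cdots\oplus e_{x'_tx''_t}\oplus e_{y_t}\in\{0,1\}^N$ is the XOR of indicator vectors, recording for each $z$ the parity of the number of occurrences of $z$ among the $2t$ strings $x'_1x''_1,\ldots,x'_tx''_t,y_1,\ldots,y_t$. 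By the Hadamard identity $H^{\otimes N}\ket{c}=2^{-N/2}\sum_f(-1)^{c\cdot f}\ket{f}$, the bracketed vector equals $H^{\otimes N}\ket{c}$.

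I would then apply the isometry $A=H^{\otimes N}$ to register $E$. As $H^{\otimes N}$ is an involution, this replaces each bracketed vector $H^{\otimes N}\ket{c}$ by $\ket{c}$, producing
\[
(\I\otimes H^{\otimes N})\ket{\Psi} \,=\, \frac{1}{2^{tn}}\sum_{\bxp,\bxpp,\by}(-1)^{\sum_j y_j\cdot(x''_j0^i)}\ket{\phi_{\bxp,\by}}\otimes\ket{e_{x'_1x''_1}\oplus e_{y_1}\oplus\cdots\oplus e_{x'_tx''_t}\oplus e_{y_t}}_E\;.
\]
Collecting the sum over $\bxpp$ into the definition of $\ket{\phi'_{\bxp,\by}}$ and matching prefactors ($2^{-tn}=2^{-(n+i)t/2}\cdot 2^{-(n-i)t/2}$) shows this is exactly $\ket{\psi_{all}}$. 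Claim~\ref{clm1}, applied with the isometry $A=H^{\otimes N}$, then gives $\Tr_E[\ket{\psi_{all}}] = \Tr_E[\ketbra{\Psi}{\Psi}] = \E_{f}[\ketbra{\psi_f}{\psi_f}^{\otimes t}]$, which is the claim.

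Since everything reduces to bookkeeping, I do not expect a deep obstacle; the one delicate point is the identification of the $f$-dependent phase with the inner product $c\cdot f$ against the XOR of indicator vectors, together with the correct use of the Hadamard transform --- in particular that applying $H^{\otimes N}$ on $E$ \emph{collapses}, rather than creates, the Fourier superposition over $f$. Normalization of $\ket{\psi_{all}}$ needs no separate check, as isometries preserve norm and $\ket{\Psi}$ is already a unit vector.
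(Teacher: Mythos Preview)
Your proof is correct and follows essentially the same route as the paper: expand the canonical purification $\frac{1}{\sqrt{2^N}}\sum_f\ket{\psi_f}^{\otimes t}\ket{f}_E$, rewrite the $f$-dependent phase as $(-1)^{f\cdot c}$ with $c=e_{x'_1x''_1}\oplus\cdots\oplus e_{y_t}$, recognize the resulting $E$-register state as $H^{\otimes N}\ket{c}$, and apply Claim~\ref{clm1} with the isometry $H^{\otimes N}$ to obtain $\ket{\psi_{all}}$. The paper's proof is identical up to notation (it writes $H$ for the Hadamard on the $N$-bit register).
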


\begin{proof}
Using~\eqref{eq13} and then~\eqref{phi},
\begin{align}
        \frac{1}{\sqrt{2^N}} \sum_f \ket{\psi_f}^{\otimes t} \otimes \ket{f}_E
&=    \frac{1}{\sqrt{2^N}} \sum_f \frac{1}{2^{tn}} \sum_{\substack{x'_1,...,x'_t \in \{0,1\}^i\\ x''_1,...,x''_t \in \{0,1\}^{n-i}\\ y_1,...,y_t \in \{0,1\}^n}} (-1)^{\sum_{j=1}^t f(x'_j x''_j) + y_j\cdot (x''_j 0^i) + f(y_j)} \ket{\phi_{\mathbf{x'},\mathbf{y}}}
    \otimes \ket{f}_E \notag\\
&=    \frac{1}{ 2^{tn}} \sum_{\mathbf{x', x'', y}}  \ket{\phi_{\mathbf{x'},\mathbf{y}}}
    \otimes \frac{1}{\sqrt{2^N}} \sum_f (-1)^{\sum_{j=1}^t f(x'_j x''_j) + y_j\cdot (x''_j 0^i) + f(y_j)} \ket{f}_E \notag\\
&= \frac{1}{2^{tn}} \sum_{\mathbf{x', x'', y}} \ket{\phi_{\mathbf{x'},\mathbf{y}}}
    \otimes (-1)^{\sum_{j=1}^t y_j\cdot (x''_j 0^i) } \frac{1}{\sqrt{2^N}}\sum_f (-1)^{\sum_{j=1}^t f(x'_j x''_j) + f(y_j)} \ket{f}_E\;.
  \label{exp1}
\end{align}

The function $f$ can be represented as a long vector of $N=2^n$ entries such that the $i$-th entry 
represents $f(i) \in \{0,1\}$ and can be accessed by the inner product: $f\cdot e_i = f(i)$. In this notation, we can rewrite 
\begin{equation*}
  (-1)^{\sum_{j=1}^t f(x'_ix''_j) + f(y_j)} \,=\, (-1)^{f\cdot(e_{x'_1x''_1}\oplus e_{y_1}...\oplus e_{x'_tx''_t}\oplus e_{y_t})}\;.
\end{equation*}
Plugging this in to Eq. \eqref{exp1} we get:
\begin{equation*}
         \frac{1}{2^{tn}} \sum_{\mathbf{x', x'', y}} \ket{\phi_{\mathbf{x'},\mathbf{y}}}
        \otimes \underbrace{(-1)^{\sum_{j=1}^t y_j\cdot (x''_j 0^i) } \frac{1}{\sqrt{2^N}} \sum_f (-1)^{f\cdot (e_{x'_1x''_1}\oplus e_{y_1}...\oplus e_{x'_tx''_t}\oplus e_{y_t})} \ket{f}_E}_{\text{the purifying register}}\;.
\end{equation*}
Recall that according to Claim \ref{clm1} we can apply any unitary or isometry to the 
purifying register and still get a purification of the original state.
Also,
\[H\ket{e_{x'_1x''_1}\oplus e_{y_1}...\oplus e_{x'_tx''_t}\oplus e_{y_t}} = \frac{1}{\sqrt{2^N}}\sum_f (-1)^{f\cdot (e_{x'_1x''_1}\oplus e_{y_1}...\oplus e_{x'_tx''_t}\oplus e_{y_t})} \ket{f}\;.\]
So by applying $H$ on the purifying register we get another purification for $\E_{f\in \{0,1\}^N}[\ketbra{\psi_{f}}{\psi_{f}}^{\otimes t}]$,
\begin{equation}
\label{exp2}
         \frac{1}{2^{tn}} \sum_{\mathbf{x', x'', y}} \ket{\phi_{\mathbf{x'},\mathbf{y}}}
        \otimes (-1)^{\sum_{j=1}^t y_j\cdot (x''_j 0^i)} \ket{e_{x'_1x''_1}\oplus e_{y_1}...\oplus e_{x'_tx''_t}\oplus e_{y_t}} = \ket{\psi_{\textit{all}}}\;.
\end{equation}
This shows the claim.
\end{proof}

\begin{claim}\label{claim:5b}
For every $\bxp,\by$ let $\ket{\phi_{\bxp,\by}}$ be as in~\eqref{phi} and let
\begin{align*}
 \ket{\tilde{\phi}'_{\bxp,\by}} &= \frac{1}{\sqrt{2^{(n-i)t}}}\sum_{\bxpp\in \Dist(n-i;t)} (-1)^{\sum_{j=1}^t y_j\cdot (x''_j 0^i)} \ket{e_{x'_1x''_1}\oplus e_{y_1}\oplus \cdots \oplus e_{x'_tx''_t}\oplus e_{y_t}}\;,\\
\ket{{\psi}_{all}'} &= \frac{1}{\sqrt{2^{(n+i)t}}}\sum_{\substack{\bxp,\by\\ \by_>\in \Dist(n-i;t)}} \ket{\phi_{\bxp,\by}} \otimes \ket{\tilde{\phi}'_{\bxp,\by}} \;,
\end{align*}
where $\by_> = (y_{1,>},\ldots,y_{j,>})$ and for each $j$, $y_{j,>}$ contains the last $(n-i)$ bits of $y_j$. 
Then 
\[ \big\| \ket{\psi_{all}} - \ket{{\psi}_{all}'} \big\|^2\,=\,\negl(\lambda)\;.\]
\end{claim}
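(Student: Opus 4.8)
The plan is to bound $\|\ket{\psi_{all}}-\ket{\psi_{all}'}\|^2$ by direct computation, exploiting a clean structural observation: once $\ket{\phi'_{\bxp,\by}}$ and $\ket{\tilde{\phi}'_{\bxp,\by}}$ are expanded into their defining sums over $\bxpp$, the two states carry the \emph{same} global prefactor $2^{-nt}$ and identical coefficients $(-1)^{\sum_j y_j\cdot(x''_j 0^i)}$ on each basis term $\ket{\phi_{\bxp,\by}}\otimes\ket{e_{x'_1x''_1}\oplus e_{y_1}\oplus\cdots}$. Hence $\ket{\psi_{all}'}$ is exactly the restriction of the sum defining $\ket{\psi_{all}}$ to the ``good'' index set $G=\{(\bxp,\bxpp,\by):\ \bxpp\in\Dist(n-i;t),\ \by_>\in\Dist(n-i;t)\}$, and the difference is the sum over the complementary ``bad'' set.

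First I would use that the states $\ket{\phi_{\bxp,\by}}$ of~\eqref{phi} form an orthonormal family as $(\bxp,\by)$ ranges over all values, since they are the image under the unitary $H^{\otimes(n+i)t}$ of distinct computational-basis states. This collapses the cross terms in $\|\ket{\psi_{all}}-\ket{\psi_{all}'}\|^2$ to those with matching $(\bxp,\by)$, reducing the quantity to $2^{-2nt}\sum_{\bxp,\by}\|\ket{r_{\bxp,\by}}\|^2$, where $\ket{r_{\bxp,\by}}$ is the $E$-register sum over the \emph{removed} values of $\bxpp$ (all $\bxpp$ when $\by_>\notin\Dist(n-i;t)$, and $\bxpp\notin\Dist(n-i;t)$ when $\by_>\in\Dist(n-i;t)$). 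I would then count the removed tuples with Lemma~\ref{lem:dist}: because $n-i=\omega(\log\lambda)$ and $t=\poly(\lambda)$, the fraction of $t$-tuples of $(n-i)$-bit strings containing a repeat is at most $t^2/2^{n-i}=\negl(\lambda)$, so the ``diagonal'' part of each $\|\ket{r_{\bxp,\by}}\|^2$ — the contribution of the mutually orthonormal basis vectors $\ket{e_v}$ — sums to $O(t^2/2^{n-i})=\negl(\lambda)$.

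The delicate part, which I expect to be the main obstacle, is the off-diagonal part of $\|\ket{r_{\bxp,\by}}\|^2$: distinct tuples $\bxpp\neq\tilde{\bxpp}$ can produce the \emph{same} vector $v=\bigoplus_j e_{x'_jx''_j}\oplus\bigoplus_j e_{y_j}$ (precisely when $\tilde{\bxpp}$ is a prefix-preserving permutation of $\bxpp$), so the $\ket{e_v}$ are not orthogonal and a naive count of removed terms overcounts by as much as a factorial in $t$. The resolution must use the phases: the key is to carry out the sum over $\by$ \emph{before} taking absolute values, since for a colliding pair the phase difference is $\sum_j \hat{y}_j\cdot(x''_j\oplus\tilde{x}''_j)$, with $\hat{y}_j$ the first $n-i$ bits of $y_j$, and the character sum $\sum_{\by}(-1)^{\sum_j \hat{y}_j\cdot(x''_j\oplus\tilde{x}''_j)}$ factorizes and vanishes unless $\bxpp=\tilde{\bxpp}$. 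Thus over the \emph{unrestricted} $\by$-sum the off-diagonal cancels exactly, and under the distinctness restriction on $\by_>$ it survives only through the excluded tuples, which are again negligibly few.

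Concretely I would split the bits of each $y_j$ into the first $i$ bits (which are free, i.e.\ do not enter the $\by_>$-distinctness constraint) and the remaining bits; summing over the free bits already forces a colliding pair to agree on the first $i$ bits of the relevant suffixes, which, combined with the prefix-preserving structure, suppresses the number of surviving colliding pairs enough to dominate their multiplicity. This is where the regime of small $i$ (so that the phase-relevant and distinctness-relevant bits of $\by$ overlap) is most delicate and where the argument becomes genuinely hands-on; Lemma~\ref{lem:rpi2} should be useful to package the permutation multiplicities cleanly. Combining the negligible diagonal bound with the negligible post-cancellation off-diagonal bound, over both the ``bad $\by_>$'' and ``bad $\bxpp$'' contributions, yields $\|\ket{\psi_{all}}-\ket{\psi_{all}'}\|^2=\negl(\lambda)$.
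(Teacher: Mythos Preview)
Your plan is correct in spirit but takes a much harder road than the paper does. The paper's proof of this claim is essentially three lines, and the key idea is one you did not consider: instead of attacking the final state $\ket{\psi_{all}}$ directly, go back to the \emph{preparation procedure}. At step~\eqref{eq:xpp-0} (just after the first $PRS_k$) the tuple $\bxpp$ sits in the computational basis with uniform amplitudes, so a projective measurement onto $\bxpp\in\Dist(n-i;t)$ succeeds with probability exactly $|\Dist(n-i;t)|/2^{(n-i)t}=1-\negl(\lambda)$; likewise at step~\eqref{eq:ypp-0} the tuple $\by$ (hence $\by_>$) is in the computational basis with uniform-magnitude amplitudes. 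Two applications of the gentle measurement lemma (Lemma~\ref{lem:gentle}), plus the fact that every subsequent step is a unitary or isometry and hence norm-preserving, give $\|\ket{\psi_{all}}-\ket{\psi_{all}'}\|=\negl(\lambda)$ immediately. No collisions, no off-diagonal terms, no phase cancellation.

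Your direct-computation route is not wrong, but the ``delicate part'' you correctly flag --- colliding $\bxpp,\tilde{\bxpp}$ producing the same $\ket{e_v}$, requiring phase cancellation over the $\by$-sum interleaved with the $\by_>$-distinctness constraint --- is exactly the kind of hands-on analysis the paper packages into Lemma~\ref{lem:helper} and deploys only later, in Claim~\ref{clm2}. There the restriction is the coupled condition $(\bxp,\bxpp,\by)\in\Dist$, which cannot be tested at any single point of the preparation procedure, so no gentle-measurement shortcut is available and the heavy machinery is genuinely needed. For Claim~\ref{claim:5b} the two restrictions decouple into ``$\bxpp$ distinct'' and ``$\by_>$ distinct'', each testable at its own stage of the circuit, and that is what makes the paper's argument so short. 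Your approach would eventually get there, but you would essentially be proving Lemma~\ref{lem:helper} twice; the lesson worth internalizing is that when a restriction can be imposed upstream in a preparation where the relevant register is still classical, gentle measurement plus unitary invariance is almost always the cheapest way to push it downstream.
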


\begin{proof}
If, in the procedure that leads to the preparation of $\ket{\psi_{all}}$, at step~\eqref{eq:xpp-0} (after application of the first $PRS_k$), we perform a binary projective measurement on the register containing $x''$ to determine if $x''\in \Dist(n-i;t)$ or not, we will obtain the first outcome with probability $|\Dist(n-i;t)|/2^{(n-i)t} = 1-\negl(\lambda)$ by Lemma~\ref{lem:dist}, and the second outcome with probability $\negl(\lambda)$. The same reasoning applies if the measurement is applied at step~\eqref{eq:ypp-0} (after application of the second $PRS_k$), on the registers containing $\by_>$. 

Furthermore, conditioned on both measurements being performed and the first outcome being obtained both times (but without renormalizing), following all remaining steps we see that the state 
$\ket{{\psi}_{all}'}$
will be prepared. Since all steps of the preparation procedure are unitary, and unitaries contract the trace distance, the claim follows from two applications of the gentle measurement lemma, Lemma~\ref{lem:gentle}. 
\end{proof}

Note that $e_{x'_1x''_1}\oplus e_{y_1}...\oplus e_{x'_tx''_t}\oplus e_{y_t} \in \{0,1\}^N$ is a long vector of $N$ entries.
Recall the notation introduced in Definition~\ref{def:dist-unique} and define the set 
\begin{equation}\label{Dist def}
    \Dist \deff \big\{(\bxp,\bxpp,\byl,\byr)\in\{0,1\}^{it}\times\{0,1\}^{(n-i)t}\times\{0,1\}^{it}\times \{0,1\}^{(n-i)t}  \ :\ \bxpp\circ\byr\in \Dist(n-i;2t)\big\}\;.
\end{equation} 
In other words (and slightly rearranging indices),  $\Dist =\{0,1\}^{2it} \times \Dist(n-i,2t)$. We will sometimes write that a tuple $(\bxp,\bxpp,\by)\in \Dist$ if $\by=(\byl,\byr)\in\{0,1\}^{it}\times\{0,1\}^{(n-i)t}$. By Lemma~\ref{lem:dist},
\begin{equation}\label{eq:dist-size}
|\Dist| \,=\, 2^{2it}\big|\Dist(n-i;2t)\big| \,\geq\, N^{2t}\Big(1-\frac{O(t^2)}{N}\Big)\;.
\end{equation}

Before proceeding with the proof we first establish a slightly more general lemma, that will be used twice. It may be better for the reader to skip the lemma at first, and return to it the first time it is used.

\begin{lemma}\label{lem:helper}
For $T,S\subseteq\{0,1\}^n$ and $\by\in(\{0,1\}^{n})^{t}$ let $\ket{u_{T,\by}}$ and $\ket{v_S}$ be such that
\begin{enumerate}
\item[(i)] For any $T$ and $\by$, $\|\ket{u_{T,\by}}\|^2\leq t^{t-|T|+1}$;
\item[(ii)] For any $S$, $\|\ket{v_S}\|^2 \leq 1$; 
\item[(iii)] For $T,T'$ such that $T\neq T'$ and $T\cap \{y_1,\ldots,y_t\}=T'\cap\{y_1,\ldots,y_t\}=\emptyset$, $\ket{u_{T,\by}}$ and $\ket{u_{T',\by}}$ are orthogonal.
\end{enumerate}
Then 
\begin{align*}
  \sum_{\substack{\bxp, \by\\\by\in \Dist(n;t)}} \Big\| \sum_{\substack{\mathbf{x''} \\ (\bxp,\bxpp,\by)\notin\Dist}} (-1)^{\sum_{j=1}^t y'_j\cdot x''_j} \ket{u_{\{x_1,\ldots,x_t\}\backslash\{y_1,\ldots,y_t\},\by}}\ket{v_{\{x_1,\ldots,x_t\}\cap\{y_1,\ldots,y_t\}}} \Big\|^2 \,=\, N^{2t} \negl(\lambda)\;.
\end{align*}
\end{lemma}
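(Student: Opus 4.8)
The plan is to fix the outer summation variables $\bxp$ and $\by$ (with $\by\in\Dist(n;t)$), abbreviate $X=\{x_1,\ldots,x_t\}$, $Y=\{y_1,\ldots,y_t\}$ with $x_j=x'_jx''_j$, write the inner vector as
\[
\ket{w_{\bxp,\by}} \deff \sum_{\substack{\bxpp\\(\bxp,\bxpp,\by)\notin\Dist}}(-1)^{\sum_{j=1}^t y'_j\cdot x''_j}\,\ket{u_{X\setminus Y,\by}}\ket{v_{X\cap Y}}\;,
\]
and bound $\sum_{\bxp,\by}\norm{\ket{w_{\bxp,\by}}}^2$. \emph{First} I would diagonalize using orthogonality. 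Expanding $\norm{\ket{w_{\bxp,\by}}}^2$ as a double sum over bad $\bxpp,\tilde\bxpp$, I note that $\ket{u_{X\setminus Y,\by}}$ depends on $\bxpp$ only through $T\deff X\setminus Y$, which is disjoint from $Y$ by construction; hypothesis (iii) then kills every cross term with $X\setminus Y\neq\tilde X\setminus Y$. Grouping the survivors by the common value $T$ and by the overlap $S=X\cap Y$ (so that $\ket{v_S}$ factors out of the $\bxpp$-sum) gives
\[
\norm{\ket{w_{\bxp,\by}}}^2 = \sum_{T}\norm{\ket{u_{T,\by}}}^2\,\Big\|\sum_{S}c_{T,S}\ket{v_S}\Big\|^2,\qquad c_{T,S}=\sum_{\substack{\bxpp:\,X\setminus Y=T,\,X\cap Y=S\\(\bxp,\bxpp,\by)\notin\Dist}}(-1)^{\sum_j y'_j\cdot x''_j}\;.
\]

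\emph{Second} I would reduce everything to counting. Apply (i) to get $\norm{\ket{u_{T,\by}}}^2\leq t^{t-|T|+1}$, and apply (ii) with the triangle inequality and $|c_{T,S}|\le \#\{\bxpp:X\setminus Y=T,X\cap Y=S,\text{ bad}\}$ to get $\|\sum_S c_{T,S}\ket{v_S}\|\le M_T$, where $M_T\deff\#\{\bxpp:X\setminus Y=T,\ (\bxp,\bxpp,\by)\notin\Dist\}$. Thus
\[
\sum_{\substack{\bxp,\by\\\by\in\Dist(n;t)}}\norm{\ket{w_{\bxp,\by}}}^2 \ \le\ \sum_{\substack{\bxp,\by\\\by\in\Dist(n;t)}}\sum_T t^{t-|T|+1}M_T^2 \ =\ \sum_{\substack{\bxp,\by\\\by\in\Dist(n;t)}}\sum_{\substack{\bxpp,\tilde\bxpp\ \text{both bad}\\ X\setminus Y=\tilde X\setminus Y}}t^{t-|X\setminus Y|+1}\;,
\]
at which point all reference to $u$ and $v$ has disappeared and only a weighted combinatorial count remains.

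\emph{Third} comes the combinatorial estimate, which is the heart of the matter. The idea is that ``bad'' forces a coincidence among the $2t$ suffixes $x''_1,\ldots,x''_t,y_{1,>},\ldots,y_{t,>}$, and each independent coincidence among $(n-i)$-bit strings is a $1$-in-$2^{n-i}$ event; since $n-i=\omega(\log\lambda)$ each such constraint contributes a factor $2^{-(n-i)}=\negl(\lambda)$. I would organize the sum by the deficiency $d=t-|T|=(t-|X|)+|X\cap Y|$, which records collisions among the $x_j$ plus overlaps of $X$ with $Y$: the weight $t^{d+1}$ grows with $d$, but a deficiency-$d$ configuration must realize at least $d$ coincidences and is suppressed by $2^{-d(n-i)}$, while the coupling $\tilde X\setminus Y=X\setminus Y$ (with the shared prefixes $\bxp$) forces $\tilde\bxpp$ into a surjection onto $T$ that pins down all but about $d$ of its coordinates. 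Summing the resulting geometric series in $2^{-(n-i)}$ against the polynomial-in-$t$ weights, with $t=\poly(\lambda)$, should bound the whole expression by $N^{2t}\cdot O(t^2)2^{-(n-i)}=N^{2t}\negl(\lambda)$.

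The hard part will be exactly this last estimate. Two features make it delicate. First, the squaring $M_T^2$ is unavoidable: because the $\ket{v_S}$ are only norm-bounded and need not be orthogonal, no cancellation from the phase $(-1)^{\sum_j y'_j\cdot x''_j}$ can be exploited, so the rarity of collisions must be strong enough to beat a blow-up that can be as large as $t^{t+1}$. Second, the prefix bookkeeping is subtle, since a coincidence among suffixes need not yield a full-string collision; one must therefore cleanly separate the ``bad but $|T|=t$'' configurations (bounded weight, and dominant) from the collision-heavy tail, and carry out the prefix-respecting surjection counts that control $M_T$. A final point needing care is the $\negl(\lambda)$ fraction of $\by$ whose suffixes $\byr$ already collide among themselves, for which \emph{every} $\bxpp$ is bad; by Lemma~\ref{lem:dist} these $\by$ are a negligible fraction of $\Dist(n;t)$ and are absorbed into the final bound.
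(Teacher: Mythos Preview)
Your first step (orthogonalizing in $T$ via hypothesis (iii)) matches the paper exactly, but the approach breaks down at your second step, and the gap is fatal rather than merely technical.

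By applying the triangle inequality to $\big\|\sum_S c_{T,S}\ket{v_S}\big\|$ and bounding each $|c_{T,S}|$ by a raw count, you discard the phase $(-1)^{\sum_j y'_j\cdot x''_j}$ entirely. Your stated justification---that the $\ket{v_S}$ are merely norm-bounded, so ``no cancellation from the phase can be exploited''---is incorrect. The cancellation the paper uses does not come from orthogonality of the $\ket{v_S}$; it comes from summing over $\by$. Concretely, after expanding the square to obtain a double sum over $\bxpp,\tilde\bxpp$ (equivalently, over surjections $\pi,\pi'$ onto $T\cup y_S$, $T\cup y_{S'}$), the paper moves the sum over those $y_j$ with $j\notin S\cup S'$ \emph{inside} the absolute value and evaluates
\[
\sum_{y_j}(-1)^{y'_j\cdot(\pi(j)''+\pi'(j)'')} \;=\; 2^n\,\delta_{\pi(j)''=\pi'(j)''}\;.
\]
This forces $\tilde\bxpp$ to agree with $\bxpp$ on all coordinates outside $S\cup S'$, i.e.\ on all but at most $|S\cup S'|\le 2(t-|T|)$ positions. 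That constraint is precisely what makes the remaining combinatorics close as a geometric series in $\poly(t)/2^{n-i}$.

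Your route, by contrast, leads to the bound $\sum_{\bxp,\by}\sum_T t^{t-|T|+1}M_T^2$, and this quantity is \emph{not} $N^{2t}\negl(\lambda)$ in general. Take $i=0$ (so $\bxp$ is vacuous and $n-i=n$), $n=\log^2\lambda$, and $t=\lambda$. For any $T$ with $|T|=t-1$ disjoint from $Y$ one has $M_T\ge (t-1)!$ (assign $x_1,\ldots,x_{t-1}$ to the elements of $T$ in any order, then pick $x_t\in T\cup Y$), and there are roughly $N^{t-1}/(t-1)!$ such $T$. The $k=t-1$ layer alone therefore contributes at least
\[
t^2\cdot \frac{N^{t-1}}{(t-1)!}\cdot\big((t-1)!\big)^2 \cdot |\Dist(n;t)| \;\gtrsim\; t^2(t-1)!\,N^{2t-1}\;,
\]
so after dividing by $N^{2t}$ you are left with $t^2(t-1)!/N\approx \lambda!/\lambda^{\log\lambda}$, which explodes. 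Your third-step heuristic---that a deficiency $d=t-|T|$ automatically buys a factor $2^{-d(n-i)}$---ignores that squaring $M_T$ injects a factorial-size combinatorial blow-up (from the many orderings realizing the same $T$) that a single $2^{-(n-i)}$ cannot absorb. The phase cancellation over $\by$ is what eliminates this blow-up, and it must be retained.
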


\begin{proof}
We first write 
\begin{align}
 \sum_{\substack{\bxp,\by\\ \by\in \Dist(n;t)}}& \Big\|  \sum_{\substack{\mathbf{x''} \\ (\bxp,\bxpp,\by)\notin\Dist}} (-1)^{\sum_{j=1}^t y'_j\cdot x''_j}  \ket{u_{\{x_1,\ldots,x_t\}\backslash\{y_1,\ldots,y_t\},\by}}\ket{v_{\{x_1,\ldots,x_t\}\cap\{y_1,\ldots,y_t\}}} \Big\|^2 \notag\\
  &= \sum_{\substack{\bxp,\by\\\by\in\Dist(n;t)}} \Big\|  \sum_{S\subseteq[t]}\sum_{\substack{T\subseteq \{0,1\}^{n}\\  T\cap\{y_{1},\ldots,y_{t}\}=\emptyset }} \sum_{\substack{\pi\in\surj(t,T\cup y_S)}} (-1)^{\sum_{j=1}^t y'_j \cdot \pi(j)'' } \ket{u_{T,\by}}\ket{v_{\by_S}} \Big\|^2 \;. \label{eq:fermi1-1}
\end{align}
Here, we introduced sets $S\subseteq [t]$ and $T=\{x_1,\ldots,x_t\}\backslash\{y_1,\ldots,y_t\}$ from which the ordered $t$-tuple $\bx$ is uniquely recovered by choosing a $\pi:[t]\to T\cup\{y_j:\ j\in S\}$, required to be a surjection to avoid double-counting. By $\pi(j)''$ we denote the last $(n-i)$ bits of $\pi(j)$. Note that, inside the squared norm, we should include an indicator of the event that $\pi(j)'=x_j'$ for all $j\in [t]$, where $\pi(j)'$ denotes the first $i$ bits of $\pi(j)$, which we omitted for clarity; and we should remember the requirement that $(\bxp,\bxpp,\by)\notin \Dist$, which here is a requirement on the tuple $(\by,T,\pi)$---this will express itself later on. 

Since the vectors $ \ket{u_{T}}$ are orthogonal for distinct $T$, we get that
\begin{align}
\eqref{eq:fermi1-1}&= \sum_{\substack{\bxp,\by\\\by\in\Dist(n;t)}}  \sum_{\substack{T\subseteq \{0,1\}^{n}\\  T\cap\{y_{1},\ldots,y_{t}\}=\emptyset }} \big\| \ket{u_{T,\by}} \big\|^2 \Big|\sum_{S\subseteq [t]} \sum_{\substack{\pi\in\surj(t,T\cup \by_S) }} (-1)^{\sum_{j=1}^t y'_j \cdot \pi(j)'' } \ket{v_S}\Big|^2\notag\\
&\leq \sum_{\substack{\bxp,\by\\\by\in\Dist(n;t)}} \sum_{\substack{T\subseteq \{0,1\}^{n}\\ |T|< t}} t^{t-|T|+1} \Big|\sum_{S\subseteq [t]} \sum_{\substack{\pi\in\surj(t,T\cup \by_S) }} (-1)^{\sum_{j=1}^t y'_j \cdot {\pi(j)}'' }\ket{v_S}\Big|^2\notag\\
&\leq  \sum_{\substack{T\subseteq \{0,1\}^{n}}} t^{t-|T|+1}  \sum_{S,S'\subseteq [t]}   \sum_{\bxp, \by_{S\cup S'}}  \sum_{\substack{\pi\in \surj(t,T\cup y_S)\\ \pi'\in \surj(t,T\cup y_{S'})}} \sum_{\by_{[t]\backslash (S\cup S')} } (-1)^{\sum_{j=1}^t y'_j \cdot (\pi(j)''+\pi'(j)'') } \bra{v_S} v_{S'}\rangle \notag\\
&\leq \sum_{\substack{T\subseteq \{0,1\}^{n}}} t^{t-|T|+1}  \sum_{S,S'\subseteq [t]}   \sum_{\by_{S\cup S'}}  \sum_{\substack{\pi\in \surj(t,T\cup y_S)\\ \pi'\in \surj(t,T\cup y_{S'})}}\Big| \sum_{\by_{[t]\backslash (S\cup S')} } (-1)^{\sum_{j=1}^t y'_j \cdot (\pi(j)''+\pi'(j)'') }\Big| \notag\\
&\leq 2^{it}  \sum_{\substack{T'\subseteq \{0,1\}^{n-i}\\ |T'|<t}}t^{t-|T'|+1} \sum_{S,S'\subseteq [t]}  \sum_{\by_{S\cup S'}} \sum_{\substack{\pi\in \surj(t,T'\cup y_{S,>})\\ \pi'\in \surj(t,T'\cup y_{S',>})}} \prod_{j\notin S\cup S'}  \Big|\sum_{y_j} (-1)^{y'_j \cdot (\pi(j) +\pi'(j))} \Big|\;.
    \label{eq:fermi1-2}
\end{align}
Here in the second line we used assumption (i) to bound $\|\ket{u_{T,\by}}\|^2$ by $t^{t-|T|+1}$. Further, we removed the restriction on $T\cap\{y_1,\ldots,y_t\}=\emptyset$. 
In the third line we drop the restriction on $\byp\in\Dist(n-i;t)$ and expand the square; we let $y_S$ denote the set $\{y_j:\ j\in S\}$. 
In the fourth line we insert an absolute value and use $|\bra{v_S}v_{S'}\rangle|\leq 1$. At this point we can also remove the summation over $\bxp$ as well as the (implicit) indicator that $\pi(j)'=x_j'$ for all $j\in [t]$, because for any $T$ and $\pi,\pi'$ there is (at most) one $\bxp$ that will satisfy the condition for both $\pi$ and $\pi'$. 
Finally for the last line we use that the summation inside absolute values only depends on the last $(n-i)$ bits of elements of $T$. The factor $2^{it}$ in front accounts for the first $i$ bits of each element of $T$ that were dropped (note $|T|\leq t$). Since $|T'|\leq |T|$ we have $t^{t-|T'|+1}\geq t^{t-|T|+1}$, and counting surjections to $T'\cup y_{S,>}$ instead of surjections to $T\cup y_S$ only contributes more terms (since a surjection to the latter induces a surjection to the former by restriction). Finally, we used the condition that $\bxpp\circ \by_>\in\Dist(n-i;2t)$, which was implicit in the notation up to here, to require $|T'|<t$. 

The sum $\sum_{y_j} (-1)^{y'_j \cdot (\pi(j) +\pi'(j))}$ is $0$ if $\pi(j)\neq \pi'(j)$, and it is $2^{n}$ otherwise. Letting $k=|T|$ and $s=|S\cup S'|$ and continuing, 
\begin{align*}
 \eqref{eq:fermi1-2}&\leq 2^{it} \sum_{k=0}^{t-1} \sum_{s=0}^{2(t-k)} {2^{n-i}\choose k} t^{t-k+1} 2^{(n-i)s} 2^{2s} \cdot t!\ (t+s)^{t-k} \cdot (t+s)^{s} \cdot  2^{n(t-s)}  \notag \\
&\leq  N^{2t}\cdot O(t^3) \Big(\frac{t\cdot 2^4  \cdot (t+s)^3}{2^{n-i}} \Big)^{t-k} \frac{t!}{k!}\\
&= N^{2t} \negl(\lambda)\;.
\end{align*}
Here for the first line we used that $s=|S\cup S'|\leq |S|+|S'|\leq 2(t-k)$, because $|T'\cup y_S|,|T'\cup y_{S'}|\leq t$ as otherwise there is no surjection from $[t]$ to them. Choosing $S,S'$ is the same as choosing $s$ elements (at most $2^{(n-i)s}$ choices) and then choosing which are in $S$ and which in $S'$ (at most $2^s\times 2^s$ choices). To count the number of options for $\pi$, we first choose $k$ unique preimages for the elements of $T'$, leading to at most $t!$ choices, then images for the remaining elements, leading to at most $(t+s)^{t-k}$ choices. To count the number of options for $\pi'$, $(t-s)$ elements are already determined by the condition $\pi'(j)=\pi(j)$ for $j\notin S\cup S'$, and there are at most $(t+s)^{s}$ choices for the others. The second line regroups terms and the last line uses $t!/k! \leq t^{t-k}$, and $t-k>0$ and $n-i=\omega(\log \lambda)$, $t=\poly(\lambda)$ to conclude. 
\end{proof}

\begin{claim}
\label{clm2}
   Let 
    \begin{align*}
       \ket{\psi_{all}} &=   \frac{1}{2^{tn}} \sum_{\mathbf{x', x'', y}} \ket{\phi_{\mathbf{x'},\mathbf{y}}}
            \otimes (-1)^{\sum_{j=1}^t y_j\cdot (x''_j 0^i)} \ket{e_{x'_1x''_1}\oplus e_{y_1}...\oplus e_{x'_tx''_t}\oplus e_{y_t}}\;,\\
       {\ket{\psi_{Dist}}} &=     \frac{1}{\sqrt{|\Dist|}} \sum_{(\bxp, \bxpp,\by)\in \Dist} \ket{\phi_{\mathbf{x'},\mathbf{y}}}\otimes (-1)^{\sum_{j=1}^t y_j\cdot (x''_j 0^i)} \ket{e_{x'_1x''_1}\oplus e_{y_1}...\oplus e_{x'_tx''_t}\oplus e_{y_t}}\;. 
\end{align*}
Then 
\[\ket{\psi_{all}} \,\overset{s}{\approx}\, \ket{\psi_{Dist}}\;.\]
\end{claim}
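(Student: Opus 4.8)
The plan is to reduce the claim to the single norm estimate $\|\Psi_{\overline D}\|^2=N^{2t}\,\negl(\lambda)$ on the ``bad'' part of the purification, and to obtain that estimate from Lemma~\ref{lem:helper} together with Claim~\ref{claim:5b}. Write $\ket{\Psi}=\sum_{(\bxp,\bxpp,\by)}(-1)^{\sum_{j}y_j\cdot(x''_j0^i)}\ket{\phi_{\bxp,\by}}\otimes\ket{e_{x'_1x''_1}\oplus\cdots\oplus e_{y_t}}$ for the unnormalized purifying vector, so that $\ket{\psi_{all}}=2^{-tn}\ket{\Psi}$, while $\ket{\psi_{Dist}}=|\Dist|^{-1/2}\ket{\Psi_D}$ with $\ket{\Psi_D}$ the same sum restricted to $\Dist$ and $\ket{\Psi_{\overline D}}=\ket{\Psi}-\ket{\Psi_D}$. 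Since the final $H$ makes the $\ket{\phi_{\bxp,\by}}$ orthonormal in $(\bxp,\by)$, two basis vectors of $\ket{\Psi}$ can overlap only when their $(\bxp,\by)$ agree and their $E$-strings coincide; a short argument shows that a $\Dist$ term, whose $E$-string is supported on exactly $2t$ indices, can never coincide with a non-$\Dist$ term of the same $(\bxp,\by)$, so $\ket{\Psi_D}\perp\ket{\Psi_{\overline D}}$ and $\|\Psi\|^2=\|\Psi_D\|^2+\|\Psi_{\overline D}\|^2$. As $\|\Psi\|^2=N^{2t}$ and $|\Dist|=N^{2t}(1-O(t^2/N))$ by~\eqref{eq:dist-size}, the target estimate forces $\|\Psi_D\|^2=N^{2t}(1-\negl)$ and makes the two normalizing constants agree up to a $(1\pm\negl)$ factor; expanding $\ket{\psi_{all}}-\ket{\psi_{Dist}}$ in the orthogonal pair $\ket{\Psi_D},\ket{\Psi_{\overline D}}$ then gives $\|\psi_{all}-\psi_{Dist}\|^2=\negl$, whence $\ket{\psi_{all}}\overset{s}{\approx}\ket{\psi_{Dist}}$ by Fact~\ref{fact1}.

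It remains to show $\|\Psi_{\overline D}\|^2=N^{2t}\negl(\lambda)$. Collapsing the cross terms by $\ket{\phi_{\bxp,\by}}$-orthogonality gives \[\|\Psi_{\overline D}\|^2=\sum_{\bxp,\by}\Big\|\sum_{\bxpp\,:\,(\bxp,\bxpp,\by)\notin\Dist}(-1)^{\sum_j y_j\cdot(x''_j0^i)}\ket{e_{x'_1x''_1}\oplus\cdots\oplus e_{y_t}}\Big\|^2,\] which I split according to whether $\by\in\Dist(n;t)$. When $\by\notin\Dist(n;t)$ we also have $\byr\notin\Dist(n-i;t)$, so \emph{every} $\bxpp$ already makes $(\bxp,\bxpp,\by)\notin\Dist$ and the inner sum ranges over all $\bxpp$. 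Writing $\ket{\psi'_{all}}=2^{-tn}\ket{\Psi'}$ with $\ket{\Psi'}$ the restriction of $\ket{\Psi}$ to $\byr\in\Dist(n-i;t)$ and $\bxpp\in\Dist(n-i;t)$, Claim~\ref{claim:5b} yields $\|\Psi-\Psi'\|^2=N^{2t}\negl$; the same orthogonality collapse shows the $\by\notin\Dist(n;t)$ part above to be a subsum of nonnegative terms of $\|\Psi-\Psi'\|^2$, hence $N^{2t}\negl$. The terms with $\by\in\Dist(n;t)$ are handled by Lemma~\ref{lem:helper}.

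Matching this remaining sum to Lemma~\ref{lem:helper} is the crux and the step I expect to be most delicate. Fixing $\by\in\Dist(n;t)$, I reparametrize the inner sum over $\bxpp$ by the collision type of the induced $x$-indices: the set $T=\{x_1,\dots,x_t\}\setminus\{y_1,\dots,y_t\}$ of distinct $x$-values avoiding $\by$, the subset $S\subseteq[t]$ of positions hitting a $\by$-value, and a surjection $\pi\colon[t]\to T\cup\{y_j:j\in S\}$ recording which position realizes which value, the fixed $\bxp$ forcing $\pi(j)'=x'_j$ and the phase becoming $(-1)^{\sum_j y'_j\cdot\pi(j)''}$. Splitting the $N$ purifying qubits into the $\by$-coordinates and their complement, I assign to each $(T,S)$ a vector $\ket{v_S}$ on the $\by$-coordinates (recording which $y_j$ are cancelled) and a vector $\ket{u_{T,\by}}$ on the complement (recording $T$), bringing the sum into the shape $\sum_{S,T}\big(\sum_\pi(-1)^{\sum_j y'_j\cdot\pi(j)''}\big)\ket{u_{T,\by}}\ket{v_S}$ that Lemma~\ref{lem:helper} expects. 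I then verify its hypotheses: orthogonality across distinct $T$ is immediate from disjoint supports, $\|v_S\|^2\le1$, and the norm bound $\|u_{T,\by}\|^2\le t^{t-|T|+1}$ follows by counting the surjections realizing a given $T$, in the spirit of Lemma~\ref{lem:rpi2}; finally the requirement $(\bxp,\bxpp,\by)\notin\Dist$ is exactly the condition $|T'|<t$ on the last-$(n-i)$-bit projection $T'$ used in the lemma. Lemma~\ref{lem:helper} then bounds this part by $N^{2t}\negl$, and adding the two parts gives $\|\Psi_{\overline D}\|^2=N^{2t}\negl$, completing the proof. The real difficulty lies in this last assignment: the $E$-string produced by a non-injective $\pi$ carries extra cancellations and so genuinely depends on $\pi$, not only on $(T,S)$, so making the product form $\ket{u_{T,\by}}\ket{v_S}$ (with the phase pulled out) legitimate requires carefully bundling those $\pi$-dependent cancellations into $\ket{u_{T,\by}}$ and absorbing the resulting multiplicities through the slack $t^{t-|T|+1}$ in hypothesis (i).
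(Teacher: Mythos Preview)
Your overall architecture is the paper's: reduce to a norm bound on the ``bad'' part, peel off the easy piece with Claim~\ref{claim:5b}, and feed the rest into Lemma~\ref{lem:helper}. The orthogonality $\ket{\Psi_D}\perp\ket{\Psi_{\overline D}}$ is correct (a Dist $E$-string has weight exactly $2t$, and any matching $E$-string with the same $(\bxp,\by)$ forces the matching tuple back into $\Dist$); the paper does not state this explicitly and instead routes through the intermediate $\ket{\tilde\psi'_{all}}$, but the effect is the same.

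The gap is exactly where you flag it, and your proposed fix does not close it. For $\by\in\Dist(n;t)$ but $\bx\notin\Dist(n;t)$ (e.g.\ $x'_j=x'_k$ and $x''_j=x''_k$), the $E$-string has genuine XOR-cancellations and is \emph{not} of the form $\ket{u_{T,\by}}\ket{v_S}$ for any vectors indexed only by $(T,S)$: two different $\bxpp$ with the same $T=\{x_j\}\setminus\{y_j\}$ can produce different $E$-strings. In Lemma~\ref{lem:helper} the vector $\ket{u_{T,\by}}$ is a \emph{single} vector per $(T,\by)$ and the surjection $\pi$ enters only through the phase; you cannot ``bundle the $\pi$-dependence into $\ket{u_{T,\by}}$'' without either making $\ket{u_{T,\by}}$ depend on $\pi$ (violating the lemma's interface) or losing the phase cancellation that the lemma exploits. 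The slack $t^{t-|T|+1}$ in hypothesis~(i) is not the right receptacle: it bounds a norm, not a number of distinct vectors.

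The paper's resolution is to use Claim~\ref{claim:5b} more aggressively: it kills not only $\byr\notin\Dist(n-i;t)$ but \emph{also} $\bxpp\notin\Dist(n-i;t)$, via the intermediate $\ket{\tilde\psi'_{all}}$. After that restriction one has $\bx\in\Dist(n;t)$, so the $E$-string factors cleanly as $\ket{u_T}\ket{v_S}$ with $\ket{u_T}=\ket{\bigoplus_{a\in T}e_a}$, $\ket{v_S}=\ket{\bigoplus_{a\in S}e_a}$, and hypotheses~(i)--(iii) are immediate. Finally one \emph{relaxes} the restriction $\bxpp\in\Dist(n-i;t)$ by adding back the dropped terms in the $\ket{u_T}\ket{v_S}$ form; these are orthogonal to the retained ones because $|\{x''_1,\dots,x''_t\}|$ can be read off from $T\cup S$, so the squared norm only increases, and Lemma~\ref{lem:helper} applies verbatim. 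In your framework this amounts to splitting the $\by\in\Dist(n;t)$ piece once more by $\bxpp\in\Dist(n-i;t)$ versus not: the ``not'' part is another subsum of $\|\Psi-\Psi'\|^2$ (hence negligible by Claim~\ref{claim:5b}), and the remaining part has clean $E$-strings and feeds directly into Lemma~\ref{lem:helper} after the orthogonal-relaxation trick.
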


\begin{proof}
We want to show that $TD\left( \rho_{\textit{all}}, \rho_{Dist} \right) = \negl(\lambda)$, where $\rho_{\textit{all}} = \ketbra{\psi_{\textit{all}}}{\psi_{\textit{all}}}$ and $\rho_{Dist} = \ketbra{\psi_{Dist}}{\psi_{Dist}}$.
According to Fact~\ref{fact1}, 
\begin{equation}\label{eq:clm2-1}
    TD\left( \rho_{\textit{all}}, \rho_{Dist} \right) = \sqrt{1 - |\bra{\psi_{\textit{all}}}\psi_{Dist}\rangle|^2}\;.
\end{equation}

Let 
\begin{align*}
 \ket{\tilde{\psi}_{all}} &=   \sum_{\mathbf{x', x'', y}} \ket{\phi_{\mathbf{x'},\mathbf{y}}}
            \otimes (-1)^{\sum_{j=1}^t y_j\cdot (x''_j 0^i)} \ket{e_{x'_1x''_1}\oplus e_{y_1}\oplus\cdots \oplus e_{x'_tx''_t}\oplus e_{y_t}}\;,\\
       {\ket{\tilde{\psi}_{Dist}}} &=    \sum_{(\bxp, \bxpp,\by) \in \Dist} \ket{\phi_{\mathbf{x'},\mathbf{y}}}\otimes (-1)^{\sum_{j=1}^t y_j\cdot (x''_j 0^i)} \ket{e_{x'_1x''_1}\oplus e_{y_1}\oplus \cdots \oplus e_{x'_tx''_t}\oplus e_{y_t}}\;,\\
			 {\ket{\tilde{\psi}_{all}'}} &=    \sum_{\substack{\bx,\by \\ \bxpp, \by_>\in \Dist(n-i;t)}} \ket{\phi_{\mathbf{x'},\mathbf{y}}}\otimes (-1)^{\sum_{j=1}^t y_j\cdot (x''_j 0^i)} \ket{e_{x'_1x''_1}\oplus e_{y_1}\oplus \cdots \oplus e_{x'_tx''_t}\oplus e_{y_t}}\;,
\end{align*}
be unnormalized states. Using Claim~\ref{claim:5b},
\begin{align*}
\big\|  \ket{\tilde{\psi}_{all}} -  {\ket{\tilde{\psi}_{all}'}}\big\|^2
&= N^{2t} \cdot \negl(\lambda)\;.
\end{align*}
It remains to bound 
\begin{align}
\big\|  \ket{\tilde{\psi}_{Dist}} -  {\ket{\tilde{\psi}_{all}'}}\big\|^2 &= \sum_{\substack{\bxp,\by\\ \by_{>}\in \Dist(n-i;t)}} \Big\|\sum_{\substack{\bxpp \\\bxpp\in \Dist(n-i;t)\\ (\bxp,\bxpp,\by)\notin \Dist}} (-1)^{\sum_{j=1}^t y_j\cdot (x''_j 0^i)} \ket{e_{x'_1x''_1}\oplus e_{y_1}\oplus \cdots \oplus e_{x'_tx''_t}\oplus e_{y_t}}\Big\|^2\notag\\
&=  \sum_{\substack{\bxp,\by\\ \by_{>}\in \Dist(n-i;t)}} \Big\|\sum_{\substack{\bxpp \\\bxpp\in \Dist(n-i;t)\\ (\bxp,\bxpp,\by)\notin \Dist}} (-1)^{\sum_{j=1}^t y_j\cdot (x''_j 0^i)} \ket{e_{x'_1x''_1}\oplus  \cdots \oplus e_{x'_tx''_t}}\Big\|^2
\label{eq:clm2-2}\;.
\end{align}
Here, the second line is justified because for fixed $\by$, the operation $\ket{e}\mapsto \ket{e\oplus y_1\oplus\cdots \oplus y_t}$ is unitary. We now apply Lemma~\ref{lem:helper}, with the choice $\ket{u_{T,\by}} = \ket{e_{T_1}\oplus\cdots\oplus e_{T_{|T|}}}$ (which is independent of $\by$) and similarly $\ket{v_S}=\ket{e_{S_1}\oplus\cdots\oplus e_{S_{|S|}}}$. The norm condition is clearly satisfied for these vectors; as well as the orthogonality condition. Moreover, if $T=\{x_1,\ldots,x_t\}\backslash \{y_1,\ldots,y_t\}$ and $S= \{x_1,\ldots,x_t\}\cap \{y_1,\ldots,y_t\}$ then for fixed $\by$ and using that $\bx\in\Dist(n;t)$ there is a unitary map from $\ket{e_{x_1}\oplus  \cdots \oplus e_{x_t}} \mapsto \ket{u_T}\ket{v_S}$. Thus
\begin{align}
\eqref{eq:clm2-2} &= \sum_{\substack{\bxp,\by\\ \by_{>}\in \Dist(n-i;t)}} \Big\|\sum_{\substack{\bxpp \\\bxpp\in \Dist(n-i;t)\\ (\bxp,\bxpp,\by)\notin \Dist}} (-1)^{\sum_{j=1}^t y_j\cdot (x''_j 0^i)} \ket{u_{\{x_1,\ldots,x_t\}\backslash\{y_1,\ldots,y_t\},\by}}\ket{v_{\{x_1,\ldots,x_t\}\cap \{y_1,\ldots,y_t\}}}\Big\|^2\notag \\
&\leq \sum_{\substack{\bxp,\by\\ \by\in \Dist(n;t)}} \Big\|\sum_{\substack{\bxpp \\ (\bxp,\bxpp,\by)\notin \Dist}} (-1)^{\sum_{j=1}^t y_j\cdot (x''_j 0^i)} \ket{u_{\{x_1,\ldots,x_t\}\backslash\{y_1,\ldots,y_t\},\by}}\ket{v_{\{x_1,\ldots,x_t\}\cap \{y_1,\ldots,y_t\}}}\Big\|^2\;,\notag
\end{align}
where for the second line we relaxed the condition $\by_{>}\in \Dist(n-i;t)$ to $\by\in \Dist(n;t)$, and dropped the condition $\bxpp\in\Dist(n-i;t)$. The latter is possible because vectors $\ket{u_{\{x_1,\ldots,x_t\}\backslash\{y_1,\ldots,y_t\},\by}}\ket{v_{\{x_1,\ldots,x_t\}\cap \{y_1,\ldots,y_t\}}}$ are orthogonal when $\bxpp\in\Dist(n-i;t)$ or when $\bxpp\notin\Dist(n-i;t)$ (because the set $\{x''_1,\ldots,x''_t\}$ can be read from them and its cardinality evaluated), so dropping the condition can only increase the total squared norm. Applying Lemma~\ref{lem:helper} concludes the claim. 
\end{proof}

Using Claims \ref{clm5}, \ref{clm2}, Fact \ref{fact2} and that partial trace is a 
trace-preserving-completely-positive map, we get:
\begin{equation}
\label{exp7} 
    \E_{f\in \{0,1\}^N}[\ketbra{\psi_{f}}{\psi_{f}}^{\otimes t}] \underset{\ref{clm5}}{=} \Tr_E\left[ \ket{\psi_{\textit{all}}}\right] \overset{s}{\underset{\ref{clm2}, \ref{fact2}}{\approx}} 
    \Tr_E\left[ \ket{\psi_{Dist}}\right]\;.
\end{equation}

\begin{claim}
\label{clm6} 
There is a set $\Good \subseteq (\{0,1\}^i)^t \times (\{0,1\}^n)^t$ such that
    \begin{equation*}
        \Tr_E\big[ \ket{\psi_{\Dist}}\big] \overset{s}{\approx}  Tr_E\Big[ 
        \frac{1}{\sqrt{|\Good|}}\sum_{(\mathbf{x'},\mathbf{y}) \in \Good} \ket{x'_1}\ket{y_1} \otimes ... \otimes \ket{x'_t}\ket{y_t} \bigotimes \ket{\{ x'_1 y_1, ... x'_ty_t\}} \Big]\;.
    \end{equation*}

\end{claim}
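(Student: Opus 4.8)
The plan is to trace out $E$ and show that the resulting density matrix on the $t$ copies of $\{0,1\}^{n+i}$ agrees, up to negligible trace distance, with the reduced state of the symmetrized purification on the right-hand side. As a first step I would peel off the final Hadamard layer: $H^{\otimes t(n+i)}$ acts only on the non-$E$ register, so it commutes with $\Tr_E$ and, being unitary, preserves trace distance (Fact~\ref{fact2}); it therefore suffices to compare the two reduced states with the $H$'s stripped from both. Writing the pre-$H$ state as $\frac{1}{\sqrt{|\Dist|}}\sum_{\bxp,\by}\ket{z_1,\ldots,z_t}_A\otimes\ket{g_{\bxp,\by}}_E$ with $z_j=x'_jy_j$ and $\ket{g_{\bxp,\by}}=\sum_{\bxpp:(\bxp,\bxpp,\by)\in\Dist}(-1)^{\sum_j y_j\cdot(x''_j0^i)}\ket{e_{x'_1x''_1}\oplus e_{y_1}\oplus\cdots\oplus e_{x'_tx''_t}\oplus e_{y_t}}$, I observe that $(\bxp,\by)\mapsto z=(z_1,\ldots,z_t)$ is a bijection and the $\ket{z}_A$ are orthonormal, so the reduced state is completely determined by the Gram matrix $G[z,\tilde z]=\langle g_{\tilde z}|g_{z}\rangle$.

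The core computation is to show $G[z,\tilde z]\approx M\cdot\langle\{\tilde z\}|\{z\}\rangle$, where $M=|\{\bxpp:(\bxp,\bxpp,\by)\in\Dist\}|$ is independent of $(\bxp,\by)$ whenever $\byr\in\Dist(n-i;t)$. Here I would classify the nonzero set-indicator overlaps according to how the two $2t$-element sets $\{x'_jx''_j,\,y_j\}$ and $\{\tilde x'_j\tilde x''_j,\,\tilde y_j\}$ are matched. In the \emph{diagonal} matchings each $y$ is matched to a $y$ and each shifted $x$ to a shifted $x$; these force $\tilde z$ to be a permutation of $z$ and $\tilde\bxpp$ to be the same permutation of $\bxpp$, so the two phase factors (call them $\sigma,\tilde\sigma$) satisfy $\sigma\tilde\sigma=+1$ and reinforce, contributing exactly $M\langle\{\tilde z\}|\{z\}\rangle$ once $z\in\Dist(n+i;t)$ so that $\ket{\{z\}}$ is normalized. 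Setting $\Good=\{(\bxp,\by):\byr\in\Dist(n-i;t)\text{ and }z\in\Dist(n+i;t)\}$ and checking, via Lemma~\ref{lem:dist} and~\eqref{eq:dist-size}, that $|\Good|=|\Dist|/M$ up to a $(1-\negl)$ factor, this diagonal part reproduces exactly the target reduced state $\frac{1}{|\Good|}\sum_{\{z\}=\{\tilde z\}}\ket{z}\bra{\tilde z}$.

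The main obstacle, and the precise place where re-using the key hurts, is the remaining \emph{cross} matchings, in which some $y_j$ of one tuple is identified with a shifted coordinate $x'_k x''_k$ of the other; these produce entries of $G$ with $\{z\}\neq\{\tilde z\}$ that have no analogue on the target side and must be shown to be killed by the phase sum over $\bxpp$. I would isolate this contribution as the squared norm of an explicit unnormalized error vector and bound it with a second application of Lemma~\ref{lem:helper}, this time taking $\ket{u_{T,\by}}$ and $\ket{v_S}$ to be the symmetrized states $\ket{\{\cdots\}}$: hypothesis~(i), $\|\ket{u_{T,\by}}\|^2\le t^{t-|T|+1}$, is then supplied by Lemma~\ref{lem:rpi2}, hypothesis~(ii) is immediate, and orthogonality~(iii) holds because the underlying multiset can be read off the state. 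The lemma returns a bound of $N^{2t}\negl(\lambda)$, which after division by $|\Dist|\ge N^{2t}(1-\negl)$ is negligible.

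To assemble the argument I would package the diagonal part as a genuine purification of the stated right-hand side and invoke Fact~\ref{fact1}: the overlap of this purification with $\ket{\psi_{Dist}}$ is $1-\negl$ because the cross part has negligible norm, so the two purifications are within negligible trace distance, and Fact~\ref{fact2} pushes this bound down to the reduced states; re-applying $H^{\otimes t(n+i)}$ recovers the claim. The two points I expect to be delicate are the bookkeeping of the phase cancellation that lets the cross terms be cast in the exact form required by Lemma~\ref{lem:helper}, and handling collisions among the $x'_j$ (and among the $y_j$), which break the ``unique permutation'' reasoning used for the diagonal part but occur only on a negligible fraction of tuples.
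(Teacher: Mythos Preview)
Your route is genuinely different from the paper's. The paper never computes the Gram matrix $\langle g_{\tilde z}|g_z\rangle$; instead it transforms the purifying register through a chain of isometries. The key step you are missing is that the phase sum over $\bxpp$ can be carried out \emph{inside} the symmetrized ket: after replacing $\ket{e_{x'_1x''_1}\oplus\cdots}$ by $\ket{\{x'_1x''_1,y_1,\ldots,x'_tx''_t,y_t\}}$ and dropping the $\Dist$ constraint (one genuine use of Lemma~\ref{lem:helper}, as in Subclaim~\ref{fermi1}), the free sum $\sum_{x''_j}(-1)^{y'_j\cdot x''_j}\ket{\ldots x''_j\ldots}$ simply produces $H\ket{y'_j}$ in that slot. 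After moving the outer Hadamard onto $E$ via Claim~\ref{clm3}, the purifying register collapses to $\ket{\{x'_1y'_1,y_1,\ldots,x'_ty'_t,y_t\}}$, and the paper's $\Good$ set (which includes a prefix/suffix condition on each $y_j$, not just $\byr\in\Dist$) is chosen precisely so that one can tell the $y_j$'s apart from the $x'_ky'_k$'s and hence build an isometry to $\ket{\{x'_1y_1,\ldots,x'_ty_t\}}$.

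The main gap in your plan is the proposed use of Lemma~\ref{lem:helper} for the ``cross'' matchings. That lemma bounds, for each fixed $(\bxp,\by)$ with $\by\in\Dist(n;t)$, the squared norm of a sum over those $\bxpp$ for which $(\bxp,\bxpp,\by)\notin\Dist$. Your cross terms are of a different nature: they are off-diagonal Gram entries indexed by \emph{pairs} $(z,\tilde z)$, and they arise even when both $(\bxp,\bxpp,\by)$ and $(\tilde\bxp,\tilde\bxpp,\tilde\by)$ lie in $\Dist$ (the collision is $y_{j}=\tilde x'_k\tilde x''_k$, across tuples, not within one). There is no single ``error vector'' whose squared norm captures this, because the cross contribution is not supported on any subspace of a single $\ket{g_z}$; it only appears once you pair $z$ with $\tilde z$. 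To bound a trace norm of an operator built from such pairwise interferences you would need a different tool than Lemma~\ref{lem:helper}. A smaller issue is the diagonal case: if the $y$-matching permutation $\sigma$ differs from the $x$-matching permutation $\tau$, then $\tilde z$ is \emph{not} a permutation of $z$, yet the phase sum over $\bxpp$ fails to vanish whenever $\sigma^{-1}\tau$ stabilises $\by'$ (e.g.\ when $\by'$ has repeats); you flag collisions as ``delicate'' but your diagonal/cross dichotomy does not account for these extra terms.
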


\begin{proof}
To prove this claim we are going to use three subclaims (\ref{fermi1}, \ref{fermi1b}, \ref{close to Good}) that will help us with 
the approximation.
First, let's analyze $\ket{\psi_{\Dist}}$. Recall that
\begin{equation}
\label{psi dist}
    \ket{\psi_{\Dist}} \,=\, \frac{1}{\sqrt{\abs{\Dist}}} \sum_{\mathbf{x', x'', y} \in \Dist} \ket{\phi_{\mathbf{x'},\mathbf{y}}}\otimes (-1)^{\sum_{j=1}^t y_j\cdot (x''_j 0^i)} \ket{e_{x'_1x''_1}\oplus e_{y_1}...\oplus e_{x'_tx''_t}\oplus e_{y_t}}\;.
\end{equation}
Since we sum only over elements such that $(x'_1x''_1, y_1, ...,x'_tx''_t, y_t) \in \Dist$, there exists an isometry between $\ket{e_{x'_1x''_1}\oplus e_{y_1}...\oplus e_{x'_tx''_t}\oplus e_{y_t}}$ and 
$\ket{\{x'_1x''_1, y_1, ...,x'_tx''_t, y_t\}}$. Here, letting $R_\pi\ket{x_1,...,x_{2t}} \deff \ket{x_{\pi(1)},...,x_{\pi(2t)}}$ 
for any $\pi\in\mathfrak{S}_{2t}$, we set 
\begin{equation}
\label{set ket}
    \ket{\{x'_1x''_1, y_1, ...,x'_tx''_t, y_t\}} \deff \frac{1}{\sqrt{(2t)!}} \sum_{\pi \in \mathfrak{S}_{2t}} R_\pi \ket{x'_1x''_1, y_1, ...,x'_tx''_t, y_t}\;.
\end{equation}
Note that if $(\bxp,\bxpp,\by)\notin \Dist$ then the state $\ket{\{x'_1x''_1, y_1, ...,x'_tx''_t, y_t\}}$ is not necessarily normalized, see Lemma~\ref{lem:rpi2}.

Here however the elements \emph{are} distinct, so 
using again Claim \ref{clm1} we can apply the isometry to the purifying register and still get a purification for the original state, i.e.
\begin{equation}
\label{exp4}
    \begin{aligned}
    \Tr_E\big[ \ket{\psi_{Dist}}\big] &\underset{\eqref{psi dist}}{=} \Tr_E\Big[ \frac{1}{\sqrt{\abs{\Dist}}} \sum_{\mathbf{x', x'', y} \in \Dist} \ket{\phi_{\mathbf{x'},\mathbf{y}}}\otimes (-1)^{\sum_{j=1}^t y_j\cdot (x''_j 0^i)} \ket{e_{x'_1x''_1}\oplus e_{y_1}...\oplus e_{x'_tx''_t}\oplus e_{y_t}}\Big]\\
        &\underset{\ref{clm1}}{=} Tr_E\Big[ \frac{1}{\sqrt{\abs{\Dist}}} \sum_{\mathbf{x', x'', y} \in\Dist} \ket{\phi_{\mathbf{x'},\mathbf{y}}}\otimes (-1)^{\sum_{j=1}^t y_j\cdot (x''_j 0^i)} \ket{\{x'_1x''_1, y_1, ...,x'_tx''_t, y_t\}}\Big]\;.
    \end{aligned}
\end{equation}
Denote 
\[\forall 1 \le j \le t\: \qquad y_j \,=\, \underbrace{y_j'}_{\in \{0,1\}^{n-i}}  \underbrace{y_j''}_{\in \{0,1\}^{i}}\;,\] 
and $\mathbf{y'} \deff$ $(y_1',...,y_t')$. This is convenient because $y_j\cdot (x''_j 0^i) = y_j' \cdot x''_j$. For the next claim, recall that for every $j$, $y_{j,>}$ denotes the last $(n-i)$ bits of $y_j$.

\begin{subclaim}\label{fermi1}
\begin{align*}
    \Tr_E\Big[ \frac{1}{\sqrt{|\Dist|}}&\sum_{(\mathbf{x', x'', y}) \in \Dist} \ket{\phi_{\mathbf{x'},\mathbf{y}}} \otimes (-1)^{\sum_{j=1}^t y_j'\cdot x''_j} \ket{\{x'_1x''_1, y_1, \ldots, x'_tx''_t, y_t\}}\Big] \\
&    \overset{s}{\approx}
    \Tr_E\Big[ \frac{1}{2^{tn}}\sum_{\substack{\bxp, \bxpp, \by\\\by\in\Dist(n;t)}} \ket{\phi_{\mathbf{x'},\mathbf{y}}} \otimes (-1)^{\sum_{j=1}^t y'_j\cdot x''_j } \ket{\{x'_1x''_1, y_1, \ldots, x'_tx''_t, y_t\}}\Big]\;.
\end{align*}
\end{subclaim}

\begin{proof}
The proof follows from Lemma~\ref{lem:helper}. In slightly more detail, first note that due to~\eqref{eq:dist-size} the normalization by $1/\sqrt{|\Dist|}$ for the first state can be replaced by $1/2^{tn}$ at the cost of a negligible change in trace distance. Next, evaluating the squared norm of the difference, 
\begin{align*}
  \frac{1}{N^{2t}}\Big\|  &\sum_{\substack{(\mathbf{x', x'', y}) \notin \Dist\\\by\in\Dist(n;t)}}  \ket{\phi_{\mathbf{x'},\mathbf{y}}} \otimes (-1)^{\sum_{j=1}^t y'_j\cdot x''_j} \ket{\{x'_1x''_1, y_1, ...,x'_tx''_t, y_t\}} \Big\|^2 \notag \\
  &= \frac{1}{N^{2t}} \sum_{\substack{\bxp,\by\\\by\in\Dist(n;t)}} \Big\| \sum_{\substack{\bxpp\\ (\mathbf{x', x'', y}) \notin \Dist}}  (-1)^{\sum_{j=1}^t y'_j\cdot x''_j} \ket{\{x'_1x''_1, y_1, ...,x'_tx''_t, y_t\}}\Big\|^2 \;. 
\end{align*}
This expression is bounded by Lemma~\ref{lem:helper}, where we let $\ket{u_{T,\by}} = \ket{T\cup\{y_1,\ldots,y_t\}}$, which satisfies the desired conditions by Lemma~\ref{lem:rpi2}.
\end{proof}

Overall, starting from~\eqref{exp4} and applying Claim~\ref{fermi1}.
\begin{align} \label{dist close to evrything}
    \Tr_E\left[ \ket{\psi_{Dist}}\right] &\underset{\eqref{exp4}}{=}\Tr_E\Big[ \frac{1}{\sqrt{\abs{\Dist}}} \sum_{\mathbf{x', x'', y} \in \Dist} \ket{\phi_{\mathbf{x'},\mathbf{y}}}\otimes (-1)^{\sum_{j=1}^t y_j'\cdot x''_j } \ket{\{x'_1x''_1, y_1, ...,x'_tx''_t, y_t\}}\Big] \nonumber\\
        &\overset{s}{\underset{\ref{fermi1}}{\approx}} \Tr_E\Big[ \frac{1}{2^{tn}}\sum_{\substack{ \mathbf{x', x'', y} \\\by\in\Dist(n;t)}} \ket{\phi_{\mathbf{x'},\mathbf{y}}} \otimes (-1)^{\sum_{j=1}^t y'_j\cdot x''_j } \ket{\{x'_1x''_1, y_1, ...,x'_tx''_t, y_t\}}\Big]\;.
\end{align}

Let's focus on the state in the partial trace. Recall that according to \eqref{phi} $\ket{\phi_{\mathbf{x'},\mathbf{y}}}$ is  
$H(\ket{x'_1}\ket{y_1} \otimes \cdots \otimes \ket{x'_t}\ket{y_t})$. Note that the sum is also over $\mathbf{x''}$ but $\mathbf{x''}$ doesn't  appear in $\ket{\phi_{\mathbf{x'},\mathbf{y}}}$, 
so we can break the sum in the partial trace in Eq.~\eqref{dist close to evrything} into two:
\begin{equation*}
        \frac{1}{2^{tn}}\sum_{\substack{\bxp, \bxpp, \by\\\by\in\Dist(n;t)}} \ket{\phi_{\mathbf{x'},\mathbf{y}}} \otimes (-1)^{\sum_{j=1}^t y'_j\cdot x''_j} \ket{\{x'_1x''_1, y_1, \ldots,x'_tx''_t, y_t\}}
\end{equation*}
\begin{equation*}
        =
\end{equation*}
\begin{equation}
\label{exp5}
        \frac{1}{\sqrt{2^{n + i}}^t}\sum_{\substack{\mathbf{x'}, \mathbf{y} \\ \by\in\Dist(n;t)}} \ket{\phi_{\mathbf{x'},\mathbf{y}}}\otimes
        \underbrace{\frac{1}{\sqrt{2^{(n-i)}}^t}\sum_{\mathbf{x''}} (-1)^{\sum_{j=1}^t y'_j\cdot x''_j} \ket{\{x'_1x''_1, y_1, \ldots, x'_tx''_t, y_t\}}}_{\ket{\phi''_{\mathbf{x',y}}}}\;.
\end{equation}
Clarifying $\ket{\phi''_{\mathbf{x',y}}}$:
\begin{align} 
    \ket{\phi''_{\mathbf{x',y}}} &= \frac{1}{\sqrt{2^{(n-i)}}^t} \sum_{\mathbf{x''}} (-1)^{\sum_{j=1}^t y'_j\cdot x''_j} \ket{\{x'_1x''_1, y_1, \ldots,x'_tx''_t, y_t\}} \nonumber \\
    &\underset{\eqref{set ket}}{=} \frac{1}{\sqrt{2^{(n-i)}}^t} \frac{1}{\sqrt{(2t)!}} \sum_{\mathbf{x''}} (-1)^{\sum_{j=1}^t y'_j\cdot x''_j} \sum_{\pi\in S_{2t}} R_\pi \ket{x'_1x''_1, y_1,\ldots,x'_tx''_t, y_t} \nonumber \\
    &= \frac{1}{\sqrt{(2t)!}} \sum_{\pi\in S_{2t}} R_\pi \frac{1}{\sqrt{2^{(n-i)}}^t}\sum_{\mathbf{x''}} (-1)^{\sum_{j=1}^t y'_j\cdot x''_j} \ket{x'_1x''_1, y_1,\ldots,x'_tx''_t, y_t} \nonumber \\
    &= \frac{1}{\sqrt{(2t)!}} \sum_{\pi\in S_{2t}} R_\pi \frac{1}{\sqrt{2^{(n-i)}}}\sum_{x''_1} (-1)^{y_1'\cdot x''_1 } \ket{x'_1x''_1, y_1} \otimes \cdots
    \otimes \frac{1}{\sqrt{2^{(n-i)}}}\sum_{x''_t} (-1)^{y'_t\cdot x''_t} \ket{x'_tx''_t, y_t} \nonumber \\
    &= \frac{1}{\sqrt{(2t)!}} \sum_{\pi\in \mathfrak{S}_{2t}} R_\pi \Big( \underset{j=1}{\overset{t}{\bigotimes}} \ket{x'_j} \frac{1}{\sqrt{2^{(n-i)}}}\sum_{x''_j} (-1)^{y'_j\cdot x''_j} \ket{x''_j} \ket{y_j} \Big) \nonumber\\
 &= \frac{1}{\sqrt{(2t)!}} \sum_{\pi\in \mathfrak{S}_{2t}} R_\pi \Big( \underset{j=1}{\overset{t}{\bigotimes}} \ket{x'_j} H\ket{y_j'} \ket{y_j} \Big)\;,\label{phi'}
\end{align}
where the last equality uses
\begin{equation*}\label{H} 
    \frac{1}{\sqrt{2^{n-i}}}\sum_{x''_j} (-1)^{y'_j\cdot x''_j} \ket{x''_j} = H\ket{y'_j}\;.
\end{equation*}
With this simplified formulation, it is not hard to remove the restriction  $\by\in\Dist(n;t)$, as is shown in the following claim. 

\begin{subclaim}\label{fermi1b}
\begin{align*} 
      \frac{1}{\sqrt{2^{n+i}}^t}&\sum_{\substack{\bxp, \by\\\by\in\Dist(n;t)}} \ket{\phi_{\mathbf{x'},\mathbf{y}}} \otimes\ket{\phi''_{\mathbf{x',y}}} \,\overset{s}{\approx}\,
    \frac{1}{\sqrt{2^{n+i}}^t}\sum_{\substack{\bxp, \by}} \ket{\phi_{\mathbf{x'},\mathbf{y}}} \otimes \ket{\phi''_{\mathbf{x',y}}}\;.
\end{align*}
\end{subclaim}

\begin{proof}
We first evaluate the norm of $\ket{\phi''_{\bx',\by}}$, using an argument similar to the proof of Lemma~\ref{lem:rpi2}. We have
\begin{align}
\big\| \ket{\phi''_{\bx',\by}} \big\|^2
&= \frac{1}{(2t)!} \sum_{\pi,\pi'\in \mathfrak{S}_{2t}}\Big( \bigotimes_{j=1}^t  \bra{x'_j} \bra{y_j'}H \bra{y_j} \Big) R_{\pi'}R_\pi \Big( \underset{j=1}{\overset{t}{\bigotimes}} \ket{x'_j} H\ket{y_j'} \ket{y_j} \Big)\notag\\
&= \sum_{\pi\in \mathfrak{S}_{2t}}\Big( \bigotimes_{j=1}^t  \bra{x'_j} \bra{y_j'}H \bra{y_j} \Big) R_\pi \Big( \underset{j=1}{\overset{t}{\bigotimes}} \ket{x'_j} H\ket{y_j'} \ket{y_j} \Big)\;.\label{eq:fermi1b-1}
\end{align}
Now we observe the following. First, for any $j,j'\in\{1,\ldots,t\}$, $|\bra{x_j}\bra{y'_j}H \cdot \ket{y_{j'}}|\leq 2^{-(n-i)/2}$. Second, for $j,j'\in\{1,\ldots,t\}$ we have that $|\bra{x_j}\bra{y'_j}H \cdot \ket{x_{j'}}H\ket{y'_{j'}}|$ is zero unless $y'_j=y'_{j'}$, in which case it is at most $1$; similarly, $|\bra{y_j} \cdot \ket{y_{j'}}|$ is zero unless $y_j=y_{j'}$, in which case it is at most $1$. Summarizing rather crudely, if we let $\ket{z_j} = \ket{x_j}H\ket{y'_j}$ if $j\in\{1,\ldots,t\}$ and $\ket{z_j}=\ket{y_j}$ if $j\in\{t+1,\ldots,2t\}$ then 
\begin{equation}\label{eq:fermi1b-2}
\big|\bra{z_j} z_{j'}\rangle\big|\,\leq\, 2^{-\frac{n-i}{2}}\delta_{y'_{j\mod t}\neq y'_{j'\mod t}} + \delta_{y'_{j\mod t}=y'_{j'\mod t}}\;.
\end{equation}
For $j\in\{1,\ldots,2t\}$ let $\ell_j$ denote the number of $j'\in\{1,\ldots,2t\}$ such that ${y'_{j\mod t}=y'_{j'\mod t}}$. Then, bounding the sum over all permutations in~\eqref{eq:fermi1b-1} by the sum over all functions, 
\begin{align*}
\eqref{eq:fermi1b-1} &\leq \sum_{f:[2t]\to[2t]} \prod_{j=1}^{2t} \big|\bra{z_{f(j)}} z_{j}\rangle\big|\\ 
&\leq \prod_{j=1}^{2t} \Big( \sum_{j'=1}^{2t} \big|\bra{z_{j'} }z_{j}\rangle\big|\Big)\\
&\leq \prod_{j=1}^{2t} \Big( \ell_{j} + \frac{t}{2^{\frac{n-i}{2}}}\Big)\\
&\leq \Big(\prod_{j=1}^t \ell_j\Big)^2\Big(1 + \frac{O(t^2)}{2^{\frac{n-i}{2}}}\Big)\;.
\end{align*}
Here, the third line is by~\eqref{eq:fermi1b-2} and the definition of $\ell_j$. Finally, note that $\prod_{j=1}^t \ell_j \leq 2^{t-k}$ where $k=|\{y'_1,\ldots,y'_t\}|\leq |\{y_1,\ldots,y_t\}|<t$, which gives us the bound
\begin{equation}\label{eq:fermi1b-3}
 \big\| \ket{\phi''_{\bx',\by}} \big\|^2\,\leq\, 2^{2(t-k)}\Big(1 + \frac{O(t^2)}{2^{\frac{n-i}{2}}}\Big)\;.
\end{equation}
We now conclude the proof:
\begin{align*}
 \Big\| \frac{1}{\sqrt{2^{n+i}}^t}\sum_{\substack{\bxp, \by\\\by\notin\Dist(n;t)}} \ket{\phi_{\mathbf{x'},\mathbf{y}}} \otimes\ket{\phi''_{\mathbf{x',y}}}\Big\|^2
&=  \frac{1}{2^{(n+i)t}}\sum_{\substack{\bxp, \by\\\by\notin\Dist(n;t)}} \big\|\ket{\phi''_{\mathbf{x',y}}}\big\|^2\\
&\leq \frac{1}{2^{(n+i)t}} 2^{2it} \sum_{k=1}^{t-1} \sum_{\byp:\ |\{y'_1,\ldots,y'_t\}|=k} 2^{2(t-k)}\Big(1 + \frac{O(t^2)}{2^{\frac{n-i}{2}}}\Big)\\
&= \frac{1}{2^{(n-i)t}} \sum_{k=1}^t {2^{n-i}\choose k} 2^{2(t-k)}\Big(1 + \frac{O(t^2)}{2^{\frac{n-i}{2}}}\Big)\\
&\leq  \sum_{k=1}^t  \Big(\frac{4}{2^{n-i}}\Big)^{t-k}\Big(1 + \frac{O(t^2)}{2^{\frac{n-i}{2}}}\Big)\\
&= \negl(\lambda)\;,
\end{align*}
where for the second line we used~\eqref{eq:fermi1b-3} and for the last line we used $n-i=\omega(\log\lambda)$ and $t=\poly(\lambda)$. 
\end{proof}

Given the structure of the state in \eqref{exp5}, and considering Claim~\ref{fermi1b}, we can use 
Claim \ref{clm3} to move the $H = H^T$ gate that acts on $\ket{x_j'}, \ket{y_j}$ with transpose to their 
corresponding parts in the purifying register:
\begin{align} \label{exp12} 
    \eqref{exp5} &= \frac{1}{\sqrt{2^{n+i}}^t}\sum_{\substack{\bxp, \by\\\by\in\Dist(n;t)}} \ket{\phi_{\mathbf{x'},\mathbf{y}}} \otimes\ket{\phi''_{\mathbf{x',y}}} \nonumber\\
    &\underset{\ref{fermi1b}}{\overset{s}{\approx}}\frac{1}{\sqrt{2^{n + i}}^t}\sum_{\substack{\bxp,\by }} \ket{\phi_{\mathbf{x'},\mathbf{y}}} \otimes \ket{\phi''_{\mathbf{x',y}}} \nonumber\\
    &\underset{\eqref{phi}}{=} \frac{1}{\sqrt{2^{n + i}}^t}\sum_{\substack{\bxp,\by }} H(\ket{x'_1}\ket{y_1} \otimes \cdots \otimes \ket{x'_t}\ket{y_t}) \bigotimes \frac{1}{\sqrt{(2t)!}} \sum_{\pi\in S_{2t}} R_\pi \Big( \underset{j=1}{\overset{t}{\bigotimes}} \ket{x'_j} H\ket{y_j'} \ket{y_j} \Big) \nonumber\\
    &\underset{\ref{clm3}}{=} \frac{1}{\sqrt{2^{n + i}}^t}\sum_{\substack{\bxp,\by }} \ket{x'_1}\ket{y_1} \otimes \cdots \otimes \ket{x'_t}\ket{y_t} \bigotimes \frac{1}{\sqrt{(2t)!}} \sum_{\pi\in S_{2t}} R_\pi \Big( \underset{j=1}{\overset{t}{\bigotimes}} H\ket{x'_j}H\ket{y_j'} H\ket{y_j}  \Big) \nonumber\\
    &= \frac{1}{\sqrt{2^{n + i}}^t}\sum_{\substack{\bxp,\by }} \ket{x'_1}\ket{y_1} \otimes\cdots \otimes \ket{x'_t}\ket{y_t} \bigotimes \frac{1}{\sqrt{(2t)!}} \sum_{\pi\in S_{2t}} R_\pi H \Big( \underset{j=1}{\overset{t}{\bigotimes}} \ket{x'_j} \ket{y_j'}  \ket{y_j}  \Big) \nonumber \\
    &= \frac{1}{\sqrt{2^{n + i}}^t} \sum_{\substack{\bxp,\by }} \ket{x'_1}\ket{y_1} \otimes \cdots \otimes \ket{x'_t}\ket{y_t} \bigotimes  H \frac{1}{\sqrt{(2t)!}} \sum_{\pi\in S_{2t}}  R_\pi \Big( \underset{j=1}{\overset{t}{\bigotimes}} \ket{x'_j} \ket{y_j'}  \ket{y_j}  \Big) \nonumber \\
    &\underset{\eqref{set ket}}{=} \frac{1}{\sqrt{2^{n + i}}^t} \sum_{\substack{\bxp,\by }} \ket{x'_1}\ket{y_1} \otimes \cdots \otimes \ket{x'_t}\ket{y_t} \bigotimes  H \ket{\{x'_1y'_1, y_1, \ldots, x'_ty'_t, y_t\}}\;.
\end{align}

Now we are going to define the set $\Good$.
For future purposes we need to define the set $\Good$ such that there is a bijection 
between $\{x'_1y'_1, y_1, \ldots, x'_ty'_t, y_t\}$ and $\{x_1'y_1, \ldots, x_t'y_t\}$ for every $(\mathbf{x', y}) \in \Good$.
Note that the first set has $2t$ elements and the second has $t$ elements of longer size.
We want to see when there exists a bijection between one set to the other.
It is easy to verify that there exists a one to one map from $\{x_1'y_1, \ldots, x_t'y_t\}$ to $\{x'_1y'_1, y_1, \ldots, x'_ty'_t, y_t\}$.
We want to see when there exists also a one to one map from $\{x'_1y'_1, y_1, \ldots, x'_ty'_t, y_t\}$ to 
$\{x_1'y_1, \ldots, x_t'y_t\}$. This could be done by matching the correct couples of elements i.e $x_j'y_j'$ to 
the correct couple element $y_j$ and concatenate $x_j'y_j$. 
For this consider the following definition:
\begin{equation}\label{Good def}
    (\mathbf{x', y}) \in \Good \iff 
    \mathbf{y'} \in \Dist(n-i;t) \wedge     \mathbf{y} \in G\;,
\end{equation}
where 
$\Dist(n-i;t)$ is as defined in Definition~\ref{def:dist-unique}  and $G = \prod_j G_j$ where $y_j\in G_j$ where $G_j$ is the set of all $y_j$ such that if we write
$y_j = y_j^0 \cdots y_j^i y_j^{i+1}\cdots y_j^{n-1}$ then $ y_j^{2i}\cdots y_j^{n-1} \neq y_j^0\cdots y_j^{n-2i-1}$, 
i.e\ the $(n-2i)$-suffix of $y_j$ does not equal its $(n-2i)$-prefix. 

\phantomsection \label{isometry}
With this definition of the set $\Good$ it is not hard to verify that for every $\mathbf{x',y} \in \Good$ there exists a bijection between $\{x'_1y'_1, y_1, \ldots, x'_ty'_t, y_t\}$ and $ \{x_1'y_1, \ldots, x_t'y_t\}$. In particular, the condition $\by\in G$ is used to guarantee that elements of the form $y_j$ and elements of the form $x'_jy'_j$ cannot be confused with each other. Note also that $(\bxp,\by)\in \Good$ implies that $\{x'_1y'_1,y_1,\ldots,x'_ty'_t,y_t\}\in \Dist$. 

\begin{subclaim}\label{close to Good}
    \begin{align*}
        \Tr_E\Big[ & \frac{1}{\sqrt{2^{n + i}}^t} \sum_{\substack{\bxp,\by }} \ket{x'_1}\ket{y_1} \otimes \cdots \otimes \ket{x'_t}\ket{y_t} \bigotimes \ket{\{ x'_1y'_1, y_1, \ldots, x'_ty'_t, y_t\}} \Big]\\
&        \overset{s}{\approx}
        \Tr_E\Big[ \frac{1}{\sqrt{|\Good|}} \sum_{(\mathbf{x',y}) \in \Good} \ket{x'_1}\ket{y_1} \otimes \cdots \otimes \ket{x'_t}\ket{y_t} \bigotimes \ket{\{ x'_1y'_1, y_1, \ldots, x'_ty'_t, y_t\}} \Big]\;.
    \end{align*}      
\end{subclaim}

\begin{proof}
We first evaluate $|\Good|$. 
By Lemma~\ref{lem:dist}, $|\Dist(n-i;t)|\geq 2^{(n-i)t}(1-O(t^2)/2^{n-i})$. Regarding $G$, note that for any $1\leq j\leq t$ and $y\in\{0,1\}^n$,
\[\Pr[y \notin G_j] = \Pr[y^{2i}\cdots y^{n-1} = y^0\cdots y^{n-2i-1}] = \frac{1}{2^{n-i}}\;,\] 
which is negligible due to $i = n - \omega(\log(\lambda))$. Thus
\[\Pr[\mathbf{y} \in G] = \Pr[y_1,...y_t \in G'] = \Pi_{j=1}^t \Pr[y_j\in G_j] = (1- \negl(\lambda))^t = 1 - \negl(\lambda)\;.\] 
Overall, $|\Good|\geq N^{2t}(1-\negl(\lambda)$. 

Using the fact that the vectors $\ket{x'_1}\ket{y_1} \otimes \cdots \otimes \ket{x'_t}\ket{y_t} \bigotimes \ket{\{ x'_1y'_1, y_1, \ldots, x'_ty'_t, y_t\}}$ are pairwise orthogonal it suffices to bound
\begin{align*}
      \Big\| \frac{1}{2^{\frac{(n+i)t}{2}}} & \sum_{\substack{ (\mathbf{x',y}) \notin \Good}} \ket{x'_1}\ket{y_1} \otimes \cdots \otimes \ket{x'_t}\ket{y_t} \bigotimes \ket{\{ x'_1y'_1, y_1, \ldots, x'_ty'_t, y_t\}} \Big]\Big\|^2\\
      &= \frac{1}{2^{(n+i)t}} \sum_{\substack{ (\mathbf{x',y}) \notin \Good}} \big\|\ket{\{ x'_1y'_1, y_1, \ldots, x'_ty'_t, y_t\}} \big\|^2\\
      &\leq \frac{1}{2^{(n-i)t}}\sum_{k=1}^t {t\choose k}\sum_{k'=0}^k 2^{(n-i)k'}2^{i(k-k')}t^{t-k}(2t-(k+k')+1)!\\
      &\leq t^2 \frac{t^{2(t-k)} t^{2t-(k+k')+1}}{2^{(n-i)(t-k')}}\\
      &\leq \negl(\lambda)\;.
\end{align*}
Here, the third line is justified as follows. The integer $k$ denotes $k=|\{y'_1,\ldots,y'_t\}|$, and $k'$ denotes, among $y'_1,\ldots,y'_t$, the number of elements that are not in $G$. By definition of $\Good$, necessarily $k<t$ or $k'>0$. Furthermore, given $k,k'$ we have that $|\{ x'_1y'_1, y_1, \ldots, x'_ty'_t, y_t\}| = (k-k')+2k'=k+k'$, giving the bound $(2t-(k+k')+1)!$ on $\ket{\{ x'_1y'_1, y_1, \ldots, x'_ty'_t, y_t\}}$ by Lemma~\ref{lem:rpi2}. Finally given $k,k'$ there are at most ${t\choose k}2^{(n-i)k'}$ choices for the distinct $y'_j$ such that $y'_j\notin G_j$, and at most $2^{ik'}$ for the $y'_j\in G_j$; and given these, at most $t^{t-k}$ for the remaining $y'_j$. The last line uses $k'\leq k$  and $t-k'>0$, because $k'=t$ implies $k=t$ as well, which is not possible, and the assumption $n-i=\omega(\log\lambda)$.
\end{proof}

Finally,
\begin{align*}
    \Tr_E\left[ \ket{\psi_{Dist}}\right] &\underset{\eqref{dist close to evrything}}{\overset{s}{\approx}} \Tr_E\Big[ \frac{1}{2^{tn}}\sum_{\substack{ \mathbf{x', x'', y} \\\byp\in\Dist(n-i;t)}} \ket{\phi_{\mathbf{x'},\mathbf{y}}} \otimes (-1)^{\sum_{j=1}^t y'_j\cdot x''_j } \ket{\{x'_1x''_1, y_1, ...,x'_tx''_t, y_t\}}\Big]\\
    &\underset{\eqref{exp5}}{=} \Tr_E\Big[ \frac{1}{\sqrt{2^{n + i}}^t}\sum_{\substack{\mathbf{x'}, \mathbf{y} \\ \byp\in\Dist(n-i;t)}} \ket{\phi_{\mathbf{x'},\mathbf{y}}} \bigotimes \ket{\phi''_{\mathbf{x',y}}} \Big] \\
    &\underset{\eqref{exp12}}{=} \Tr_E\Big[ \frac{1}{\sqrt{2^{n + i}}^t} \sum_{\mathbf{x'}, \mathbf{y}} \ket{x'_1}\ket{y_1} \otimes ... \otimes \ket{x'_t}\ket{y_t} \bigotimes  H \ket{\{x'_1y'_1, y_1, ... x'_ty'_t, y_t\}} \Big]\\
    &= \Tr_E\Big[ \frac{1}{\sqrt{2^{n + i}}^t} \sum_{\mathbf{x'}, \mathbf{y}} \ket{x'_1}\ket{y_1} \otimes ... \otimes \ket{x'_t}\ket{y_t} \bigotimes  \ket{\{x'_1y'_1, y_1, ... x'_ty'_t, y_t\}} \Big]\\
    &\overset{s}{\underset{\ref{close to Good}}{\approx}} \Tr_E\Big[ \frac{1}{\sqrt{\Good}}\sum_{(\mathbf{x',y}) \in \Good} \ket{x'_1}\ket{y_1} \otimes ... \otimes \ket{x'_t}\ket{y_t} \bigotimes \ket{\{ x'_1y'_1, y_1, ... x'_ty'_t, y_t\}} \Big]\\
    &= \Tr_E\Big[ \frac{1}{\sqrt{\Good}}\sum_{(\mathbf{x',y}) \in \Good} \ket{x'_1}\ket{y_1} \otimes ... \otimes \ket{x'_t}\ket{y_t} \bigotimes \ket{\{x_1'y_1, ... x_t'y_t\}} \Big]\;,
\end{align*}
where the fourth and the last lines is because we can use claim \ref{clm1} and apply the unitary $H$ and the isometry we know exists (from \ref{isometry}) to the purifying register.
This ends the proof of Claim \ref{clm6}.
\end{proof}

Recall that our goal is to prove $\E_{f\in \{0,1\}^N}[\ketbra{\psi_{f}}{\psi_{f}}^{\otimes t}] \overset{c}{\approx} \E_{\mu\leftarrow Haar}[\ketbra{\mu}{\mu}^{\otimes t}]$.
To this end, consider the following claim.

\begin{claim}
\label{clm4}
    For every unitary $U \in U(\mathcal{H})$,
    \begin{equation*}
        \E_{f\in \{0,1\}^N}[(U\ketbra{\psi_{f}}{\psi_{f}}U^\dagger)^{\otimes t}] \overset{c}{\approx}
        \E_{f\in \{0,1\}^N}[\ketbra{\psi_{f}}{\psi_{f}}^{\otimes t}]\;.
    \end{equation*}
\end{claim}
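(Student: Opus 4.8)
The plan is to reduce the claim to the single reduced state already isolated in Claim~\ref{clm6}, and then to show that this state is, up to negligible trace distance, the maximally mixed state on the symmetric subspace, which is \emph{exactly} invariant under $U^{\otimes t}$. I would first note the elementary identity $(U\ketbra{\psi_f}{\psi_f}U^\dagger)^{\otimes t} = U^{\otimes t}\ketbra{\psi_f}{\psi_f}^{\otimes t}(U^\dagger)^{\otimes t}$, so that by linearity
\[
\E_{f}[(U\ketbra{\psi_f}{\psi_f}U^\dagger)^{\otimes t}] = U^{\otimes t}\,\E_{f}[\ketbra{\psi_f}{\psi_f}^{\otimes t}]\,(U^\dagger)^{\otimes t}\;.
\]
Since conjugation by the unitary $U^{\otimes t}$ is a trace-preserving completely positive map, Fact~\ref{fact2} lets me propagate any trace-distance approximation freely through it. Combined with Claim~\ref{clm6}, which gives $\E_{f}[\ketbra{\psi_f}{\psi_f}^{\otimes t}] \overset{s}{\approx} \sigma$ for $\sigma \deff \Tr_E[\ketbra{\psi_{\Good}}{\psi_{\Good}}]$ and $\ket{\psi_{\Good}} = \tfrac{1}{\sqrt{|\Good|}}\sum_{(\bxp,\by)\in\Good}\ket{z_1,\ldots,z_t}_A\otimes\ket{\{z_1,\ldots,z_t\}}_E$ with $z_j = x'_j y_j\in\{0,1\}^{n+i}$, it suffices to control the action of $U^{\otimes t}$ on $\sigma$.

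Next I would compute $\sigma$ explicitly. Every $(\bxp,\by)\in\Good$ satisfies $\by'\in\Dist(n-i;t)$, hence the $z_j$ are pairwise distinct, so the symmetrised purifying vectors obey $\langle\{w_1,\ldots,w_t\}|\{z_1,\ldots,z_t\}\rangle = 1$ when the two tuples agree as sets and $0$ otherwise. Moreover $\Good$ constrains $\by$ only through the symmetric conditions $\by'\in\Dist(n-i;t)$ and $\by\in G$, so it is closed under simultaneously permuting the pairs $(x'_j,y_j)$; thus $\Good$ is a disjoint union of full permutation orbits, each of size $t!$. Taking the partial trace over $E$ therefore collapses the double sum to
\[
\sigma = \frac{1}{|\Good|}\sum_{\mathbf{z}\in\Good}\sum_{\tau\in\mathfrak{S}_t}\ket{z_1,\ldots,z_t}\!\bra{z_{\tau(1)},\ldots,z_{\tau(t)}} = \frac{t!}{|\Good|}\,\Pi_W = \frac{1}{|\mathcal{Z}|}\,\Pi_W\;,
\]
where $\Pi_W = \sum_{\alpha}\ketbra{\{\alpha\}}{\{\alpha\}}$ is the projector onto the span $W$ of the orthonormal symmetrised states indexed by the $|\mathcal{Z}| = |\Good|/t!$ distinct sets $\alpha$ arising from $\Good$. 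Conceptually $\sigma$ is the reduced state of a maximally entangled state supported on $W\subseteq\mathrm{Sym}^t$, which is exactly why $U^{\otimes t}$ acting on $A$ could instead be transferred onto the $E$-register via the transpose trick of Claim~\ref{clm3}; I prefer to argue through the projector below.

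Finally I would compare $\sigma$ to the normalised symmetric projector $\tfrac{1}{d_{\mathrm{sym}}}\Pi_{\mathrm{sym}}$ on $(\{0,1\}^{n+i})^{\otimes t}$, where $d_{\mathrm{sym}} = \binom{2^{n+i}+t-1}{t}$. Since $W\subseteq\mathrm{Sym}^t$ and, by the counting in the proof of Subclaim~\ref{close to Good} together with Lemma~\ref{lem:dist}, $|\mathcal{Z}| = |\Good|/t! \geq d_{\mathrm{sym}}(1-\negl(\lambda))$, the projector $\Pi_{\mathrm{sym}} - \Pi_W$ has rank a negligible fraction of $d_{\mathrm{sym}}$, and a direct trace-norm computation gives $TD\big(\sigma, \tfrac{1}{d_{\mathrm{sym}}}\Pi_{\mathrm{sym}}\big) = 1 - |\mathcal{Z}|/d_{\mathrm{sym}} = \negl(\lambda)$. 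Because $U^{\otimes t}$ commutes with every $R_\pi$ it preserves $\mathrm{Sym}^t$ exactly, so $U^{\otimes t}\tfrac{1}{d_{\mathrm{sym}}}\Pi_{\mathrm{sym}}(U^\dagger)^{\otimes t} = \tfrac{1}{d_{\mathrm{sym}}}\Pi_{\mathrm{sym}}$. Chaining these statistical approximations through Fact~\ref{fact2},
\begin{align*}
\E_{f}[(U\ketbra{\psi_f}{\psi_f}U^\dagger)^{\otimes t}] &= U^{\otimes t}\E_f[\ketbra{\psi_f}{\psi_f}^{\otimes t}](U^\dagger)^{\otimes t}\\
&\overset{s}{\approx} U^{\otimes t}\tfrac{1}{d_{\mathrm{sym}}}\Pi_{\mathrm{sym}}(U^\dagger)^{\otimes t}\\
&= \tfrac{1}{d_{\mathrm{sym}}}\Pi_{\mathrm{sym}} \;\overset{s}{\approx}\; \E_f[\ketbra{\psi_f}{\psi_f}^{\otimes t}]\;,
\end{align*}
with a bound that is uniform in $U$. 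This proves the claim, and in fact upgrades $\overset{c}{\approx}$ to the stronger $\overset{s}{\approx}$.

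The main obstacle is not internal to Claim~\ref{clm4} but upstream, in Claim~\ref{clm6}; granting it, the only delicate points here are verifying that $\Good$ is invariant under simultaneous permutation (so that the partial trace yields an honest scaled projector rather than an unstructured operator) and the dimension count ensuring $\dim W$ lies within a negligible fraction of $\dim \mathrm{Sym}^t$, both of which are routine. I would finally remark that, since the argument en route identifies $\E_f[\ketbra{\psi_f}{\psi_f}^{\otimes t}]$ with $\tfrac{1}{d_{\mathrm{sym}}}\Pi_{\mathrm{sym}} = \E_{\mu\leftarrow \mathrm{Haar}}[\ketbra{\mu}{\mu}^{\otimes t}]$, Claim~\ref{clm4} is essentially a corollary; its value is the uniform-in-$U$ formulation, which is precisely what allows it to be combined with a Haar twirl over $U$ to recover the main goal.
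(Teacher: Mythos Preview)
Your proof is correct and achieves the claim, but it proceeds along a different line from the paper. The paper never explicitly computes the reduced state $\sigma$. Instead, after reaching the purification over $\Good$ via Claim~\ref{clm6}, it introduces a further subclaim (``fermi'') that lets it extend the sum back to \emph{all} $(\bxp,\by)$, at which point the first register and the purifying register are linked by a full maximally-entangled structure. This makes the transpose trick of Claim~\ref{clm3} applicable: $U^{\otimes t}$ is moved across to the purifying register as $(U^T)^{\otimes t}$, commuted past the permutation operators $R_\pi$, and then erased using Claim~\ref{clm1}. The chain is then run in reverse to land back on $\E_f[\ketbra{\psi_f}{\psi_f}^{\otimes t}]$.

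Your route is more direct: you trace out the purification once and for all, identify $\sigma$ as $\tfrac{t!}{|\Good|}\Pi_W$ for a large subspace $W\subseteq\mathrm{Sym}^t$, and compare it to the normalized symmetric projector $\tfrac{1}{d_{\mathrm{sym}}}\Pi_{\mathrm{sym}}$, which is exactly $U^{\otimes t}$-invariant. This avoids the ``expand to all $(\bxp,\by)$ / transpose / contract back'' maneuver and the auxiliary Subclaim~\ref{fermi}, and as you note it simultaneously delivers the Haar comparison that the paper derives separately after Claim~\ref{clm4}. The paper's approach, on the other hand, stays entirely at the level of purifications and never needs the explicit projector identity or the dimension count $|\mathcal Z|\geq d_{\mathrm{sym}}(1-\negl(\lambda))$. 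One small remark: what you cite as ``Claim~\ref{clm6}'' for $\E_f[\ketbra{\psi_f}{\psi_f}^{\otimes t}]\overset{s}{\approx}\sigma$ is really Claim~\ref{clm6} combined with~\eqref{exp7}; both the paper and you rely on that combination.
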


\begin{proof}
    First let us prove the following subclaim.

\begin{subclaim}\label{fermi}
    \begin{align*}
        Tr_E \Big[ 
        \frac{1}{\sqrt{2^{n+i}}^t} &\sum_{\mathbf{x'},\mathbf{y}} \ket{x'_1y_1, ... ,x'_ty_t} \bigotimes  \ket{ \{x'_1 y_1, ... x'_ty_t\}} \Big] \\
&        \overset{s}{\approx}
         Tr_E \Big[ 
        \frac{1}{\sqrt{|\Good|}}\sum_{(\mathbf{x'},\mathbf{y}) \in \Good} \ket{x'_1y_1, ... ,x'_ty_t} \bigotimes \ket{ \{x'_1 y_1, ... x'_ty_t\}} \Big]\;.
    \end{align*}
\end{subclaim}

\begin{proof}
The proof of this claim is almost identical, and in fact somewhat easier, to the proof of Claim~\ref{close to Good}, and so we omit it. 
\end{proof}

\begin{align*}
    \E_{f\in \{0,1\}^N}[(U\ketbra{\psi_{f}}{\psi_{f}}U^\dagger)^{\otimes t}] \; &\overset{s}{\underset{\eqref{exp7}}{\approx}} \;
    \Tr_E\left[ U^{\otimes t} \ket{\psi_{Dist}}\right]\\
    &\overset{s}{\underset{\ref{clm6}}{\approx}} \Tr_E\Big[ 
        \frac{1}{\sqrt{|\Good|}} \sum_{(\mathbf{x'},\mathbf{y}) \in \Good} U^{\otimes t} (\ket{x'_1}\ket{y_1} \otimes ... \otimes \ket{x'_t}\ket{y_t}) \bigotimes \ket{\{ x'_1 y_1, ... x'_ty_t\}} \Big]\\
    &\underset{\eqref{set ket}}{=} \Tr_E \Big[ \frac{1}{\sqrt{|\Good|}}
    \sum_{(\mathbf{x'},\mathbf{y}) \in \Good} U^{\otimes t}(\ket{x'_1y_1, ... ,x'_ty_t}) \bigotimes \frac{1}{\sqrt{t!}}\sum_{\pi\in S_t} R_\pi\ket{ x'_1 y_1, ... x'_ty_t} \Big]\\ 
    &\overset{s}{\underset{\ref{fermi}}{\approx}} \Tr_E \Big[ 
    \frac{1}{\sqrt{2^{n+i}}^t}\sum_{\mathbf{x'},\mathbf{y}} U^{\otimes t}(\ket{x'_1y_1, ... ,x'_ty_t}) \bigotimes \frac{1}{\sqrt{t!}}\sum_{\pi\in S_t} R_\pi\ket{ x'_1 y_1, ... x'_ty_t} \Big]\\
    &\underset{\ref{clm3}}{=} \Tr_E \Big[ 
    \frac{1}{\sqrt{2^{n+i}}^t}\sum_{\mathbf{x'},\mathbf{y}} \ket{x'_1y_1, ... ,x'_ty_t} \bigotimes \frac{1}{\sqrt{t!}}\sum_{\pi\in S_t} R_\pi U^{T^{\otimes t}} (\ket{ x'_1 y_1, ... x'_ty_t}) \Big]\\
    &= \Tr_E \Big[ 
    \frac{1}{\sqrt{2^{n+i}}^t}\sum_{\mathbf{x'},\mathbf{y}} \ket{x'_1y_1, ... ,x'_ty_t} \bigotimes U^{T^{\otimes t}} \frac{1}{\sqrt{t!}}\sum_{\pi\in S_t} R_\pi \ket{ x'_1 y_1, ... x'_ty_t} \Big] \\
    &\underset{\eqref{clm1}}{=} \Tr_E \Big[ 
    \frac{1}{\sqrt{2^{n+i}}^t}\sum_{\mathbf{x'},\mathbf{y}} \ket{x'_1y_1, ... ,x'_ty_t} \bigotimes \frac{1}{\sqrt{t!}}\sum_{\pi\in S_t} R_\pi \ket{ x'_1 y_1, ... x'_ty_t} \Big] \\
    &\underset{\eqref{set ket}}{=} \Tr_E \Big[ 
    \frac{1}{\sqrt{2^{n+i}}^t}\sum_{\mathbf{x'},\mathbf{y}} \ket{x'_1y_1, ... ,x'_ty_t} \bigotimes  \ket{ \{x'_1 y_1, ... x'_ty_t\}} \Big]\\
    &\overset{s}{\underset{\ref{fermi}}{\approx}} \Tr_E \Big[ 
    \frac{1}{\sqrt{|\Good|}}\sum_{(\mathbf{x'},\mathbf{y}) \in \Good} \ket{x'_1y_1, ... ,x'_ty_t} \bigotimes \ket{ \{x'_1 y_1, ... x'_ty_t\}} \Big]\\
    &\underset{\ref{clm6}}{\overset{s}{\approx}} \Tr_E\left[ \ket{\psi_{Dist}}\right]\\
    &\underset{\eqref{exp7}}{\overset{s}{\approx}} \E_{f\in \{0,1\}^N}[\ketbra{\psi_{f}}{\psi_{f}}^{\otimes t}] \;. 
\end{align*}
This ends the proof of claim \ref{clm4}.
\end{proof}

If this is true for every unitary it is also true for an expectation over some unitaries.
Therefore
\begin{equation*}
    \E_{f\in \{0,1\}^N}[\ketbra{\psi_{f}}{\psi_{f}}^{\otimes t}] \underset{\ref{clm4}}{\overset{c}{\approx}}
    \E_{U \leftarrow Haar} [\E_{f\in \{0,1\}^N}[(U\ketbra{\psi_{f}}{\psi_{f}}U^\dagger)^{\otimes t}] ] = \E_{\ket{\psi}\leftarrow Haar} [\ketbra{\psi}{\psi}^{\otimes t}]\;,
\end{equation*}
as needed.

\section{Conclusions}

We showed in this work that one can expand the output length of some PRSs using the purification technique, and that one can additionally trade-off the depth (efficiency) of the circuit that generates the extended PRS with the length of the key used to generate it. Our constructions provide a method for combining \textit{any} PRS with general-phase PRSs to produce a longer PRS. To the best of our knowledge, nothing like this has been done before. 
In addition we proposed a condition we conjecture under which these expansions can be done in a black-box way. This leads the way to expand quantum pseudo-randomness in a black-box manner, with the same key length and the weakest possible assumption (i.e the only implicit assumption is, of course, the existence of a PRS in the first place (that satisfies condition \ref{condition}), which is potentially a weaker assumption than the existence of one-way functions and  is arguably the weakest meaningful assumption in use in cryptography today). Combining  with the observation by Brakerski and Shalit~\cite{BS24} (already mentioned in the introduction) this is best we can hope to achieve in terms of cryptographic assumptions.

\section{Open Questions}

Our main goal was to understand what manipulations one can do on the length of the output state of a PRS, while retaining the PRS properties. In particular what can one do to expand the length of the output. Our work made some progress in understanding how to do this in a black-box way, but we arguably only scratched the surface of what could be done. The following are a few questions that can help us better understand the relationship between the different size PRSs and the manipulations to move between them.

\begin{enumerate}
    
    \item Is condition \ref{condition} truly sufficient for our constructions? Is this condition necessary? Are there any more PRS's that satisfy this condition? (Or is it a complicated way to describe the family of PRS's we already know this construction is expanding?)

    \item We do not know how to show security of the subset PRS construction presented in \cite{jeronimo2023subset} using the purification technique. Could the type of constructions we study be shown to expand subset PRS using a different proof approach, or are they inherently unsuitable for subset PRS?

    \item Our work demonstrates how different PRSs can be combined to a longer one, paving the way for exploring additional methods of combining PRSs. This raises the question: what other techniques exist for combining PRSs while preserving their essential properties?

\end{enumerate}




\bibliographystyle{alpha}
\bibliography{references}

\appendix

\renewcommand\theequation{\Alph{section}\arabic{equation}} 
\counterwithin*{equation}{section} 
\renewcommand\thefigure{\Alph{section}\arabic{figure}} 
\counterwithin*{figure}{section} 
\renewcommand\thetable{\Alph{section}\arabic{table}} 
\counterwithin*{table}{section} 












\end{document}